\newtheorem{algorithm}{Algorithm}
\newcommand{\ceil}[1]{\left\lceil #1 \right\rceil}
\newcommand{\be}{\begin{eqnarray}}
\newcommand{\ee}{\end{eqnarray}}
\newcommand{\ba}{\begin{array}}
\newcommand{\ea}{\end{array}}
\newcommand{\bs}{\begin{align}\begin{split}\nonumber}
\newcommand{\bsnumber}{\begin{align}\begin{split}}
\newcommand{\es}{\end{split}\end{align}}
\renewcommand{\(}{\left(}
\renewcommand{\)}{\right)}
\renewcommand{\[}{\left[}
\renewcommand{\]}{\right]}
\renewcommand{\hat}{\widehat}
\newcommand{\cc}{\mathbf{c}}
\newcommand{\En}{\mathbb{E}_n}
\newcommand{\Ep}{E}
\newcommand{\sign}{ {\rm sign}}
\def\mmu#1{{\mu(#1)}}
\def\RR{{\Bbb{R}}}
\def\supp{{\rm support}}
\def\cc{{\bar c}}
\begin{document}

\title*{High Dimensional Sparse Econometric Models: An Introduction}
 \titlerunning{HDSM in Econometrics}
\author{Alexandre Belloni and Victor Chernozhukov}
\authorrunning{Belloni and Chernozhukov}
\institute{Alexandre Belloni \at Duke University, Fuqua School of Business, 100 Fuqua Drive, Durham, NC, \email{abn5@duke.edu}
\and Victor Chernozhukov \at Massachusetts Institute of Technology, Department of Economics, 50 Memorial Drive, Cambridge, MA \email{vchern@mit.edu}}

\maketitle

\abstract{In this chapter we discuss conceptually high dimensional sparse econometric models as well as estimation of these models using
$\ell_1$-penalization and post-$\ell_1$-penalization methods. Focusing on linear and nonparametric regression frameworks, we discuss various econometric examples,
present basic theoretical results, and illustrate the concepts and methods with Monte Carlo simulations and an empirical application. In the application, we examine and confirm the empirical validity of the Solow-Swan model for international economic growth.}

\section{The High Dimensional Sparse Econometric Model}

We consider linear, high dimensional sparse (HDS) regression models in econometrics.  The HDS regression model has a large number of regressors $p$, possibly much larger than the sample size $n$, but only a relatively small number $s < n$ of  these regressors are important for capturing accurately the main features of the regression function.  The latter assumption makes it possible to estimate these models effectively by searching for approximately the right set of the regressors, using $\ell_1$-based penalization methods. In this chapter we will review the basic
theoretical properties of these procedures, established in the works of \cite{BickelRitovTsybakov2009,CandesTao2007,MY2007,Lounici2008,BC-PostLASSO,Koltchinskii2009,vdGeer,ZhaoYu2006,ZhangHuang2006},  among others (see \cite{RigolletTsybakov2010,BC-PostLASSO} for a detailed literature review). In this section, we review the modeling foundations as well as motivating examples for these procedures, with emphasis on applications in econometrics.

Let us first consider an exact or parametric HDS regression model, namely,
\begin{equation}\label{Def:P}
y_i = x_i'\beta_0 + \varepsilon_i, \ \ \epsilon_i \sim N(0, \sigma^2), \ \ \beta_0 \in \Bbb{R}^p, \  i=1,\ldots, n,\end{equation} where $y_i$'s are observations of the response variable, $x_i$'s are observations of $p$-dimensional fixed regressors, and $\epsilon_i$'s are i.i.d. normal disturbances, where  possibly $p \geqslant n$. The key assumption of the exact model is  that the true parameter value $\beta_0$ is sparse, having only $s<n$ non-zero components with support denoted by
 \begin{equation} \label{Eq:T}
T=\text{ support} (\beta_0) \subset \{1,\ldots,p\}.
\end{equation}
Next let us consider an approximate or nonparametric HDS model. To this end, let us introduce the regression model
\begin{equation}\label{Def:NP}
y_i = f(z_i) + \varepsilon_i, \ \   \epsilon_i \sim N(0, \sigma^2),  \ \  i=1,\ldots, n,
\end{equation}
where $y_i$ is the outcome, $z_i$ is a vector of elementary fixed regressors,  $z \mapsto f(z)$ is the true, possibly non-linear, regression function, and $\varepsilon_i$'s are i.i.d. normal disturbances.  We can convert this model into an approximate HDS model by writing
\begin{equation}\label{Def:NP}
y_i = x_i' \beta_0 + r_i + \varepsilon_i,  \ \ i =1,\ldots, n,
\end{equation}
where $x_i=P(z_i)$ is a $p$-dimensional regressor formed from the elementary regressors by applying, for example, polynomial or spline transformations, $\beta$ is a conformable parameter vector, whose ``true" value $\beta_0$ has only $s<n$ non-zero components with support denoted as in (\ref{Eq:T}), and $r_i:=r(z_i)= f(z_i) - x_i'\beta_0$ is the approximation error.  We shall define the true value $\beta_0$ more precisely in the next section. For now, it is important to note only that we assume there exists a value $\beta_0$ having only $s$ non-zero components that sets the approximation error $r_i$ to be small.

Before considering estimation, a natural question is whether exact or approximate HDS models make sense in econometric applications.
In order to answer this question it is helpful to consider the following example, in which we abstract from estimation completely and only
ask whether it is possible to accurately describe some structural econometric function $f(z)$ using a low-dimensional approximation
of the form $P(z)'\beta_0$. In particular, we are interested in improving upon the conventional low-dimensional approximations.

\textbf{Example 1: Sparse Models for Earning Regressions}.   In this example we consider a model for the conditional
expectation of log-wage $y_i$ given education $z_i$, measured in years of schooling.
Since measured education takes on a finite number of years,  we can expand  the conditional expectation of wage $y_i$ given education $z_i$:\begin{equation}\label{general} E[y_i|z_i] = \sum_{j=1}^p
\beta_{0j} P_j(z_i),
 \end{equation}
 using some dictionary of approximating functions $P_1(z_i),\ldots, P_p(z_i)$, such as polynomial or spline transformations in $z_i$ and/or indicator variables for levels of
 $z_i$. In fact, since we can consider an overcomplete dictionary, the representation of the function may not be unique, but this is not important for our purposes.

A conventional sparse approximation employed in econometrics is, for example,
\begin{equation}\label{conventional}
f(z_i):=E[y_i|z_i] = \tilde \beta_1P_1(z_i) + \cdots + \tilde \beta_s P_s(z_i)  + \tilde r_i,
 \end{equation}
where the $P_j$'s are low-order polynomials or splines, with typically $s=4$ or $5$ terms, but there is no guarantee that the approximation error $\tilde r_i$ in this case is small, or that these particular polynomials form the best possible $s$-dimensional approximation. Indeed, we might expect  the function
$E[y_i|z_i]$ to exhibit oscillatory behavior near the schooling levels associated with advanced degrees, such as MBA or MD.  Low-degree polynomials may not be able to capture this behavior very well, resulting in large approximation errors $\tilde r_i$'s.

Therefore, the question is: With the same number of parameters, can we  find a much
better approximation?  In other words, can we find   some higher-order terms in the expansion (\ref{general}) which will provide a higher-quality approximation? More specifically, can we construct an approximation
\begin{equation}\label{eq:sparse}
f(z_i):=E[y_i|z_i] =  \beta_{k_1}P_{k_1}(z_i) + \cdots + \beta_{k_s} P_{k_s}(z_i) + r_i,
 \end{equation}
for some regressor indices $k_1,\ldots, k_s$ selected from $\{1,\ldots,p\}$, that is accurate and much better than (\ref{conventional}), in the sense of having a much smaller approximation error $r_i$?

Obviously the answer to the latter question depends on how complex the behavior of the true regression function (\ref{general}) is.
If the behavior is not complex, then low-dimensional approximation should be accurate. Moreover, it is clear that
the second approximation (\ref{eq:sparse}) is weakly better than the first (\ref{conventional}), and can be much better if there are some important high-order terms
in (\ref{general}) that are completely missed by the first approximation.  Indeed, in the context of the earning function example, such important high-order terms could capture abrupt positive changes in earning associated with advanced degrees such as MBA or MD.  Thus, the answer to the question depends strongly on the empirical context.

Consider for example the earnings of prime age white males in the 2000 U.S. Census (see e.g., Angrist, Chernozhukov and Fernandez-Val \cite{ACF2006}). Treating this data as the population data, we can then compute $f(z_i)=E[y_i|z_i]$ without error. Figure \ref{Fig:Wage} plots this function. (Of course, such a strategy is not generally available in the empirical work, since the population data are generally not available.)  We then construct two sparse approximations and also plot them in Figure \ref{Fig:Wage}: the first is the conventional one, of the form (\ref{conventional}), with $P_1, \ldots, P_s$ representing an  $(s-1)$-degree polynomial, and the second is an approximation of the form (\ref{eq:sparse}), with $P_{k_1}$, \ldots, $P_{k_s}$ consisting of a constant, a linear term, and two linear splines terms with knots located at 16 and 19 years of schooling (in the case of $s=5$ a third knot is located at 17).  In fact, we find the latter approximation automatically using $\ell_1$-penalization methods, although in this special case we could construct such an approximation just by eye-balling Figure \ref{Fig:Wage} and noting that most of the function is described by a linear function, with a few abrupt changes that can be captured by linear spline terms that induce large changes in slope near 17 and 19 years of schooling.  Note that an exhaustive search for a low-dimensional approximation requires looking at a very large set of models. We avoided this exhaustive search by using $\ell_1$-penalized least squares (LASSO), which penalizes the size of the model through the sum of absolute values of regression coefficients. Table \ref{Table:WageSparse} quantifies the performance of the different sparse approximations. (Of course, a simple strategy of eye-balling also works in this simple illustrative setting, but clearly does not apply to more general examples
with several conditioning variables $z_i$, for example, when we want to condition on education, experience, and age.) \qed

\begin{center}
\begin{table}
\begin{center}
\begin{tabular}{rcccccc}
\hline \hline
 Sparse
Approximation & & $s$ & & {   $L_2$ error } & & {  $L_\infty$ error } \\ \hline
Conventional  & & 4 & & 0.1212 & &    0.2969          \\
Conventional  & & 5 & & 0.1210 &  &   0.2896          \\
LASSO  & & 4 & & 0.0865 &  &  0.1443             \\
LASSO  & & 5 & & 0.0752 &  & 0.1154             \\
Post-LASSO  & & 4 & & 0.0586 &  &  0.1334             \\
Post-LASSO  & & 5 & & 0.0397 &  &   0.0788             \\

 \hline
\hline
\\
\end{tabular}
\end{center}\caption{Errors of Conventional and the LASSO-based Sparse Approximations of the Earning Function. The LASSO estimator minimizes the least squares criterion plus the $\ell_1$-norm of the coefficients scaled by a penalty parameter $\lambda$. As shown later, it turns out to have only a few non-zero components. The Post-LASSO estimator minimizes the least squares criterion over the non-zero components selected by the LASSO estimator.}\label{Table:WageSparse}
\end{table}
\end{center}

\begin{figure}[!h]
\centering
\includegraphics[width=0.7\textwidth]{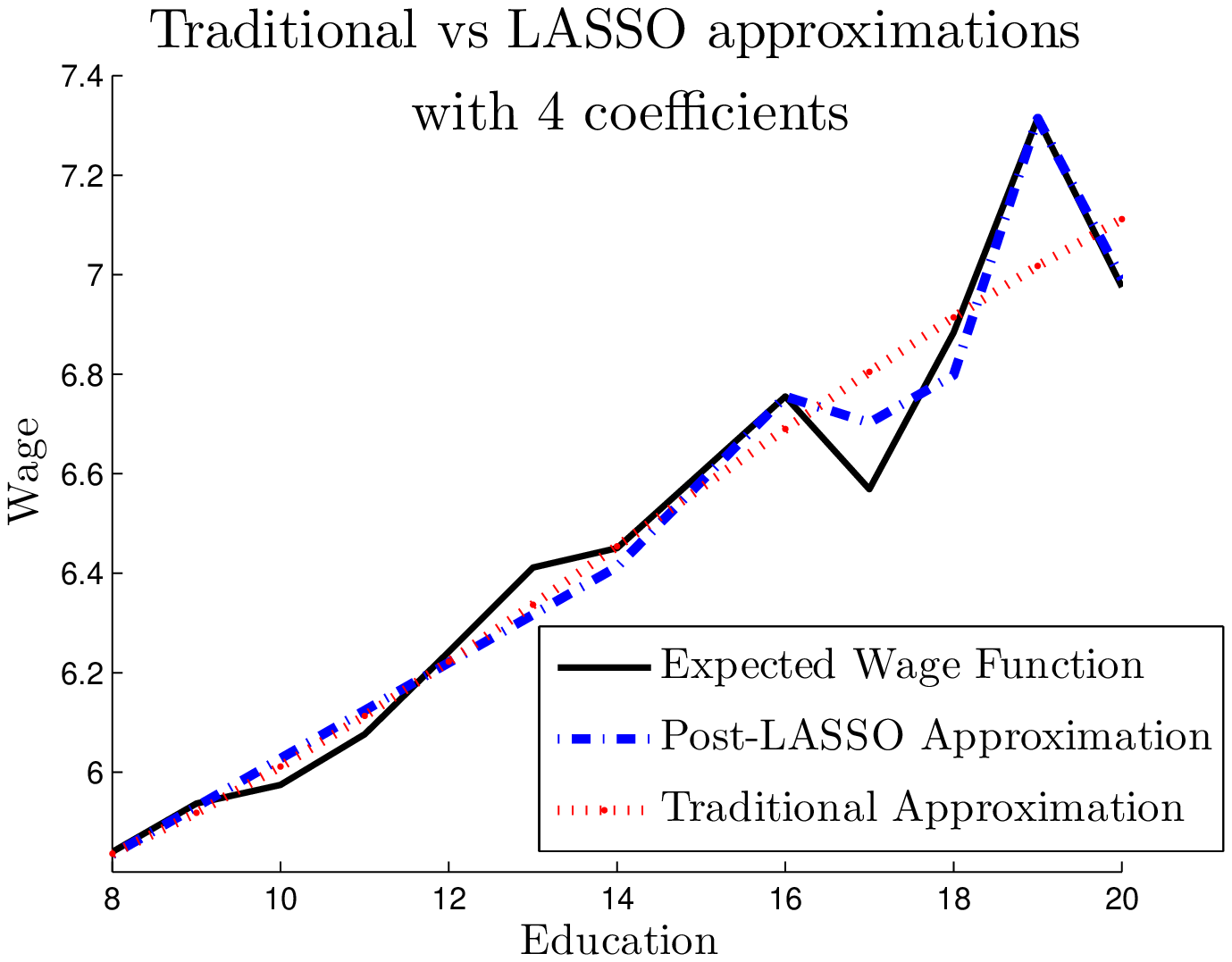} 

\includegraphics[width=0.7\textwidth]{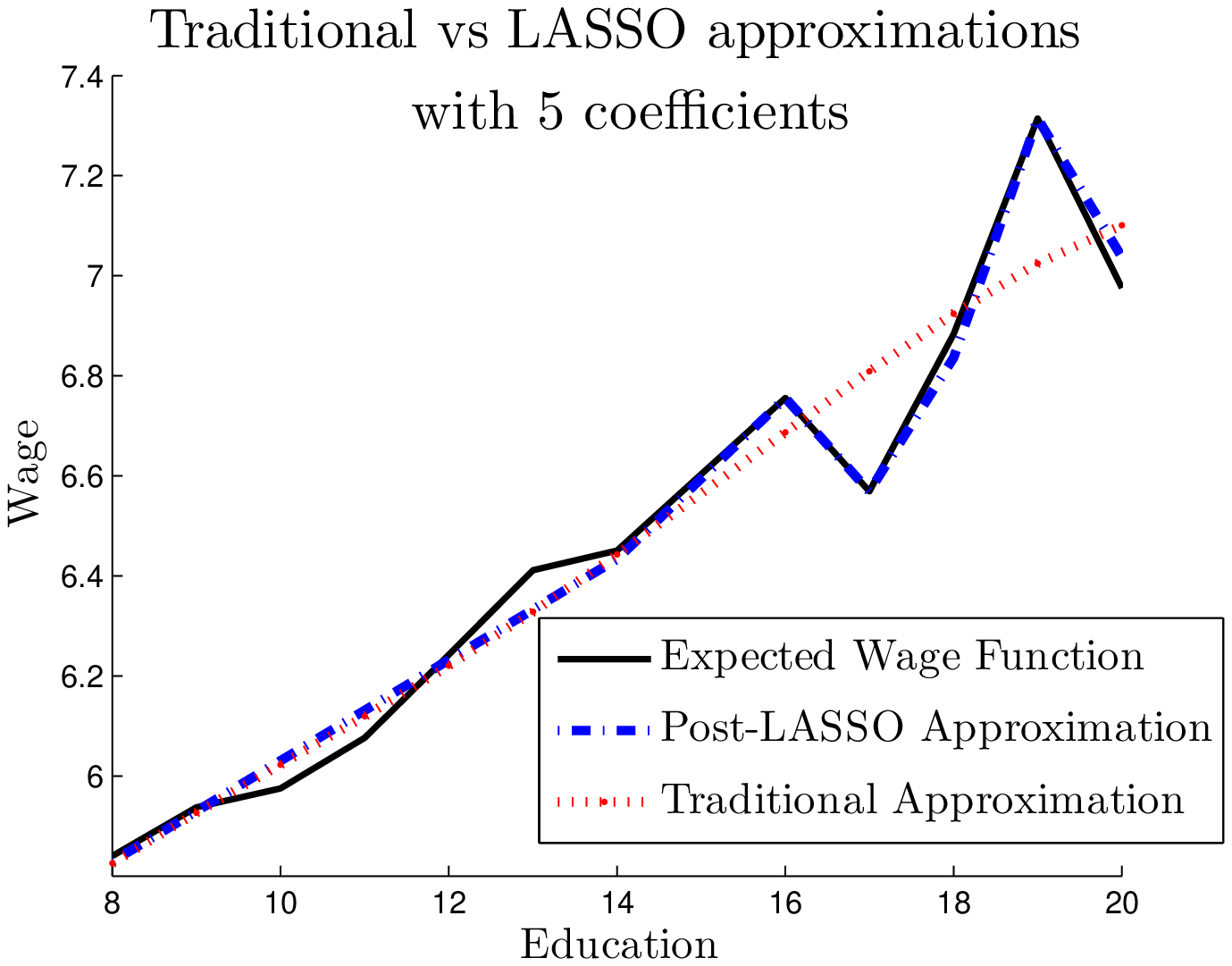} 
\caption{The figures illustrates the Post-LASSO sparse approximation and the traditional (low degree polynomial) approximation of the wage function. The top figure uses $s=4$
 and the bottom figure uses $s=5$.}\label{Fig:Wage}
\end{figure}

The next two applications are natural examples with large sets of regressors among which we need to select some smaller sets to be used in further estimation and inference.  These examples illustrate the potential wide applicability of HDS modeling in econometrics, since many classical and new data sets have naturally multi-dimensional regressors. For example, the American Housing Survey records prices and multi-dimensional features of houses sold, and scanner data-sets record prices and multi-dimensional information on products sold at a store or on the internet. \\

\textbf{Example 2: Instrument Selection in Angrist and Krueger Data}. The second example we consider is an instrumental variables model, as in Angrist and Krueger \cite{AK1991}$$
\begin{array}{lll}
 y_{i1}  & = \theta_0 + \theta_1 y_{i2} +
w_i'\gamma + v_i, &  E[v_i|w_i, x_i] = 0,\\
 y_{i2}  & = x_i'\beta + w_i'\delta +
\varepsilon_i,  &  E[\varepsilon_i|w_i, x_i] = 0,
\end{array}
 $$
where, for person $i$, $y_{i1}$ denotes wage,  $y_{i2}$ denotes education, $w_i$ denotes a vector of control variables, and $x_i$ denotes
a vector of instrumental variables that affect education but do not directly affect the wage. The instruments $x_i$ come from the quarter-of-birth dummies, and from a very large list, total of $180$, formed by interacting quarter-of-birth dummies with control variables $w_i$.   The interest focuses on measuring the coefficient $\theta_1$,
  which summarizes the causal impact
  of education on earnings, via instrumental variable estimators.

There are two basic options used in the literature: one uses just the quarter-of-birth dummies, that is, the leading
3 instruments, and another uses all 183 instruments. It is well known that using just 3 instruments results in estimates of the schooling coefficient $\theta_1$ that have a large variance and small bias,
while using 183 instruments results in estimates that have a  much smaller variance but (potentially) large bias, see, e.g., \cite{hhn:weakiv}.  It turns out that, under some conditions, by using $\ell_1$-based estimation of the first stage,
we can construct estimators that also have a nearly efficient variance and at the same time small bias.
Indeed, as shown in Table \ref{Table:IV}, using the LASSO estimator induced by different penalty levels defined in Section \ref{Sec:SettingEstimators},
it is possible to find just 37 instruments that contain nearly all information in the first stage equation. Limiting the number of the instruments from 183 to just 37 reduces the bias of the final instrumental variable estimator. For a further analysis of IV estimates based on LASSO-selected instruments, we refer the reader to \cite{BCCH-LASSOIV}.

\begin{center}
\begin{table}
\begin{center}
\caption{Instrumental Variable Estimates of Return to Schooling in Angrist and Krueger Data}

\begin{tabular}{c | c | c}
 \hline\hline
  Instruments &  Return to Schooling &  Robust Std Error  \\
\hline
   3    & 0.1077 & 0.0201\\
   180  & 0.0928 & 0.0144\\
   \hline
   LASSO-selected & & \\
   \hline
   5    & 0.1062 & 0.0179\\
   7    & 0.1034 & 0.0175\\
   17   & 0.0946 & 0.0160\\
   37   & 0.0963 & 0.0143\\
  \hline\hline
  \end{tabular}\label{Table:IV}
\end{center}
  \end{table}
\end{center}
\qed

\textbf{Example 3: Cross-country Growth Regression.} One of the central issues in the empirical growth literature is estimating the effect of an initial (lagged) level of GDP (Gross Domestic Product) per capita on the growth rates of GDP per capita. In particular, a key prediction from the classical Solow-Swan-Ramsey growth model
is the hypothesis of convergence, which states that poorer countries should typically grow faster and therefore should tend to catch up
with the richer countries. Such a hypothesis implies that the effect of the initial level of GDP on the growth rate should be negative.
As pointed out in Barro and Sala-i-Martin \cite{BarroSala1995}, this hypothesis is rejected using a simple bivariate regression of
growth rates on the initial level of GDP. (In this data set, linear regression yields an insignificant positive coefficient of $0.0013$.)
In order to reconcile the data and the theory, the literature has focused on estimating the effect \textit{conditional} on the
pertinent characteristics of countries.  Covariates that describe such characteristics can include variables measuring
education and science policies, strength of market institutions, trade openness, savings rates and others \cite{BarroSala1995}.
The theory then predicts that  for countries with similar other characteristics the effect of the initial level of GDP on the growth rate should be
negative (\cite{BarroSala1995}).  Thus, we are interested in a specification of the form:
\begin{equation}\label{growth} y_i = \alpha_0 + \alpha_1 \log G_i + \sum_{j=1}^p \beta_j X_{ij} + \varepsilon_i,
 \end{equation}
where $y_i$ is the growth rate of GDP over a specified decade in country $i$, $G_i$ is the initial level of GDP
at the beginning of the specified period, and the $X_{ij}$'s form a long list of country $i$'s characteristics at the beginning of the specified period.  We are interested in testing the hypothesis of convergence, namely that $\alpha_1 <0$.

Given that in standard data-sets, such as Barro and
Lee data \cite{BarroLee1994}, the number of covariates $p$ we can condition on is large, at least relative to the sample size $n$, covariate selection becomes a crucial issue in this analysis (\cite{OneMillion}, \cite{TwoMillion}).  In particular, previous findings came under severe criticism for relying on ad hoc procedures for covariate selection.  In fact, in some cases, all of the previous findings have been questioned (\cite{OneMillion}).   Since the number of covariates is high, there is no simple way to resolve the model selection problem using only classical tools. Indeed the number of possible lower-dimensional models is very large, although
\cite{OneMillion} and \cite{TwoMillion} attempt to search over several millions of these models.   We suggest $\ell_1$-penalization and post-$\ell_1$-penalization methods to address this important issue.  In Section \ref{Sec:Growth}, using these methods we estimate the growth model (\ref{growth}) and indeed find rather strong support for the hypothesis of convergence, thus confirming the basic implication of the Solow-Swan model. \qed

%
%
%
%
%

\textbf{Notation.} In what follows, all parameter values are indexed by the sample size $n$, but we omit the index whenever this does not cause
confusion. In making asymptotic statements, we assume that $n \to
\infty$ and $p=p_n  \to \infty$, and we also allow for $s=s_n \to
\infty$.  We use the notation $(a)_+ = \max\{a,0\}$, $a \vee b = \max\{ a, b\}$ and $a \wedge b = \min\{ a , b \}$. The $\ell_2$-norm is denoted by
$\|\cdot\|$ and the ``$\ell_0$-norm" $\|\cdot\|_0$ denotes the number of non-zero components of a vector. Given a vector $\delta \in \RR^p$, and a set of
indices $T \subset \{1,\ldots,p\}$, we denote by $\delta_T$ the vector in which $\delta_{Tj} = \delta_j$ if $j\in T$, $\delta_{Tj}=0$ if $j \notin T$. We also use standard notation in the empirical process literature, $$\En[f] = \En[f(w_i)] = \sum_{i=1}^n f(w_i)/n,$$ and we use the notation $a \lesssim b$ to denote $a \leqslant c b$ for some constant $c>0$ that does not depend on $n$; and
$a\lesssim_P b$ to denote $a=O_P(b)$. Moreover, for two random variables $X, Y$ we say that $X=_dY$ if they have the same probability distribution. We also define the prediction norm associated with the empirical Gram matrix $\En[x_ix_i']$ as $$\|\delta\|_{2,n} = \sqrt{\En[(x_i'\delta)^2]}.$$

\section{The Setting and Estimators}\label{Sec:SettingEstimators}


\subsection{The Model}

Throughout the rest of the chapter we consider the nonparametric model introduced in the previous section:
\begin{equation}\label{Def:NP}
y_i = f(z_i) + \varepsilon_i, \ \   \epsilon_i \sim N(0, \sigma^2),  \ \  i=1,\ldots,n,
\end{equation}
where $y_i$ is the outcome, $z_i$ is a vector of fixed regressors, and $\varepsilon_i$'s are i.i.d. disturbances. Define $x_i=P(z_i)$, where $P(z_i)$ is a
$p$-vector of transformations of $z_i$, including a constant,  and $f_i = f(z_i)$. For a conformable sparse vector $\beta_0$ to be defined below, we can rewrite (\ref{Def:NP}) in an approximately parametric form:
\begin{equation}\label{EqNinePrime}
y_i = x_i' \beta_0 + u_i, \ \   u_i = r_i + \varepsilon_i,   \ \  i=1,\ldots,n,
\end{equation} where $r_i := f_i - x_i'\beta_0$,  $i=1,\ldots,n,$ are approximation errors. We note that in the parametric case,  we may naturally choose $x_i'\beta_0=f_i$ so that $r_i=0$ for all $i=1,\ldots,n$.
In the nonparametric case, we shall choose $x_i'\beta_0$ as a sparse parametric model that yields a good approximation to the true regression function $f_i$ in equation (\ref{Def:NP}).

Given (\ref{EqNinePrime}), our target in estimation  will become the parametric function $x_i'\beta_0$. Here we emphasize that the ultimate target in estimation is, of course, $f_i$,
while  $x_i'\beta_0$ is a convenient intermediate target, introduced so that we can approach the estimation problem as if it were parametric. Indeed, the two targets are equal up to approximation errors $r_i$'s
that will be set smaller than estimation errors. Thus, the problem of estimating the parametric target $x_i'\beta_0$ is equivalent to
the problem of estimating  the  non-parametric target $f_i$
modulo approximation errors.

 With that in mind, we choose our target or ``true" $\beta_0$, with the corresponding cardinality of its support $$s= \| \beta_0\|_0,$$ as any solution to the following ideal risk minimization or oracle problem:
\begin{equation}\label{oracle}
 \min_{  \beta \in \RR^p }  \En [(f_i - x_i'\beta)^2] + \sigma^2 \frac{\|\beta\|_0}{n}.
\end{equation}
We call this problem the oracle problem for the reasons explained below, and  we call
$$ T = \supp(\beta_0)$$
the oracle or the ``true" model.  Note that we necessarily have that $s \leqslant n$.

The oracle problem (\ref{oracle}) balances
the approximation error $\En [(f_i - x_i'\beta)^2]$ over the design points with the variance term $\sigma^2 \|\beta\|_0/n$, where the latter is determined by the
number of non-zero coefficients in $\beta$.  Letting
$$
c^2_s:= \En[r^2_i] =  \En [(f_i - x_i'\beta_0)^2]
$$
denote the average square error from approximating values $f_i$ by $x_i'\beta_0$, the quantity $ c^2_s +  \sigma^2 s/n$ is  the optimal value of (\ref{oracle}). Typically, the
optimality in (\ref{oracle}) would balance the approximation error with the variance term so that for some absolute constant $K\geqslant 0$ \begin{equation}\label{Def:ApproxError}c_s \leqslant K \sigma \sqrt{s/n},\end{equation}
so that  $ \sqrt{c^2_s +  \sigma^2 s/n} \lesssim \sigma \sqrt{s/n}.$ Thus, the quantity $\sigma \sqrt{s/n}$ becomes the ideal goal for the rate of convergence. If we knew the oracle  model $T$, we would achieve this rate by using the oracle estimator, the least squares estimator based on this model,  but we in general do not know $T$, since we do not observe the $f_i$'s to attempt to solve the oracle problem (\ref{oracle}).  Since
$T$ is unknown, we will not be able to achieve the exact oracle rates of convergence, but we can hope to come close to this rate.

We consider the case of fixed design, namely we treat the covariate values $x_1,\ldots, x_n$ as fixed. This includes random sampling as a special case; indeed, in this case $x_1,\ldots, x_n$ represent a realization of this sample on which we condition throughout.  Without loss of generality, we normalize the covariates so that \begin{equation}\label{Def:Normalization}\text{ $\hat \sigma_j^2 = \En[x_{ij}^2] = 1$ for $j=1,\ldots,p$.}\end{equation}

We summarize the setup as the following condition.

~\\

\noindent \textbf{Condition ASM.} \textit{We have data $\{(y_i,z_i), i=1,\ldots,n\}$  that for each $n$ obey the regression model (\ref{Def:NP}), which admits the approximately sparse form  (\ref{EqNinePrime})
induced by (\ref{oracle}) with the approximation error satisfying (\ref{Def:ApproxError}). The regressors $x_i=P(z_i)$ are normalized as in (\ref{Def:Normalization}).}

~\\

\begin{remark}[On the Oracle Problem] Let us now briefly explain what is behind problem (\ref{oracle}).  Under some mild assumptions,
this problem directly arises as the (infeasible) oracle risk minimization problem. Indeed, consider an OLS estimator $\widehat\beta[\widetilde T]$, which is obtained
by using a model $\widetilde T$, i.e. by regressing $y_i$ on regressors $x_{i}[\widetilde T]$, where
$x_{i}[\widetilde T]= \{ x_{ij}, j \in \widetilde T\}$. This estimator  takes value $\widehat\beta[\widetilde T]=\En[x_{i}[\widetilde T] x_{i}[\widetilde T]']^{-} \En[x_{i}[\widetilde T]y_i]$.
The expected risk of this estimator $\En \Ep [f_i - x_i[\widetilde T]'\widehat\beta[\widetilde T]]^2$ is equal to
$$
 \min_{\beta \in \Bbb{R}^{|\widetilde T|}} \En[ (f_i - x_i[\widetilde T]' \beta)^2] +   \sigma^2 \frac{k}{n},
$$
where $k = \text{rank} (\En[x_{i}[\widetilde T] x_{i}[\widetilde T]'])$.  The oracle knows
the risk of each of the models $\widetilde T$ and can minimize this risk
$$
\min_{\widetilde T}  \min_{\beta \in \Bbb{R}^{|\widetilde T|}} \En[ (f_i - x_i[\widetilde T]' \beta)^2] +   \sigma^2 \frac{k}{n},
$$
by choosing the best model or the oracle model $T$. This problem  is in fact equivalent to (\ref{oracle}), provided that
  $\text{rank} \left(\En[x_{i}[T] x_{i}[T]']\right) =\|\beta_0\|_0$, i.e. full rank.  Thus, in this case
  the value $\beta_0$ solving (\ref{oracle}) is the expected value
of the oracle least squares estimator $\widehat\beta_{T}=\En[x_{i}[T] x_{i}[T]']^{-1} \En[x_{i}[T]y_i]$, i.e.
$\beta_0 = \En[x_{i}[T] x_{i}[T]']^{-1} \En[x_{i}[T]f_i]$.  This value is our target or ``true" parameter value
and the oracle model $T$ is the target  or ``true" model. Note that when $c_s=0$ we have that $f_i = x_i'\beta_0$, which gives
us the special parametric case.
\end{remark}


\subsection{LASSO and Post-LASSO Estimators}

Having introduced the model (\ref{EqNinePrime}) with the target parameter defined via (\ref{oracle}), our task becomes to estimate $\beta_0$.
We will focus on deriving rate of convergence results in the {\it prediction  norm}, which measures the accuracy of predicting $x_i'\beta_0$  over the design points $x_1,\ldots,x_n$,
$$
\|\delta\|_{2,n} =  \sqrt{\En[x_i'\delta]^2}.
$$ In what follows $ \delta$ will denote deviations of the estimators from the true parameter value. Thus, e.g., for $ \delta = \hat \beta - \beta_0$, the quantity $\|\delta\|_{2,n}^2$ denotes the average of the square errors $x_i'\hat\beta - x_i'\beta_0$ resulting from using the estimate $x_i'\hat\beta$ instead of $x_i'\beta_0$.
Note that once we bound $\widehat \beta - \beta_0$ in the prediction norm, we can also bound the empirical risk of predicting
values $f_i$ by $x_i'\widehat \beta$  via the triangle inequality:
\begin{equation}\label{Def:NORM_ER}
\sqrt{\En[ (x_i'\hat \beta - f_i)^2]} \leqslant  \|\widehat \beta - \beta_0\|_{2,n}+c_s.
\end{equation}

In order to discuss estimation consider first the classical ideal AIC/BIC type estimator (\cite{Akaike1974,Schwarz1978}) that solves the empirical (feasible) analog of the oracle problem:
$$
\min_{\beta \in \Bbb{R}^p} \widehat Q (\beta) + \frac{\lambda}{n}  \|\beta\|_{0},
$$
where  $ \widehat Q (\beta) = \En[(y_i - x_i'\beta)^2]$ and  $
 \| \beta\|_0 = \sum_{j=1}^p 1\{ | \beta_j| > 0 \}$ is the $\ell_0$-norm and $\lambda$ is the penalty level.  This estimator has very attractive theoretical properties, but unfortunately it is computationally prohibitive, since the solution to the problem may require solving $\sum_{k \leqslant n} \binom{p}{k}$ least squares problems (generically, the complexity of this problem is NP-hard \cite{Natarajan1995,GeJiangYe2010}).

One way to overcome the computational difficulty is to consider a convex relaxation of the preceding problem, namely
to employ the closest convex penalty -- the $\ell_1$ penalty -- in place of the $\ell_0$ penalty. This construction leads to the so called LASSO
estimator:\footnote{The abbreviation LASSO stands for Least Absolute Shrinkage and Selection Operator, c.f.  \cite{T1996}.}
\begin{equation}\label{Def:LASSOmain}
\widehat \beta \in \arg \min_{\beta \in \Bbb{R}^p} \widehat Q (\beta) + \frac{\lambda}{n} \| \beta \|_{1},
\end{equation}
where as before $ \widehat Q (\beta) = \En[(y_i - x_i'\beta)^2]$ and $\|\beta\|_{1} = \sum_{j=1}^p | \beta_j|$. The LASSO estimator
minimizes a convex function. Therefore, from a computational complexity perspective, (\ref{Def:LASSOmain}) is a computationally efficient (i.e. solvable in polynomial time) alternative to
AIC/BIC estimator.

In order to describe the choice of $\lambda$, we highlight that the following key quantity determining this choice:
$$ S = 2\En[x_i\varepsilon_i], $$
which summarizes the noise in the problem. We would like to choose the smaller penalty level so that
\begin{equation}\label{choice of lambda probability}
 \lambda \geqslant  c n \|S\|_{\infty} \text{ with probability at least } 1- \alpha,
 \end{equation}
where $1-\alpha$ needs to be close to one, and $c$ is a constant such that $c>1$.
Following \cite{BC-PostLASSO} and
\cite{BickelRitovTsybakov2009},
 respectively, we consider two choices of $\lambda$ that achieve the above:
\begin{eqnarray}
& \label{Def:Xindependent} X\mbox{-independent penalty:} \quad & \lambda := 2c\sigma\sqrt{n} \Phi^{-1}(1-\alpha/2p), \\
& \label{Def:Xdependent}  X\mbox{-dependent penalty:} \quad & \lambda :=  2c\sigma \Lambda(1-\alpha|X),
\end{eqnarray}
where $\alpha \in (0,1)$ and $c>1$ is constant, and
$$
\Lambda(1-\alpha|X) := (1-\alpha)-\text{quantile of }  n\|S/(2\sigma)\|_{\infty} ,
$$
$ \text{ conditional on } X=(x_1,\ldots,x_n)'$.
Note that
$$
\|S/(2\sigma)\|_{\infty}  =_d \max_{1 \leqslant j \leqslant p}  |\En [x_{ij} g_i]|, \text{ where $g_i$'s are i.i.d. } N(0,1),
$$
conditional on $X$, so we can compute $\Lambda(1-\alpha|X)$ simply by simulating the latter quantity, given the fixed design matrix $X$.
Regarding the choice of $\alpha$ and $c$,  asymptotically we require $\alpha \to 0$ as $n \to \infty$ and  $c>1$. Non-asymptotically, in our finite-sample experiments, $\alpha = .1$ and $c=1.1$ work quite well.
The noise level $\sigma$ is unknown in practice, but we can estimate it consistently using the approach of Section 6.
  We recommend the $X$-dependent rule over the $X$-independent rule, since the former by construction adapts to the design matrix $X$ and is less conservative than the latter in view
of the following relationship that follows from Lemma \ref{Lemma:GaussianTail}:
\begin{equation}\label{ranking}
\Lambda(1-\alpha|X) \leqslant \sqrt{n} \Phi^{-1}(1-\alpha/2p) \leqslant \sqrt{2 n \log(2p/\alpha)}.
\end{equation}

Regularization by the $\ell_1$-norm employed in (\ref{Def:LASSOmain}) naturally helps the LASSO estimator to avoid overfitting the data, but it also shrinks the fitted coefficients towards zero, causing a potentially significant bias. In order to remove some of this bias, let us consider the Post-LASSO estimator that applies ordinary least squares regression to  the model $\widehat T$ selected by LASSO. Formally, set
$$\widehat T = \supp( \hat \beta ) = \{ j \in \{1,\ldots,p\} \ : \ |\hat\beta_j| > 0\},$$
and define the Post-LASSO estimator $\widetilde \beta$ as \begin{equation}\label{Def:TwoStep} \widetilde \beta \in \arg\min_{\beta \in \mathbb{R}^p} \ \widehat Q(\beta) : \beta_j = 0 \text{ for each } j \in \widehat T^c,
\end{equation}
where $\widehat T^c = \{1,...,p\} \setminus \widehat T$. In  words, the estimator is ordinary least squares applied to the data after removing the regressors that were not selected by LASSO. If the model selection works perfectly -- that is, $\widehat T = T$ --
then the Post-LASSO estimator is simply the oracle estimator whose properties are well known. However, perfect model selection might be unlikely for many designs of interest, so we are especially interested in the properties of Post-LASSO in such cases, namely when $\widehat T \neq T$, especially when $T\nsubseteq \widehat T$.

\subsection{Intuition and Geometry of LASSO and Post-LASSO}

In this section we discuss the intuition behind LASSO and Post-LASSO estimators defined above.
We shall rely on a dual interpretation of the LASSO optimization problem to provide some geometrical intuition for the performance of LASSO.
Indeed, it can be seen that the LASSO estimator also solves the following optimization program:
\begin{equation}\label{dual} \min_{\beta \in \RR^p} \|\beta\|_1 : \ \widehat Q(\beta) \leqslant \gamma
\end{equation}
for some value of $\gamma \geqslant 0$ (that depends on the penalty level $\lambda$).  Thus, the estimator minimizes the $\ell_1$-norm
of coefficients subject to maintaining a certain goodness-of-fit; or, geometrically,
the LASSO estimator searches for a minimal $\ell_1$-ball -- the diamond-- subject to the diamond having a non-empty intersection with a fixed lower contour set of the least squares criterion function -- the ellipse.

In Figure \ref{Fig:PLOTS}  we show an illustration for the two-dimensional case with the true parameter value $(\beta_{01}, \beta_{02})$ equal $(1,0)$, so that $T=\text{support}(\beta_0)=\{1\}$ and $s=1$.  In the figure we plot the diamonds and ellipses. In the top figure, the ellipse represents a lower contour set of the population criterion function $Q(\beta) = \Ep[(y_i-x_i'\beta)^2]$ in the zero noise case or the infinite sample case.  In the bottom figures the ellipse represents a contour set of the sample criterion function  $\hat
Q(\beta)=\En[(y_i-x_i'\beta)^2]$ in the non-zero noise or the finite sample case.
 The set of optimal solutions $\widehat \beta$ for LASSO is then given by the intersection of the minimal diamonds with the ellipses.  Finally, recall that Post-LASSO is computed as the ordinary least square solution using covariates selected by LASSO. Thus, Post-LASSO estimate $\widetilde \beta$ is given by the center of the ellipse intersected with the linear subspace selected by LASSO.

In the zero-noise case or in population (top figure), LASSO easily recovers the correct sparsity pattern of $\beta_0$. Note that due to the regularization, in spite of the absence of noise, the LASSO estimator has a large bias towards zero. However, in this case Post-LASSO $\widetilde \beta$ removes the bias and recovers $\beta_0$ perfectly.

In the non-zero noise case (middle and bottom figures), the contours of the criterion function and its center move away from the population counterpart. The empirical error in the middle figure moves the center of the ellipse to a non-sparse point. However, LASSO correctly sets $\widehat \beta_2 = 0$ and $\widehat\beta_1\neq 0$ recovering the sparsity pattern of $\beta_0$. Using the selected support, Post-LASSO $\widetilde \beta$ becomes the oracle estimator which drastically improves upon LASSO. In the case of the bottom figure, we have large empirical errors that push the center of the lower contour set further away from the population counterpart. These large empirical errors make the LASSO estimator non-sparse, incorrectly setting $\widehat \beta_2 \neq 0$. Therefore, Post-LASSO uses  $\widehat T = \{1,2\}$ and does not use the exact support $T=\{1\}$. Thus, Post-LASSO is not the oracle estimator in this case.

All three figures also illustrate the shrinkage bias towards zero in the LASSO estimator that is introduced by the $\ell_1$-norm penalty. The Post-LASSO estimator is motivated as a solution to remove (or at least alleviate) this shrinkage bias.
In cases where LASSO achieves a good sparsity pattern, Post-LASSO can drastically improve upon LASSO.

\begin{figure}[h!]
\centering
\includegraphics[width=0.79\textwidth]{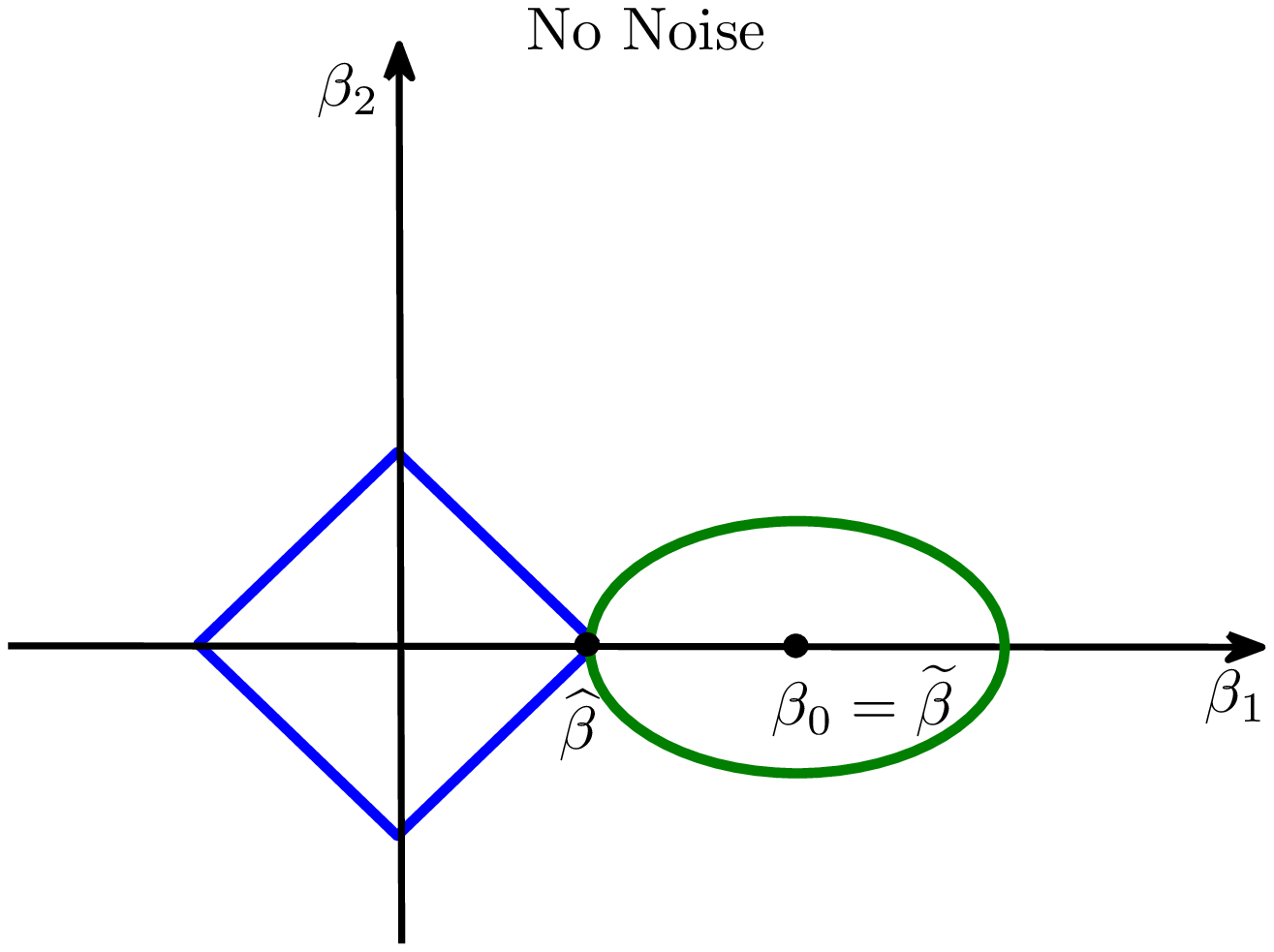}

\includegraphics[width=0.79\textwidth]{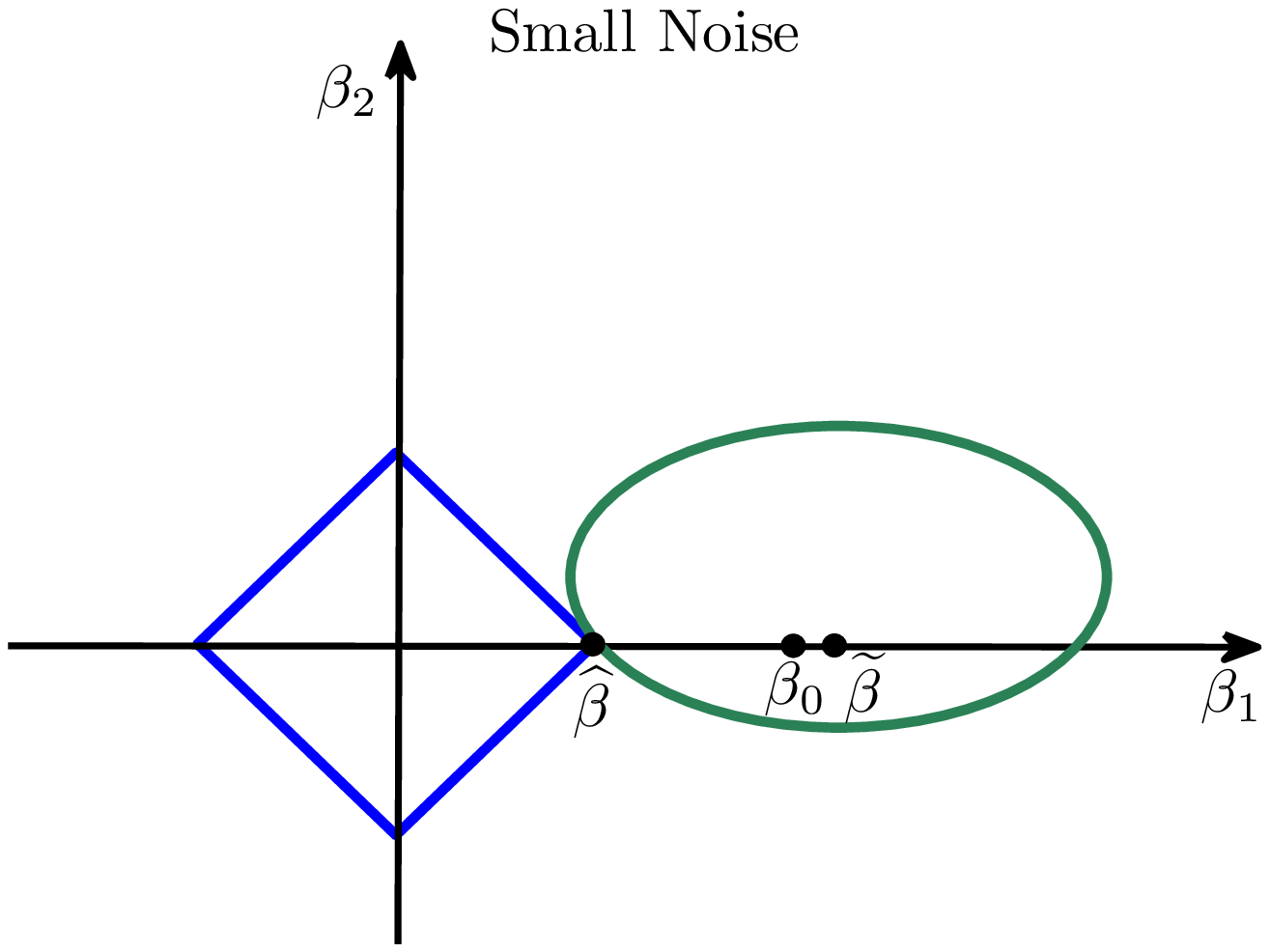}
\includegraphics[width=0.79\textwidth]{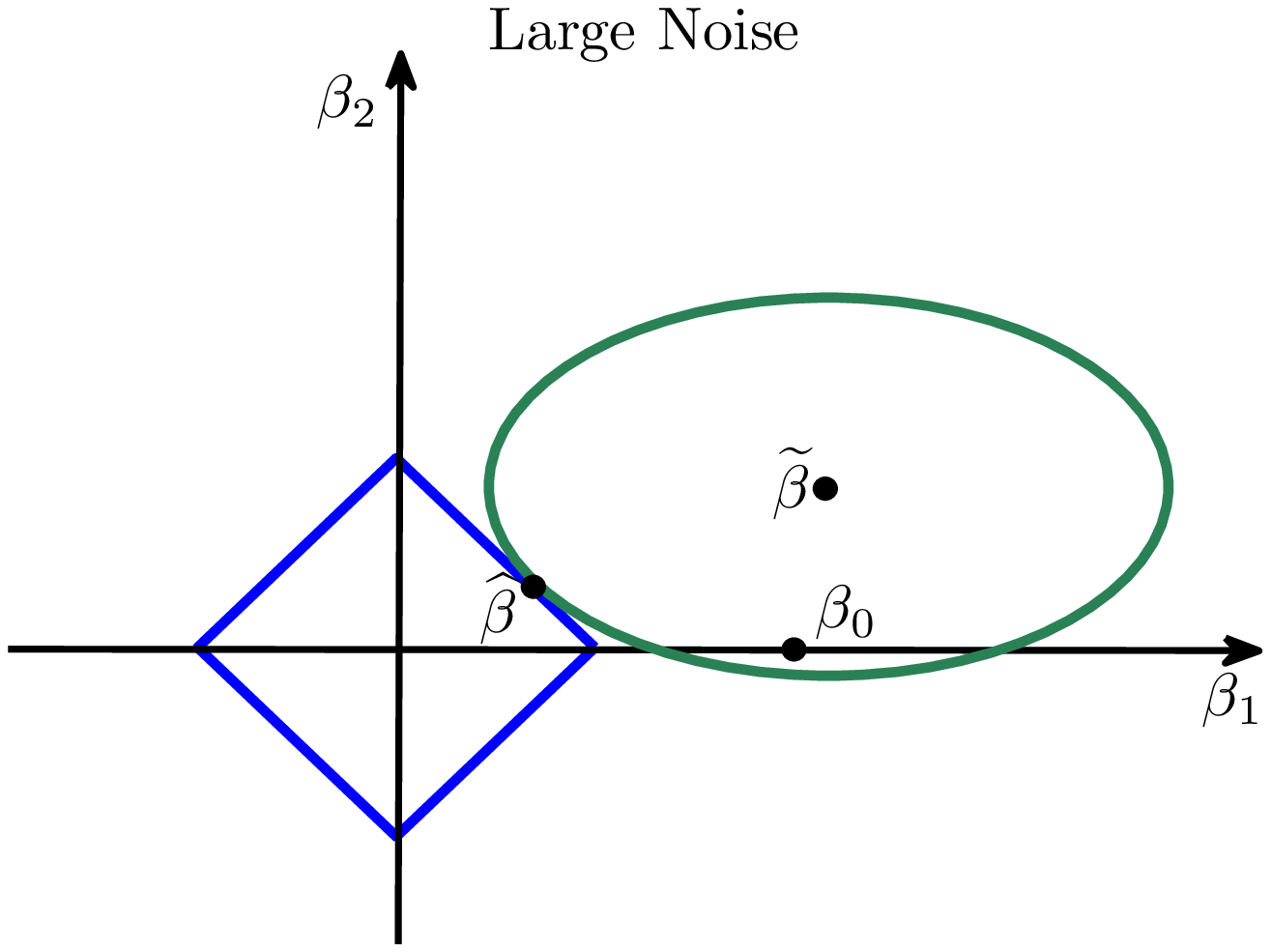}
\caption{The figures illustrate the geometry of LASSO and Post-LASSO estimator.}\label{Fig:PLOTS}
\end{figure}

\subsection{Primitive conditions}\label{Sec:Primitive}

In both the parametric and non-parametric models described above, whenever $p>n$, the empirical Gram matrix $\En[x_ix_i']$ does not
have full rank and hence it is not well-behaved. However, we only need good behavior of certain moduli of continuity of the Gram matrix called restricted sparse eigenvalues.
We define the minimal restricted sparse eigenvalue
\begin{equation}\label{Def:RSE1}
\kappa(m)^2 : = \min_{\|\delta_{T^c}\|_{0} \leqslant m, \delta \neq 0
 } \frac{ \|\delta\|_{2,n}^2}{\|\delta\|^2},
\end{equation}
and the maximal restricted sparse eigenvalue as
\begin{equation}\label{Def:RSE2}
 \phi(m) : = \max_{\|\delta_{T^c}\|_{0} \leqslant m, \delta \neq 0
 } \frac{ \|\delta\|^2_{2,n}}{\|\delta\|^2},
\end{equation} where $m$ is the upper bound on the number of non-zero components outside the support $T$.
To assume that $\kappa(m)>0$ requires that all empirical Gram submatrices formed by any $m$ components of $x_i$ in addition to the components in $T$ are positive definite. It
will be convenient to define the following sparse \textit{condition} number associated with the empirical Gram matrix:
\begin{equation}\label{Def:mmu}
\mmu{m} = \frac{\sqrt{\phi(m)}}{\kappa(m)}.
\end{equation}

In order to state simplified asymptotic statements, we shall also invoke the following condition.
~\\

\noindent \textbf{Condition RSE.} \textit{Sparse eigenvalues of the empirical Gram matrix are well behaved, in the sense that for $m=m_n=s\log n$}
\begin{equation}\label{Eq:PRIMITIVE}
\mmu{m} \lesssim 1,  \quad \phi(m) \lesssim 1, \quad   1/\kappa(m) \lesssim 1.
\end{equation}

This condition holds with high probability for many designs of interest under mild conditions on $s$. For example, as shown in Lemma \ref{Lemma:GaussianDesign}, when the covariates are Gaussians, the conditions in (\ref{Eq:PRIMITIVE}) are true with probability converging to one under the mild assumption that $s\log p = o(n)$. Condition RSE is likely to hold for other regressors with jointly light-tailed distributions, for instance log-concave distribution. As shown in Lemma \ref{Lemma:BoundedDesign}, the conditions in (\ref{Eq:PRIMITIVE}) also hold for general bounded regressors under the assumption that $s (\log^4 n)\log (p\vee n) = o(n)$. Arbitrary bounded regressors often arise in non-parametric models, where
 regressors $x_i$ are formed as spline, trigonometric, or polynomial transformations $P(z_i)$ of some elementary bounded regressors $z_i$.

\begin{lemma}[Gaussian design]\label{Lemma:GaussianDesign}   Suppose $\tilde x_i$, $i = 1,\ldots,n$, are i.i.d. zero-mean Gaussian random vectors,
such that the population design matrix $\Ep[\tilde x_i \tilde x'_i]$ has ones on the diagonal, and its $s\log n$-sparse eigenvalues are bounded from above by $\varphi
< \infty$ and bounded from below by $\kappa^2>0$.  Define $x_{i}$ as a normalized form of $\tilde x_i$, namely   $x_{ij}= \tilde
x_{ij}/\sqrt{\En[\tilde x_{ij}^2]}$.  Then for any $m\leqslant (s\log(n/e)) \wedge (n/[16\log p])$, with probability at least $1-2\exp(-n/16)$,
$$  \phi(m) \leqslant 8\varphi, \ \  \kappa(m) \geqslant \kappa/6\sqrt{2}, \ \ \mbox{and} \ \ \mmu{m} \leqslant 24\sqrt{\varphi}/\kappa. $$
\end{lemma}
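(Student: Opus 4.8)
The plan is to reduce the two sparse eigenvalues, which are extrema over the uncountable family of vectors $\delta$ with $\|\delta_{T^c}\|_0 \le m$, to a \emph{finite} union over support sets, and then to control each support set by comparing the empirical Gram submatrix to its population counterpart. Since $T$ is fixed and $\delta$ carries at most $m$ nonzero coordinates outside $T$, every admissible $\delta$ is supported on some $S = T\cup S'$ with $S'\subseteq T^c$, $|S'|\le m$; writing $x_{iS}$ for the coordinates of $x_i$ in $S$, this gives $\phi(m)=\max_{S'}\lambda_{\max}(\En[x_{iS}x_{iS}'])$ and $\kappa(m)^2=\min_{S'}\lambda_{\min}(\En[x_{iS}x_{iS}'])$. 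The key bookkeeping point is that $|S|\le s+m\le s\log n$, using $m\le s\log(n/e)$ and $s(1+\log(n/e))=s\log n$, so the population $s\log n$-sparse eigenvalue hypotheses apply uniformly: $\kappa^2\le\lambda_{\min}(\Sigma_{SS})\le\lambda_{\max}(\Sigma_{SS})\le\varphi$ for every admissible $S$, where $\Sigma_{SS}:=\Ep[\tilde x_{iS}\tilde x_{iS}']$.

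\textbf{Per-support concentration.} First I would work with the unnormalized vectors. For fixed $S$ of cardinality $k\le s\log n$, whiten $\tilde x_{iS}=\Sigma_{SS}^{1/2}g_i$ with $g_i\sim N(0,I_k)$ i.i.d., so that for $\delta$ supported on $S$, setting $v=\Sigma_{SS}^{1/2}\delta$, the empirical quadratic form $\delta'\En[\tilde x_{iS}\tilde x_{iS}']\delta$ equals $v'(\frac{1}{n}G'G)v$ with $G\in\RR^{n\times k}$ standard Gaussian. Thus the empirical-to-population ratio is pinched between the extreme singular values of $G/\sqrt n$, which I would control by a standard Gaussian-concentration (Davidson--Szarek) bound: with deviation $1/2$,
$$\sigma_{\max}(G/\sqrt n)\le 1+\sqrt{k/n}+1/2,\quad \sigma_{\min}(G/\sqrt n)\ge 1/2-\sqrt{k/n},$$
each with probability at least $1-\exp(-n/8)$. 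Combined with the population bounds this gives $\lambda_{\max}(\En[\tilde x_{iS}\tilde x_{iS}'])\le\varphi\,\sigma_{\max}(G/\sqrt n)^2$ and $\lambda_{\min}(\En[\tilde x_{iS}\tilde x_{iS}'])\ge\kappa^2\,\sigma_{\min}(G/\sqrt n)^2$, with $\sqrt{k/n}$ small in the intended sparse regime $k\le s\log n=o(n)$ where the numerical constants take effect.

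\textbf{Union bound.} The crux is uniformity over the $\sum_{j\le m}\binom{p-s}{j}\le\exp(m\log p)$ choices of $S'$, and here the dimension-free Gaussian deviation $\exp(-n/8)$ is decisive. Using $m\le n/[16\log p]$ we have $m\log p\le n/16$, so a union bound over the supports and over the two extreme singular values gives total failure probability at most $2\exp(m\log p-n/8)\le 2\exp(-n/16)$, matching the claim exactly (the factor $2$ being the two tails). Note that the entropy involves only the $m$ \emph{added} coordinates, because $T$ is held fixed; this is precisely why the hypothesis constrains $m\log p$ rather than $(s+m)\log p$, and it is what lets $s$ enter only through the per-support dimension $k$ and not through the probability.

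\textbf{Normalization and constants.} Finally I would pass to $x_{ij}=\tilde x_{ij}/\hat\sigma_j$ with $\hat\sigma_j^2=\En[\tilde x_{ij}^2]$. Since the $\hat\sigma_j^2$ are exactly the diagonal entries of $\En[\tilde x_{iS}\tilde x_{iS}']$, they are controlled on the same event (and more sharply, since each $n\hat\sigma_j^2$ is a $\chi^2_n$ variable concentrating around $n$); writing $\En[x_{iS}x_{iS}']=D_S^{-1}\En[\tilde x_{iS}\tilde x_{iS}']D_S^{-1}$ with $D_S=\mathrm{diag}(\hat\sigma_j)_{j\in S}$ and dividing by $\min_j\hat\sigma_j^2$ resp.\ $\max_j\hat\sigma_j^2$ converts the unnormalized bounds into $\phi(m)\le 8\varphi$ and $\kappa(m)\ge\kappa/(6\sqrt2)$ after collecting constants. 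The condition-number bound then follows immediately: $\mu(m)=\sqrt{\phi(m)}/\kappa(m)\le\sqrt{8\varphi}/(\kappa/6\sqrt2)=24\sqrt\varphi/\kappa$. I expect the main obstacle to be the union-bound step, namely arranging the deviation parameter and the support entropy to balance to precisely $2\exp(-n/16)$, together with the careful tracking of constants through the normalization, where one must exploit the sharper $\chi^2$ control of the diagonal entries rather than the generic submatrix singular-value bound to reach the stated $8\varphi$.
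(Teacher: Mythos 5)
The chapter never proves this lemma: it defers to the reference \cite{BC-PostLASSO}, crediting results of \cite{ZhangHuang2006}, so there is no in-paper argument to compare against line by line. Your strategy is, however, exactly the strategy of those references: reduce to the finitely many supports $S=T\cup S'$, $|S'|\le m$, of size $k\le s+m\le s\log n$ (your bookkeeping via $m\le s\log(n/e)$ is correct), whiten so the empirical submatrix becomes $\Sigma_{SS}^{1/2}(G'G/n)\Sigma_{SS}^{1/2}$, apply Davidson--Szarek with deviation $t=1/2$, and union over at most $\exp(m\log p)\le\exp(n/16)$ supports. That skeleton is sound and is where the proposal is strongest.

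The genuine gaps sit precisely at the two steps you defer, and they are not mere bookkeeping. First, the probability accounting does not close: the bound $2\exp(m\log p-n/8)\le 2\exp(-n/16)$ covers only the submatrix singular-value events, while your normalization step additionally needs two-sided control of \emph{all} $p$ diagonal entries $\hat\sigma_j^2$ (every coordinate lies in some admissible support once $m\ge 1$); a union bound for those $\chi^2$ events costs at least another $2p\exp(-n/8)\le 2\exp(-n/16)$ in the worst case $\log p=n/16$, so the total becomes $4\exp(-n/16)$, not the claimed match, unless the budget is re-split. Second, your stated justification for the diagonals, that they ``are controlled on the same event,'' is too lossy as written: diagonal entries of the controlled submatrix are pinned only between $\lambda_{\min}(\En[\tilde x_{iS}\tilde x_{iS}'])\ge\kappa^2(1/2-\sqrt{k/n})^2$ and $\lambda_{\max}\le\varphi(3/2+\sqrt{k/n})^2$, so dividing through yields bounds polluted by $\varphi/\kappa^2$, not $8\varphi$. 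The loss-free ``same event'' version exploits $\Sigma_{jj}=1$: $\hat\sigma_j^2=\|G_S\Sigma_{SS}^{1/2}e_j\|^2/n$ with $\|\Sigma_{SS}^{1/2}e_j\|=1$, so the diagonals are pinched by $\sigma_{\min}^2,\sigma_{\max}^2$ of $G_S/\sqrt n$ at zero probability cost; but even then the achievable constant is $\varphi(a/b)^2$ with $a/b\ge(3/2)/(1/2)=3$ once $t=1/2$ is forced by the entropy, i.e.\ $9\varphi$, overshooting $8\varphi$. Reaching $8$ requires the sharper Laurent--Massart lower tail $\hat\sigma_j^2\ge 1-2\sqrt{x/n}=1-1/\sqrt2$ at $x=n/8$, giving roughly $(3/2)^2/(1-1/\sqrt2)\approx 7.7$; your closing sentence gestures at this but never pins it down, and without it the ``collecting constants'' step fails. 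Finally, every constant above degrades in $\sqrt{(s+m)/n}$, which nothing in the hypotheses bounds (indeed for $s+m>n$ the submatrices are singular and the claim is simply false), so a complete proof must state and use an implicit smallness condition of the form $s\log n\le cn$ with $c$ small, rather than appeal to ``the intended sparse regime.''
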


\begin{lemma}[Bounded design]\label{Lemma:BoundedDesign} Suppose $\tilde x_i$, $i = 1,\ldots,n$, are i.i.d. vectors, such that the population design matrix $\Ep[\tilde x_i \tilde x'_i]$ has ones on the diagonal, and its $s\log n$-sparse eigenvalues are bounded from above by $\varphi < \infty$ and bounded from below by $\kappa^2>0$.  Define $x_{i}$ as a normalized form of $\tilde x_i$, namely   $x_{ij}= \tilde x_{ij}/(\En[\tilde x_{i j}^2])^{1/2}$. Suppose that    $\max_{1\leq i\leq n}\|\tilde x_{i}\|_\infty \leq K_n$ a.s., and $K^2_ns\log^2 (n)  \log^2 (s\log n)  \log (p\vee n) =  o(n \kappa^4/\varphi)$. Then, for any $m\geq 0$ such that $m+s\leq  s\log n$, we have that as $n \to \infty$
$$ \phi(m) \leqslant 4\varphi, \ \  \kappa(m) \geqslant \kappa/2, \ \ \mbox{and} \ \ \mmu{m} \leqslant 4\sqrt{\varphi}/\kappa, $$
with probability approaching 1.
\end{lemma}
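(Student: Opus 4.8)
The plan is to deduce the result from a single uniform deviation bound: that, over all $(s\log n)$-sparse unit directions, the empirical Gram quadratic form is close to its population analogue. Every $\delta$ entering the definitions of $\phi(m)$ and $\kappa(m)$ satisfies $\|\delta_{T^c}\|_0\le m$, hence $\|\delta\|_0\le s+m\le s\log n$, so it suffices to work on the set of $k$-sparse vectors with $k:=s\log n$. Writing $\Sigma=\Ep[\tilde x_i\tilde x_i']$, the hypothesis guarantees $\kappa^2\le\delta'\Sigma\delta\le\varphi$ for all such unit $\delta$. I would therefore aim to show that, with probability tending to one,
$$\Delta_k:=\sup_{\|\delta\|=1,\ \|\delta\|_0\le k}\bigl|\En[(\tilde x_i'\delta)^2]-\delta'\Sigma\delta\bigr|=o(\kappa^2),$$
together with control of the normalization; the eigenvalue bounds and $\mmu{m}\le 4\sqrt{\varphi}/\kappa$ then follow by arithmetic.

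First I would dispose of the normalization. Since $\Ep[\tilde x_{ij}^2]=1$ and $0\le\tilde x_{ij}^2\le K_n^2$, Bernstein's inequality gives $\Pr(|\hat\sigma_j^2-1|\ge t)\le 2\exp(-cnt^2/K_n^2)$ for small $t$, where $\hat\sigma_j^2:=\En[\tilde x_{ij}^2]$; a union bound over $j\le p$ yields $\max_j|\hat\sigma_j^2-1|\lesssim K_n\sqrt{\log p/n}=o(1)$ under the stated growth condition. Because $x_{ij}=\tilde x_{ij}/\hat\sigma_j$, setting $\tilde\delta_j=\delta_j/\hat\sigma_j$ gives $\En[(x_i'\delta)^2]=\En[(\tilde x_i'\tilde\delta)^2]$ with $\mathrm{supp}(\tilde\delta)=\mathrm{supp}(\delta)$ and $\|\tilde\delta\|^2=(1+o(1))\|\delta\|^2$. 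Thus the normalized sparse eigenvalues equal the unnormalized ones up to a $(1+o(1))$ factor, which is harmless for the target constants $4\varphi$ and $\kappa/2$.

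The heart of the matter is the bound on $\Delta_k$, and this is where I expect the real work. I would first reduce to the expectation $\Ep[\Delta_k]$ via a Talagrand (bounded-differences) concentration inequality for the supremum of the empirical process: on the $k$-sparse unit sphere each summand obeys $(\tilde x_i'\delta)^2\le kK_n^2$, so the fluctuations of $\Delta_k$ about $\Ep[\Delta_k]$ are of order $\sqrt{\varphi K_n^2 k\,\log(p\vee n)/n}$ and $K_n^2 k\,\log(p\vee n)/n$, both $o(\kappa^2)$ under the hypothesis. To bound $\Ep[\Delta_k]$ I would symmetrize and apply the Rudelson--Vershynin estimate for $\Ep\bigl\|\sum_i\epsilon_i\,\tilde x_{i,S}\tilde x_{i,S}'\bigr\|$ over each fixed support $S$, combined with a covering of the $k$-sparse sphere whose metric entropy contributes $\log\binom{p}{k}\lesssim s\log n\,\log(p\vee n)$, while the chaining within a fixed support produces the $\log^2(s\log n)$ factor; this is precisely the route of Rudelson and Zhou. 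The estimate is self-referential because the empirical maximal eigenvalue reappears on the right, giving a bound of the form $\Ep[\Delta_k]\lesssim K_n\,\log(s\log n)\sqrt{k\log(p\vee n)/n}\,\sqrt{\varphi+\Ep[\Delta_k]}$, which I would close by a fixed-point argument. The result is $\Ep[\Delta_k]^2\lesssim \varphi K_n^2\,s\log n\,\log^2(s\log n)\,\log(p\vee n)/n$, which the hypothesis $K_n^2 s\log^2 n\,\log^2(s\log n)\,\log(p\vee n)=o(n\kappa^4/\varphi)$ forces to be $o(\kappa^4)$; hence $\Ep[\Delta_k]=o(\kappa^2)$ and, by the concentration step, $\Delta_k=o(\kappa^2)$ with probability approaching one.

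The \textbf{main obstacle} is this sharp expectation bound: a naive union bound over the $\binom{p}{k}$ supports, combined with matrix Bernstein on each fixed submatrix, loses roughly a factor of $s$ and would require $K_n^2 s^2(\cdots)=o(n\kappa^4/\varphi)$, which is stronger than assumed; avoiding this loss is exactly what the Rudelson--Vershynin chaining buys, and verifying that it delivers the stated log powers is the delicate part. Granting $\Delta_k=o(\kappa^2)$, the unnormalized empirical sparse eigenvalues lie in $[\kappa^2-o(\kappa^2),\,\varphi+o(\kappa^2)]$, so after the $(1+o(1))$ normalization correction we obtain, for all large $n$, $\kappa(m)\ge\kappa/2$ and $\phi(m)\le 4\varphi$; finally $\mmu{m}=\sqrt{\phi(m)}/\kappa(m)\le 4\sqrt{\varphi}/\kappa$, completing the proof with probability approaching one.
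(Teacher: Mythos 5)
Your proposal is correct and follows essentially the same route as the paper's proof: the chapter itself gives no argument but defers to \cite{BC-PostLASSO}, whose proof of this lemma rests precisely on the Rudelson--Vershynin symmetrization/chaining bound for sparse eigenvalues of bounded designs, closed by the same fixed-point step, concentration step, and normalization control you describe. With $k=s\log n$ the hypothesis reads $K_n^2\, k \log (n) \log^2(k)\log(p\vee n)=o(n\kappa^4/\varphi)$, which is exactly the budget this chaining route requires (and absorbs the extra $\log n$ your sketch elides), so the final arithmetic giving $\kappa(m)\geqslant \kappa/2$, $\phi(m)\leqslant 4\varphi$, and $\sqrt{\phi(m)}/\kappa(m)\leqslant 4\sqrt{\varphi}/\kappa$ goes through as you claim.
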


For  proofs, see \cite{BC-PostLASSO}; the first lemma builds upon results in \cite{ZhangHuang2006} and the second builds upon results in \cite{RudelsonVershynin2008}.

\section{Analysis of LASSO}\label{Sec:LASSO}

In this section we discuss the rate of convergence of LASSO in the prediction norm;  our exposition follows
mainly \cite{BickelRitovTsybakov2009}.

 The key quantity in the analysis is the following quantity called ``score":
$$
S = S(\beta_0) = 2 \En [ x_i \varepsilon_i].
$$
The score is  the effective ``noise" in the problem. Indeed, defining $
\delta := \widehat \beta - \beta_0,
$
note that by the H\"older's inequality
 \begin{equation}
\begin{array}{rcl} \hat Q(\hat \beta) - \hat Q(\beta_0) - \| \delta\|^2_{2,n} &  = &   -2\En[\varepsilon_i x_i' \delta] - 2\En[r_i x_i' \delta] \\
& \geqslant & - \| S\|_{\infty} \| \delta\|_{1} - 2c_s\|\delta\|_{2,n}.
  \end{array}
 \end{equation}
Intuition suggests that we need to majorize the ``noise term" $\| S\|_{\infty}$ by the penalty level $\lambda/n$, so that the bound on $\| \delta\|^2_{2,n} $ will follow from a relation between the prediction norm $\| \cdot \|_{2,n}$ and the penalization norm $\|\cdot\|_{1}$ on a suitable set. Specifically, for any $c > 1$, it will follow that if $$\lambda \geqslant cn\|S\|_\infty$$ and $\|\delta\|_{2,n} \geqslant 2c_s$, the vector $\delta$ will also satisfy \begin{equation}\label{Def:Dominant}\|\delta_{T^c}\|_1\leqslant \cc\|\delta_T\|_1,\end{equation}
where $\cc = (c+1)/(c-1)$. That is, in this case the error in the regularization norm outside the true support does not exceed $\cc$ times the error in the true support.  (In the case $\|\delta\|_{2,n} \leqslant 2c_s$ the inequality (\ref{Def:Dominant}) may not hold, but the bound $\|\delta\|_{2,n} \leqslant 2c_s$ is already good enough.)

Consequently, the analysis of the rate of convergence of LASSO relies on the so-called restricted eigenvalue $\kappa_\cc$, introduced in \cite{BickelRitovTsybakov2009}, which controls the modulus of continuity between the prediction norm $\| \cdot \|_{2,n}$ and the penalization norm $\|\cdot\|_{1}$ over the set of vectors $\delta \in \Bbb{R}^p$ that satisfy (\ref{Def:Dominant}):
\begin{equation}\label{Def:RE1}\tag{
\text{RE($c$)}} \ \ \hfill  \kappa_\cc : = \min_{\|\delta_{T^c}\|_{1} \leqslant \cc  \| \delta_{T}\|_{1}, \delta_T \neq 0
 } \frac{\sqrt{s}\|\delta\|_{2,n}}{\|\delta_T\|_{1} },\end{equation}
where $\kappa_\cc$ can depend on $n$.
The constant $\kappa_\cc$ is a crucial technical quantity in our analysis and we need to bound it away from zero. In the leading cases that condition RSE holds this will in fact be the case as the sample size grows, namely
\begin{equation}\label{eq:PRIMITIVE2} 1/\kappa_\cc \lesssim 1.
\end{equation}

Indeed, we can bound $\kappa_\cc$ from below by
$$\begin{array}{rcl}
 \kappa_\cc & \geqslant & \displaystyle \max_{m\geqslant 0} \kappa(m) \( 1 -  \mmu{m} \cc\sqrt{s/m}\)
  \geqslant  \kappa(s\log n) \( 1 -  \mmu{s\log n} \cc\sqrt{1/\log n}\) \end{array}$$
by Lemma \ref{Lemma:BoundKAPPA} stated and proved in the appendix. Thus, under the condition RSE, as $n$ grows, $\kappa_\cc$ is bounded away from zero since $\kappa(s\log n)$ is bounded away from zero and $\phi(s\log n)$ is bounded from above as in (\ref{Eq:PRIMITIVE}). Several other primitive assumptions can be used to bound $\kappa_\cc$. We refer the reader to \cite{BickelRitovTsybakov2009} for a further detailed discussion of lower bounds on $\kappa_\cc$.

We next state a non-asymptotic performance bound  for the LASSO estimator.

\begin{theorem}[Non-Asymptotic Bound for LASSO]\label{Thm:Nonparametric}  Under condition ASM, the event $\lambda \geqslant c n\| S\|_{\infty}$ implies
\begin{eqnarray}\label{eq: finite sample bound}
&& \|\widehat \beta - \beta_0\|_{2,n}  \leqslant \(1 + \frac{1}{c}\) \frac{\lambda \sqrt{s}}{n \kappa_\cc} + 2 c_s,
 \end{eqnarray}
 where $c_s = 0$ in the parametric case, and $\cc = (c+1)/(c-1)$.  Thus, if $\lambda \geqslant c n\| S\|_{\infty}$ with probability
 at least $1-\alpha$, as guaranteed by either $X$-independent or $X$-dependent penalty levels (\ref{Def:Xindependent}) and (\ref{Def:Xindependent}), then the bound (\ref{eq: finite sample bound}) occurs
 with probability at least $1-\alpha$.
  \end{theorem}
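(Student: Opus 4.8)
The plan is to run the standard \emph{basic-inequality} argument for the LASSO: convert the optimality of $\widehat\beta$ into a quadratic inequality for $\delta := \widehat\beta - \beta_0$ in the prediction norm, derive the cone condition (\ref{Def:Dominant}) on $\delta$, and then close the inequality using the restricted eigenvalue $\kappa_\cc$ from (\ref{Def:RE1}). Everything is carried out on the deterministic event $\lambda \geqslant cn\|S\|_\infty$, i.e.\ $\|S\|_\infty \leqslant \lambda/(cn)$.

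First I would record the basic inequality. Since $\widehat\beta$ minimizes $\widehat Q(\beta) + (\lambda/n)\|\beta\|_1$, evaluating at $\widehat\beta$ versus $\beta_0$ gives $\widehat Q(\widehat\beta) - \widehat Q(\beta_0) \leqslant (\lambda/n)(\|\beta_0\|_1 - \|\widehat\beta\|_1)$. Combining this with the identity already displayed in the text,
$$\widehat Q(\widehat\beta) - \widehat Q(\beta_0) - \|\delta\|_{2,n}^2 \geqslant -\|S\|_\infty\|\delta\|_1 - 2c_s\|\delta\|_{2,n},$$
yields $\|\delta\|_{2,n}^2 \leqslant (\lambda/n)(\|\beta_0\|_1 - \|\widehat\beta\|_1) + \|S\|_\infty\|\delta\|_1 + 2c_s\|\delta\|_{2,n}$. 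Decomposing along $T$ and using the triangle inequality together with $\supp(\beta_0)=T$ gives $\|\beta_0\|_1 - \|\widehat\beta\|_1 \leqslant \|\delta_T\|_1 - \|\delta_{T^c}\|_1$, while $\|\delta\|_1 = \|\delta_T\|_1 + \|\delta_{T^c}\|_1$. Substituting $\|S\|_\infty \leqslant \lambda/(cn)$ and collecting terms, I obtain
$$\|\delta\|_{2,n}^2 - 2c_s\|\delta\|_{2,n} \leqslant \frac{\lambda}{n}\left[\(1 + \tfrac{1}{c}\)\|\delta_T\|_1 - \(1 - \tfrac{1}{c}\)\|\delta_{T^c}\|_1\right].$$

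Next I split into two cases by comparing $\|\delta\|_{2,n}$ with $2c_s$. If $\|\delta\|_{2,n} \leqslant 2c_s$, the conclusion is immediate because the first term on the right of (\ref{eq: finite sample bound}) is nonnegative. If instead $\|\delta\|_{2,n} > 2c_s$, the left-hand side of the last display is nonnegative, which forces $(1 - 1/c)\|\delta_{T^c}\|_1 \leqslant (1 + 1/c)\|\delta_T\|_1$, i.e.\ the cone condition (\ref{Def:Dominant}) with $\cc = (c+1)/(c-1)$. Here $\delta \neq 0$, and the cone condition then forces $\delta_T \neq 0$, so the restricted-eigenvalue inequality (\ref{Def:RE1}) applies and gives $\|\delta_T\|_1 \leqslant \sqrt{s}\,\|\delta\|_{2,n}/\kappa_\cc$. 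Dropping the nonpositive $-(1-1/c)\|\delta_{T^c}\|_1$ term, substituting this bound, and dividing through by $\|\delta\|_{2,n} > 0$ produces $\|\delta\|_{2,n} \leqslant (1 + 1/c)\lambda\sqrt{s}/(n\kappa_\cc) + 2c_s$, which is exactly (\ref{eq: finite sample bound}); in the parametric case $c_s = 0$ the same chain applies verbatim.

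Finally, the probabilistic statement is immediate from the deterministic one: since $\{\lambda \geqslant cn\|S\|_\infty\}$ implies (\ref{eq: finite sample bound}), monotonicity of probability transfers any lower bound $1-\alpha$ on the former event to the latter. That the penalty choices (\ref{Def:Xindependent}) and (\ref{Def:Xdependent}) deliver $\Pp(\lambda \geqslant cn\|S\|_\infty) \geqslant 1-\alpha$ holds by construction for the $X$-dependent rule, as it is the conditional $(1-\alpha)$-quantile of $n\|S/(2\sigma)\|_\infty$, and for the $X$-independent rule it follows from a union bound over the $p$ coordinates with Gaussian tail bounds, via (\ref{ranking}). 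The one place demanding genuine care is the case split: the cone condition (\ref{Def:Dominant}) is valid only when $\|\delta\|_{2,n} \geqslant 2c_s$, so the approximation error must be quarantined into the ``small $\|\delta\|_{2,n}$'' branch before $\kappa_\cc$ is invoked; conflating the two branches is the principal pitfall.
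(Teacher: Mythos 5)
Your proposal is correct and follows essentially the same route as the paper's own proof: the basic inequality from optimality of $\widehat\beta$, the H\"older/Cauchy--Schwarz lower bound on $\widehat Q(\widehat\beta)-\widehat Q(\beta_0)-\|\delta\|_{2,n}^2$, the case split on $\|\delta\|_{2,n}$ versus $2c_s$ to obtain the cone condition, and the invocation of RE($c$) to close the quadratic inequality. Your explicit remark that $\|\delta\|_{2,n}>2c_s$ forces $\delta_T\neq 0$ (so that RE($c$) is applicable) is a small point of care that the paper leaves implicit, but the argument is the same.
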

The proof of Theorem \ref{Thm:Nonparametric} is given in the appendix.
The theorem  also leads to the following useful asymptotic bounds.

\begin{corollary}[Asymptotic Bound for LASSO]\label{corollary1:rate}
Suppose that conditions ASM and RSE hold. If $\lambda$ is chosen according to either the $X$-independent or $X$-dependent rule specified in  (\ref{Def:Xindependent}) and (\ref{Def:Xdependent}) with $\alpha = o(1)$, $\log(1/\alpha) \lesssim \log p$, or more generally so that
\begin{equation}\label{valid lambda}
\lambda \lesssim_P \sigma\sqrt{n\log p} \ \text{ and } \  \lambda \geqslant c'n\|S\|_\infty  \text { wp  } \to 1,
\end{equation}
for some $c'>1$,  then the following asymptotic bound holds:
 $$\|\widehat \beta -\beta_0 \|_{2,n} \lesssim_P \sigma \sqrt{\frac{s\log p}{n}} + c_s. $$
\end{corollary}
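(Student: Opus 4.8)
The plan is to obtain the asymptotic bound as an essentially mechanical consequence of the non-asymptotic Theorem~\ref{Thm:Nonparametric}. Everything reduces to two ingredients: first, that the penalty choices place us, with probability tending to one, on the event $\{\lambda \geqslant c'n\|S\|_\infty\}$ for a fixed $c'>1$ while keeping $\lambda$ of order $\sigma\sqrt{n\log p}$ (that is, that condition (\ref{valid lambda}) holds); and second, that Condition RSE forces $1/\kappa_\cc \lesssim 1$. Granting these, I would apply (\ref{eq: finite sample bound}) with $c$ replaced by $c'$ and substitute. Accordingly I organize the argument as: (i) verify that the two penalty rules imply (\ref{valid lambda}); (ii) bound $\kappa_\cc$ away from zero under RSE; and (iii) combine.

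For step (i), I treat both rules simultaneously using the ranking inequality (\ref{ranking}), which gives $\Lambda(1-\alpha|X) \leqslant \sqrt{n}\,\Phi^{-1}(1-\alpha/2p) \leqslant \sqrt{2n\log(2p/\alpha)}$, so both choices (\ref{Def:Xindependent}) and (\ref{Def:Xdependent}) satisfy $\lambda \leqslant 2c\sigma\sqrt{2n\log(2p/\alpha)}$. Under $\alpha=o(1)$ and $\log(1/\alpha)\lesssim\log p$ we have $\log(2p/\alpha)=\log 2+\log p+\log(1/\alpha)\lesssim \log p$, hence $\lambda \lesssim \sigma\sqrt{n\log p}$ (deterministically given $X$ in the $X$-dependent case); this is the first half of (\ref{valid lambda}). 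The second half is exactly what the penalties are built to guarantee: $\lambda \geqslant c n\|S\|_\infty$ with probability at least $1-\alpha$, with the same $c>1$. In the $X$-dependent case this is immediate from the definition of $\Lambda(1-\alpha|X)$ as the conditional $(1-\alpha)$-quantile of $n\|S/(2\sigma)\|_\infty$. In the $X$-independent case it follows from the Gaussian tail bound of Lemma~\ref{Lemma:GaussianTail} and a union bound over the $p$ coordinates, using that $\sqrt{n}\,\En[x_{ij}g_i]\sim N(0,1)$ since the columns are normalized so that $\En[x_{ij}^2]=1$. As $\alpha=o(1)$, this probability tends to one, so we may take $c'=c$.

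For step (ii), I invoke the lower bound displayed just before the theorem, $\kappa_\cc \geqslant \kappa(s\log n)\bigl(1 - \mmu{s\log n}\,\cc\sqrt{1/\log n}\bigr)$, which follows from Lemma~\ref{Lemma:BoundKAPPA}; here $\cc=(c'+1)/(c'-1)$ is a fixed constant because $c'>1$ is fixed. Under Condition RSE we have $1/\kappa(s\log n)\lesssim 1$ and $\mmu{s\log n}\lesssim 1$ by (\ref{Eq:PRIMITIVE}), and since $\sqrt{1/\log n}\to 0$ the parenthetical factor is $1-o(1)$; hence $\kappa_\cc$ is bounded away from zero, giving $1/\kappa_\cc\lesssim 1$, i.e.\ (\ref{eq:PRIMITIVE2}). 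Combining, on the event $\{\lambda\geqslant c'n\|S\|_\infty\}$ Theorem~\ref{Thm:Nonparametric} yields $\|\widehat\beta-\beta_0\|_{2,n}\leqslant (1+1/c')\lambda\sqrt{s}/(n\kappa_\cc)+2c_s$; substituting $(1+1/c')\leqslant 2$, $1/\kappa_\cc\lesssim 1$, and $\lambda\lesssim\sigma\sqrt{n\log p}$ gives $\|\widehat\beta-\beta_0\|_{2,n}\lesssim \sigma\sqrt{s\log p/n}+c_s$ on that event, and since the event has probability tending to one this upgrades to the stated $\lesssim_P$ bound.

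The derivation is almost entirely bookkeeping, so the only genuinely substantive point is step (ii): converting Condition RSE, which constrains the sparse eigenvalues $\kappa(m)$ and $\phi(m)$, into a lower bound on the restricted eigenvalue $\kappa_\cc$ that actually appears in the theorem. This is the content of Lemma~\ref{Lemma:BoundKAPPA}, which I would lean on directly. I would also be careful in step (i) to track that the constant $c'$ for which the event $\{\lambda\geqslant c'n\|S\|_\infty\}$ holds coincides with the $c$ used to define the penalty, so that the constant $\cc$ entering $\kappa_\cc$ in step (ii) is the one that matches the theorem.
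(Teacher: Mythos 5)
Your proposal is correct and follows essentially the same route the paper intends: the paper leaves the corollary as an immediate consequence of Theorem \ref{Thm:Nonparametric}, the penalty-validity argument via Lemma \ref{Lemma:GaussianTail} and (\ref{ranking}), and the bound $\kappa_\cc \geqslant \kappa(s\log n)\(1-\mmu{s\log n}\cc\sqrt{1/\log n}\)$ from Lemma \ref{Lemma:BoundKAPPA} stated just before the theorem, which are exactly your steps (i)--(iii). Your care in matching the constant $c'$ to the constant defining $\cc$ is also consistent with how the paper uses these quantities.
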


The non-asymptotic and asymptotic bounds for the empirical risk immediately follow from the triangle inequality:
\begin{equation}\label{Def:NORM_ER_OLD}
\sqrt{\En[ (f_i - x_i'\hat \beta)^2]} \leqslant  \|\widehat \beta - \beta_0\|_{2,n}+c_s.
\end{equation}
Thus, the rate of convergence of $x_i'\widehat \beta$ to $f_i$ coincides with the rate of convergence of the oracle estimator $\sqrt{ c^2_s +  \sigma^2 s/n}$ up to a logarithmic factor of $p$. Nonetheless, the performance of LASSO can be considered optimal in the sense that under general conditions the oracle rate is achievable only up to logarithmic factor of $p$ (see Donoho and Johnstone \cite{DonohoJohnstone1994} and Rigollet and Tsybakov \cite{RigolletTsybakov2010}), apart from very exceptional, stringent cases, in which it is possible to perform perfect or near-perfect model selection.

\section{Model Selection Properties and Sparsity of LASSO}\label{Sec:ModelSelection}

The purpose of this section is, first, to provide bounds (sparsity bounds) on the
dimension of the model selected by LASSO, and, second,
to describe some special cases where the model selected by LASSO perfectly matches the ``true" (oracle) model.

\subsection{Sparsity Bounds}

Although perfect model selection can be seen as unlikely in many designs, sparsity of the LASSO estimator has been documented in a variety of designs. Here we describe the sparsity results obtained in \cite{BC-PostLASSO}.
Let us define
$$\widehat m := |\widehat T\setminus T| = \|\widehat \beta_{T^c}\|_0,$$
which is the number of unnecessary components or regressors selected by LASSO.

\begin{theorem}[Non-Asymptotic Sparsity Bound for LASSO]\label{Thm:Sparsity}
Suppose condition ASM holds. The event $\lambda \geqslant cn\|S\|_\infty$ implies that
$$ \hat m \leqslant s \cdot \left[  \min_{m \in \mathcal{M}}\phi(m\wedge n) \right] \cdot L,$$
where  $\mathcal{M}=\{ m \in \mathbb{N}:
m > s  \phi(m\wedge n)\cdot 2L \}$ and $L = [ 2\cc/\kappa_\cc + 3(\cc+1)nc_s/(\lambda\sqrt{s})]^2$.
\end{theorem}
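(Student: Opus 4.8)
The plan is to read off the first-order (KKT) conditions of the LASSO program~\eqref{Def:LASSOmain}, turn the count $\hat m$ of spurious selected regressors into an inner product against the residual, and then bound that inner product using the maximal sparse eigenvalue $\phi$, the rate bound of Theorem~\ref{Thm:Nonparametric}, and the noise control provided by the event $\lambda\ge cn\|S\|_\infty$. Since $\hat\beta$ minimizes the convex objective $\hat Q(\beta)+\tfrac{\lambda}{n}\|\beta\|_1$, optimality forces the exact equality $|2\En[x_{ij}(y_i-x_i'\hat\beta)]|=\lambda/n$ for every selected coordinate $j\in\hat T$. Writing $\delta=\hat\beta-\beta_0$, using $y_i-x_i'\hat\beta=-x_i'\delta+r_i+\varepsilon_i$, and introducing the sign vector $v\in\RR^p$ with $v_j=\sign(\hat\beta_j)$ for $j\in\hat T\setminus T$ and $v_j=0$ otherwise (so $\|v\|_0=\hat m$, $\|v\|^2=\hat m$, $\|v\|_1=\hat m$), I would sum the optimality equalities against these signs to obtain
\begin{equation*}
\hat m\,\frac{\lambda}{n}\;=\;2\En[(x_i'v)(y_i-x_i'\hat\beta)]\;=\;-2\En[(x_i'v)(x_i'\delta)]+2\En[(x_i'v)r_i]+2\En[(x_i'v)\varepsilon_i].
\end{equation*}

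Next I would bound the three terms. By Cauchy--Schwarz in the prediction norm and the definition~\eqref{Def:RSE2} of $\phi$, the vector $v$ (which has $\hat m$ nonzeros, all outside $T$) obeys $\|v\|_{2,n}\le\sqrt{\phi(\hat m)}\sqrt{\hat m}$, so the first two terms are at most $2\sqrt{\phi(\hat m)\hat m}\,\|\delta\|_{2,n}$ and $2\sqrt{\phi(\hat m)\hat m}\,c_s$. For the noise term, $2\En[(x_i'v)\varepsilon_i]=v'S$, which on the event $\lambda\ge cn\|S\|_\infty$ is at most $\|v\|_1\|S\|_\infty=\hat m\|S\|_\infty\le \hat m\lambda/(cn)$. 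Cancelling $\hat m\lambda/(cn)$ and dividing by $\sqrt{\hat m}\,\lambda/n$ yields $\sqrt{\hat m}\le \tfrac{2}{1-1/c}\sqrt{\phi(\hat m)}\,\tfrac{n}{\lambda}(\|\delta\|_{2,n}+c_s)$. Substituting the LASSO bound $\|\delta\|_{2,n}\le(1+1/c)\lambda\sqrt s/(n\kappa_\cc)+2c_s$ from Theorem~\ref{Thm:Nonparametric} and using the identities $(1+1/c)/(1-1/c)=\cc$ and $6c/(c-1)=3(\cc+1)$ to collect constants, the right-hand side simplifies exactly to $\sqrt{\phi(\hat m)}\,\sqrt{s}\,\sqrt{L}$ with $L=[\,2\cc/\kappa_\cc+3(\cc+1)nc_s/(\lambda\sqrt s)\,]^2$. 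Squaring gives the self-referential bound $\hat m\le s\,\phi(\hat m\wedge n)\,L$, where the truncation at $n$ reflects that at most $n$ directions enter the empirical Gram matrix.

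The remaining and most delicate step is to remove the self-reference. Here I would invoke sublinearity of the maximal sparse eigenvalue, $\phi(\lceil\ell k\rceil)\le\lceil\ell\rceil\phi(k)$ for $\ell\ge1$, together with monotonicity of $\phi$. Fix any $m\in\mathcal M$, i.e.\ $m>2s\phi(m\wedge n)L$, and suppose $\hat m\ge m$. Writing $\ell=(\hat m\wedge n)/(m\wedge n)\ge1$, sublinearity gives $\phi(\hat m\wedge n)\le 2\ell\,\phi(m\wedge n)$, whence the implicit bound yields $\hat m\le \tfrac{\hat m\wedge n}{m\wedge n}\,2s\phi(m\wedge n)L<\tfrac{\hat m\wedge n}{m\wedge n}\,m$; this forces $\hat m<\hat m$ when $m\le n$, and $\hat m<m$ when $m>n$ (so that $\hat m\wedge n=m\wedge n=n$), a contradiction in both cases. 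Hence $\hat m<m$ for every $m\in\mathcal M$, so $\hat m\wedge n\le m\wedge n$ and monotonicity gives $\hat m\le s\phi(\hat m\wedge n)L\le s\phi(m\wedge n)L$; minimizing over $m\in\mathcal M$ delivers the claim. I expect the principal obstacle to lie exactly in this interface: pinning down the constants so the implicit bound reads precisely $s\phi(\hat m\wedge n)L$, and then handling the $m\le n$ versus $m>n$ cases cleanly in the sublinearity contradiction.
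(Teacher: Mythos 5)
Your proof is correct and follows essentially the same route as the paper's: derive a pre-sparsity inequality from the LASSO first-order conditions (the paper takes the Euclidean norm of the KKT subvector over $\hat T\setminus T$, getting $\sqrt{\hat m}\lambda$ directly, while you pair against the sign vector and divide by $\sqrt{\hat m}$ --- an equivalent computation yielding identical constants), combine it with Theorem \ref{Thm:Nonparametric} to reach the self-referential bound $\hat m\leqslant s\,\phi(\hat m)L$, and then remove the self-reference via sublinearity of the sparse eigenvalues exactly as in the paper's Lemma \ref{Lemma:SparseEigenvalueIMP}. Your explicit handling of the $m\leqslant n$ versus $m>n$ cases in the contradiction step is slightly more careful than the paper's (which uses $\hat m \leqslant n$ to collapse these cases), but the argument is the same.
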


Under Conditions ASM and RSE, for $n$ sufficiently large we have $1/\kappa_\cc \lesssim 1$, $c_s \lesssim \sigma\sqrt{s/n}$, and
 $\phi(s\log n) \lesssim 1$; and under the conditions of Corollary \ref{corollary1:rate}, $\lambda \geq c \sigma\sqrt{n}$ with probability
 approaching one. Therefore, we  have  that $L\lesssim_P 1$ and
$$ s\log n > s \phi(s\log n) \cdot 2L, \ \ \text{that is,}  \ \  s\log n \in \mathcal{M} $$
with probability approaching one as $n$ grows.
Therefore, under these conditions we have
$$  \min_{m \in \mathcal{M}}\phi(m\wedge n)\lesssim_P 1.$$

\begin{corollary}[Asymptotic Sparsity Bound for LASSO]\label{corollary2:sparsity}
Under the conditions of Corollary \ref{corollary1:rate},  we have that
\begin{equation}\label{eq: oracle sparsity}
\hat m \lesssim_P s.
 \end{equation}
\end{corollary}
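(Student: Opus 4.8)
The plan is to deduce the asymptotic sparsity bound directly from the non-asymptotic bound of Theorem \ref{Thm:Sparsity}. On the event $\lambda \geqslant cn\|S\|_\infty$ that theorem already gives
\[
\hat m \;\leqslant\; s\cdot\left[\min_{m\in\mathcal{M}}\phi(m\wedge n)\right]\cdot L .
\]
Under the conditions of Corollary \ref{corollary1:rate} this event holds with probability approaching one, since this is exactly the requirement $\lambda \geqslant c'n\|S\|_\infty$ with probability $\to 1$ in \eqref{valid lambda}. It therefore suffices to show that the two remaining factors are $\lesssim_P 1$, namely $L\lesssim_P 1$ and $\min_{m\in\mathcal{M}}\phi(m\wedge n)\lesssim_P 1$; the conclusion then follows by multiplying the deterministic factor $s$ with these two $O_P(1)$ factors.

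First I would control $L = [2\cc/\kappa_\cc + 3(\cc+1)nc_s/(\lambda\sqrt{s})]^2$. The first term inside the square is bounded because, under Condition RSE, the restricted eigenvalue satisfies $1/\kappa_\cc\lesssim 1$ as recorded in \eqref{eq:PRIMITIVE2}. For the second term I would use the approximation-error budget \eqref{Def:ApproxError}, $c_s \lesssim \sigma\sqrt{s/n}$, together with the penalty lower bound $\lambda\geqslant c\sigma\sqrt{n}$ that holds with probability approaching one under the hypotheses of Corollary \ref{corollary1:rate}; this gives
\[
\frac{n c_s}{\lambda\sqrt{s}} \;\lesssim_P\; \frac{n\,\sigma\sqrt{s/n}}{\sigma\sqrt{n}\,\sqrt{s}} \;=\; 1 .
\]
Since $\cc=(c+1)/(c-1)$ is a fixed constant, both terms inside the square are $O_P(1)$, whence $L\lesssim_P 1$.

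The crux is the second factor, which requires exhibiting a single admissible index in $\mathcal{M}=\{m:\ m > s\,\phi(m\wedge n)\,2L\}$ at which $\phi$ is controlled. I would take the candidate $m=s\log n$. Under the sparse-eigenvalue hypotheses one has $s\log n = o(n)$, so $m\wedge n = s\log n$ for large $n$, and Condition RSE gives $\phi(s\log n)\lesssim 1$. Combining this with $L\lesssim_P 1$ and the fact that $\log n\to\infty$, the defining inequality
\[
s\log n \;>\; s\,\phi(s\log n)\cdot 2L
\]
holds with probability approaching one, i.e.\ $s\log n\in\mathcal{M}$ eventually, and therefore $\min_{m\in\mathcal{M}}\phi(m\wedge n)\leqslant \phi(s\log n)\lesssim_P 1$. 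The main obstacle is precisely this membership step: one must verify that $\mathcal{M}$ is nonempty at a point with bounded maximal sparse eigenvalue, and this is where the extra $\log n$ slack — absent from $s$ alone — is essential, since it absorbs the $O_P(1)$ quantity $2L\,\phi(s\log n)$.

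Finally, substituting $L\lesssim_P 1$ and $\min_{m\in\mathcal{M}}\phi(m\wedge n)\lesssim_P 1$ into the non-asymptotic bound yields $\hat m \leqslant s\cdot O_P(1)\cdot O_P(1)$, that is $\hat m\lesssim_P s$, which is \eqref{eq: oracle sparsity}.
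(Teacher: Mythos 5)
Your proposal is correct and follows essentially the same route as the paper: invoke Theorem \ref{Thm:Sparsity} on the event $\lambda \geqslant cn\|S\|_\infty$ (which holds with probability approaching one under Corollary \ref{corollary1:rate}), show $L\lesssim_P 1$ via $1/\kappa_\cc \lesssim 1$, $c_s\lesssim \sigma\sqrt{s/n}$ and $\lambda \geqslant c\sigma\sqrt{n}$, and verify $s\log n \in \mathcal{M}$ using the $\log n$ slack so that $\min_{m\in\mathcal{M}}\phi(m\wedge n)\lesssim_P 1$. This is exactly the argument the paper gives in the discussion preceding the corollary, so nothing further is needed.
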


Thus, using a penalty level that satisfies (\ref{valid lambda})  LASSO's sparsity is asymptotically of the same order as the oracle sparsity, namely
\begin{equation}
\widehat s :=|\widehat T| \leqslant s + \widehat m \lesssim_P s.\end{equation}
We note here that Theorem \ref{Thm:Sparsity} is particularly helpful in designs in which
$ \min_{m \in \mathcal{M}}\phi(m)$ $\ll$ $\phi(n) $. This allows Theorem \ref{Thm:Sparsity} to sharpen the
 sparsity bound of the form $\widehat s \lesssim_P s \phi(n)$ considered  in \cite{BickelRitovTsybakov2009} and \cite{MY2007}.  The bound above is comparable to the bounds in \cite{ZhangHuang2006} in terms of order of magnitude, but Theorem \ref{Thm:Sparsity} requires a smaller penalty level $\lambda$ which also does not depend on the unknown sparse eigenvalues as in \cite{ZhangHuang2006}.


\subsection{Perfect Model Selection Results}\label{Sec:PerfectMS}

The purpose of this section is to describe very special cases
where perfect model selection is possible. Most results in the literature for model selection have been developed for the parametric case only (\cite{MY2007},\cite{Lounici2008}). Below we provide some results for the nonparametric models, which cover the parametric models
as a special case.

\begin{lemma}[Cases with Perfect Model Selection by Thresholded LASSO]\label{Lemma:Crack} Suppose condition ASM holds.
(1) If the non-zero coefficients of the oracle model are well separated from zero, that is
$$ \min_{j\in T} |\beta_{0j}| > \zeta + t, \ \ \ \text{ for some }  t\geqslant  \zeta := \max_{j=1,\ldots,p} |\widehat\beta_j - \beta_{0j}|,$$
then the oracle model is a subset of the selected model,
$$
T:=\supp(\beta_0) \subseteq \widehat T := \supp(\widehat\beta).$$
Moreover the oracle model $T$ can be perfectly selected by applying hard-thresholding of level $t$ to the LASSO estimator $\widehat \beta$:
$$
 T = \left\{ j \in \{1,\ldots, p\} \ : \ |\hat\beta_j| > t \right\}.$$
(2) In particular,  if $\lambda \geqslant c n\|S\|_{\infty}$, then for $\widehat m = |\widehat T\setminus T|=\|\widehat \beta_{T^c}\|_0$ we have
$$ \zeta \leqslant \(1 + \frac{1}{c}\) \frac{\lambda \sqrt{s}}{n \kappa_\cc \kappa(\widehat m)} + \frac{2c_s}{\kappa(\widehat m)}.
$$
 (3) In particular,  if $\lambda \geqslant c n\|S\|_{\infty}$, and there is a constant $U>5\cc$ such that the empirical Gram matrix satisfies $|\En[x_{ij}x_{ik}]|\leqslant 1/[Us]$ for all $1\leqslant j< k\leqslant p$, then



$$\zeta \leqslant \frac{\lambda}{n} \cdot \frac{U + \cc}{U - 5\cc} + \min\left\{\frac{\sigma}{\sqrt{n}}, c_s\right\} +
 \frac{6\cc}{U-5\cc} \frac{c_s}{\sqrt{s}}  + \frac{4\cc}{U}\frac{n}{\lambda} \frac{c_s^2}{s}.$$

\end{lemma}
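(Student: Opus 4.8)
The plan is to handle the three parts in order of increasing difficulty. Part~(1) is purely deterministic. Writing $\zeta = \max_j|\widehat\beta_j - \beta_{0j}|$, for $j \in T$ the triangle inequality gives $|\widehat\beta_j| \geq |\beta_{0j}| - |\widehat\beta_j - \beta_{0j}| > (\zeta+t) - \zeta = t \geq 0$, so $j \in \widehat T$ and $|\widehat\beta_j| > t$; this yields both $T \subseteq \widehat T$ and $T \subseteq \{j : |\widehat\beta_j| > t\}$. For $j \notin T$ we have $\beta_{0j}=0$, so $|\widehat\beta_j| = |\widehat\beta_j - \beta_{0j}| \leq \zeta \leq t$, which excludes $j$ from $\{j : |\widehat\beta_j| > t\}$; the two inclusions give the thresholding identity. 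For part~(2), I would note that $\delta := \widehat\beta - \beta_0$ equals $\widehat\beta$ on $T^c$ and hence has at most $\widehat m$ nonzero coordinates outside $T$, so $\|\delta_{T^c}\|_0 \leq \widehat m$; the definition (\ref{Def:RSE1}) of the minimal restricted sparse eigenvalue then gives $\|\delta\| \leq \|\delta\|_{2,n}/\kappa(\widehat m)$. Since $\zeta = \|\delta\|_\infty \leq \|\delta\|$, inserting the prediction-norm bound (\ref{eq: finite sample bound}) of Theorem~\ref{Thm:Nonparametric} (valid because $\lambda \geq cn\|S\|_\infty$) produces exactly the stated inequality.

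Part~(3) is the substantive one, and it is self-contained in the coherence hypothesis. I would begin from the subgradient optimality conditions for (\ref{Def:LASSOmain}): for each $j$, $2\En[x_{ij}(y_i - x_i'\widehat\beta)] = (\lambda/n)g_j$ with $|g_j|\leq 1$. Substituting $y_i - x_i'\widehat\beta = \varepsilon_i + r_i - x_i'\delta$ and isolating the diagonal term (using the normalization $\En[x_{ij}^2]=1$) yields the fixed-point identity
\begin{equation}\nonumber
\delta_j = -\sum_{k \neq j}\En[x_{ij}x_{ik}]\,\delta_k + \En[x_{ij}r_i] + \tfrac{1}{2}S_j - \tfrac{\lambda}{2n}g_j.
\end{equation}
I would then bound each piece. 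The cross term is controlled by the coherence hypothesis, $\sum_{k\neq j}|\En[x_{ij}x_{ik}]|\,|\delta_k| \leq \|\delta\|_1/(Us)$; the score term by $|S_j|/2 \leq \|S\|_\infty/2 \leq \lambda/(2cn)$, using $\lambda \geq cn\|S\|_\infty$; and the subgradient term by $\lambda/(2n)$. The approximation-error term $\En[x_{ij}r_i]$ I would treat using two properties of the oracle $\beta_0$: the first-order conditions of the oracle least-squares projection force $\En[x_{ij}r_i]=0$ for $j\in T$, whereas the oracle optimality in (\ref{oracle}), compared against the augmented model $T\cup\{j\}$, forces $(\En[x_{ij}r_i])^2 \leq \sigma^2/n$ for $j\in T^c$. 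Together with the Cauchy--Schwarz bound $|\En[x_{ij}r_i]|\leq c_s$ this gives $\max_j|\En[x_{ij}r_i]| \leq \min\{\sigma/\sqrt n, c_s\}$, which is precisely the isolated term in the claimed bound.

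To close, I would control $\|\delta\|_1$ through the cone geometry of Section~\ref{Sec:LASSO}: since $\lambda\geq cn\|S\|_\infty$, one obtains $\|\delta_{T^c}\|_1 \leq \cc\|\delta_T\|_1$ up to an approximation-error correction of order $nc_s^2/\lambda$ (retaining, rather than discarding, the $2c_s\|\delta\|_{2,n}$ slack in the regime $\|\delta\|_{2,n}<2c_s$), and then bound $\|\delta_T\|_1 \leq s\zeta$. Substituting these into $\|\delta\|_1/(Us)$ and taking the maximum over $j$ in the fixed-point identity produces an inequality of the form $\zeta \leq (\text{const}\cdot\cc/U)\,\zeta + (\text{noise and approximation terms})$, which I solve for $\zeta$. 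The main obstacle, and the only genuinely delicate part, is the constant bookkeeping: extracting the exact denominator $U-5\cc$ and the coefficients $6\cc$ and $4\cc$ requires separating the $j\in T$ and $j\in T^c$ contributions, carrying the $nc_s^2/\lambda$ correction through the cone inequality (this generates the quadratic term $(4\cc/U)(n/\lambda)(c_s^2/s)$ and the linear term $(6\cc/(U-5\cc))(c_s/\sqrt s)$), and verifying that the self-referential coefficient of $\zeta$ stays strictly below one, which is exactly what the hypothesis $U>5\cc$ guarantees.
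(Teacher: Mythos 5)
Your parts (1) and (2) coincide with the paper's own proofs (which are stated there as one-liners): the triangle-inequality argument for (1), and for (2) the observation that $\|\widehat\beta_{T^c}\|_0=\widehat m$ plus $\|\delta\|_\infty\leqslant\|\delta\|\leqslant\|\delta\|_{2,n}/\kappa(\widehat m)$ combined with Theorem \ref{Thm:Nonparametric}. Your part (3) also starts exactly as the paper does: the subgradient conditions, the coherence decomposition isolating $\delta_j$ (the paper writes it in matrix form, bounding $\|\En[x_ix_i'\delta]\|_\infty$ and $\|\En[x_ix_i'\delta]-\delta\|_\infty\leqslant\|\delta\|_1/(Us)$), and your treatment of $\En[x_{ij}r_i]$ --- zero on $T$ by the oracle first-order conditions, at most $\sigma/\sqrt{n}$ on $T^c$ by comparison with the augmented model $T\cup\{j\}$, at most $c_s$ by Cauchy--Schwarz --- is precisely the paper's Lemma \ref{lemma:xiri}. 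Where you genuinely diverge is the endgame: the paper bounds $\|\delta\|_1$ by Lemma \ref{Lemma:L1} (an $\ell_1$-rate bound in terms of the restricted eigenvalues $\kappa_\cc$ and $\kappa_{2\cc}$) and then converts the coherence hypothesis into the lower bounds $\kappa_\cc\geqslant\sqrt{1-(1+2\cc)/U}$ and $\kappa_{2\cc}\geqslant\sqrt{1-(1+4\cc)/U}$, which is what produces the exact constants $(U+\cc)/(U-5\cc)$, $6\cc$, $4\cc$. You instead close self-referentially: $\|\delta_T\|_1\leqslant s\zeta$ together with the approximate cone inequality $\|\delta_{T^c}\|_1\leqslant\cc\|\delta_T\|_1+\frac{\cc+1}{2}\frac{n}{\lambda}c_s^2$ (from (\ref{keykey})) gives $\|\delta\|_1\leqslant(1+\cc)s\zeta+\frac{\cc+1}{2}\frac{n}{\lambda}c_s^2$, which fed back into the coherence bound yields $\zeta\leqslant\frac{1+\cc}{U}\zeta+(\text{remainder})$, solvable since $U>5\cc>1+\cc$. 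This route is more elementary and self-contained --- it bypasses restricted eigenvalues entirely, in the spirit of the coherence argument of \cite{Lounici2008} --- and it is sound, with one caveat that you correctly anticipate: if you run the fixed point on the raw maximum over all $j$, the $\min\{\sigma/\sqrt{n},c_s\}$ term gets multiplied by $U/(U-1-\cc)>1$ and the stated inequality does not follow verbatim; you must, as you say, separate $j\in T$ (where $\En[x_{ij}r_i]=0$) from $j\in T^c$, run the fixed point on $\max_{j\in T}|\delta_j|$ alone, and only then add the $\min$ term with coefficient one. Done that way, your constants (denominator $U-1-\cc$ in place of $U-5\cc$, and a single quadratic term $\frac{\cc+1}{2(U-1-\cc)}\frac{n}{\lambda}\frac{c_s^2}{s}$) can be checked to be at least as sharp as the paper's under $U>5\cc$ and $\cc>1$, so your bound implies the lemma; the paper's route buys the literal displayed constants, while yours buys independence from the RE machinery.
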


Thus, we see from parts (1) and (2) that perfect model selection is possible under strong assumptions on the coefficients' separation away from zero.
We also see from part (3) that the strong separation of coefficients can be considerably weakened in exchange for a strong assumption on the maximal pairwise correlation of regressors. These results generalize
to the nonparametric case the results of \cite{Lounici2008} and \cite{MY2007} for the parametric case in which $c_s = 0$.

Finally, the following result on perfect model selection also requires strong assumptions
on separation of coefficients and the empirical Gram matrix. Recall that for a scalar $v$, $\sign(v)=v/|v|$ if $|v|>0$, and $0$ otherwise. If $v$ is a vector, we apply the definition componentwise. Also, given a vector $x \in \RR^p$ and a set $T  \subset \{1,...,p\}$, let
us denote $x_{i}[T]:=\{x_{ij}, j \in T\}$.


\begin{lemma}[Cases with Perfect Model Selection by  LASSO]\label{Lemma:Crack2}
Suppose condition ASM holds. We have perfect model selection for LASSO, $\widehat T = T$, if and only if
\begin{eqnarray*}\label{Def:FirstM}
& \Big \| \En\[x_{i}[T^c]x_{i}[T]'\] \En\[x_{i}[T]x_{i}[T]'\]^{-1} \Big
\{ \En[x_{i}[T]u_i] \\
&   \quad \quad - \frac{\lambda}{2n} \sign(\beta_{0}[T]) \Big \} - \En[ x_{i}[T^c]u_i] \Big \|_\infty \leqslant \frac{\lambda}{2n}, \\
&\label{Def:secondM}\min_{j \in T} \left| \beta_{0j} + \left( \En\[x_{i}[T]x_{i}[T]'\]^{-1} \left\{ \En[x_{i}[T]u_i] - \frac{\lambda}{2n} \sign(\beta_{0}[T]) \right\} \right)_j\right| > 0. \end{eqnarray*}
\end{lemma}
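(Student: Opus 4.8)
The plan is to prove this by the first-order (KKT) optimality conditions for the convex LASSO program — the classical primal–dual witness characterization of support recovery — and to read off the two displayed conditions as dual feasibility off the support and strict activity on the support. Since the objective $\widehat Q(\beta)+(\lambda/n)\|\beta\|_1$ is convex, $\widehat\beta$ is a global minimizer if and only if $0$ lies in its subdifferential: there is a vector $g$ with $g_j=\sign(\widehat\beta_j)$ when $\widehat\beta_j\neq 0$ and $g_j\in[-1,1]$ otherwise, such that $-2\En[x_i(y_i-x_i'\widehat\beta)]+(\lambda/n)g=0$. I would compute $\nabla\widehat Q(\beta)=-2\En[x_i(y_i-x_i'\beta)]$ and, using $y_i-x_i'\widehat\beta=x_i[T]'(\beta_0[T]-\widehat\beta[T])+u_i$ whenever $\widehat\beta$ is supported on $T$, rewrite the stationarity equations blockwise over $T$ and $T^c$.

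First I would treat the ``only if'' direction. Assuming $\widehat T=T$, so that $\widehat\beta[T^c]=0$ and $\widehat\beta_j\neq 0$ for every $j\in T$, the subgradient on $T$ is forced to equal $\sign(\widehat\beta[T])$, yielding the square linear system $\En[x_i[T]x_i[T]'](\widehat\beta[T]-\beta_0[T])=\En[x_i[T]u_i]-(\lambda/2n)\sign(\widehat\beta[T])$. Solving it (the inverse of $\En[x_i[T]x_i[T]']$ is exactly what appears in the statement, so invertibility of the $T$-block is built into the hypotheses) gives the closed form for $\widehat\beta[T]$, and the requirement $\widehat\beta_j\neq 0$ for each $j\in T$ is precisely the second displayed condition. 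Substituting this closed form into the off-support subgradient $\En[x_i[T^c](y_i-x_i'\widehat\beta)]$ and imposing $|g_j|\leq 1$ for $j\in T^c$ eliminates $\widehat\beta[T]$; taking the $\ell_\infty$-norm (and using that it is unchanged under negation of the enclosed vector) reproduces the first displayed condition with bound $\lambda/(2n)$.

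For the ``if'' direction I would run the construction in reverse: define a candidate $\widehat\beta$ supported on $T$ through the same closed form, set $\widehat\beta[T^c]=0$, and verify it is a KKT point. The second displayed condition guarantees the candidate is nonvanishing on $T$, so its subgradient there is a genuine sign vector; the first condition guarantees that the induced off-support subgradient lies in $[-1,1]$. Together these exhibit an admissible $g$, so by convexity $\widehat\beta$ is a global minimizer whose support is exactly $T$.

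The main obstacle is the sign bookkeeping, together with uniqueness. The stationarity equation naturally involves $\sign(\widehat\beta[T])$, whereas the statement is written with $\sign(\beta_0[T])$, so closing the equivalence requires that the reconstructed coefficients keep the sign pattern of $\beta_0$ on $T$; the correct reading is that the fixed sign vector $\sign(\beta_0[T])$ must be consistent coordinatewise with $\sign(\widehat\beta[T])$, and it is the second displayed condition, interpreted as this consistency requirement, that does the work. The second delicate point is that for $p>n$ the LASSO minimizer need not be unique, so ``$\widehat T=T$'' must be made unambiguous: here I would use invertibility of the $T$-block Gram matrix to obtain strong convexity along the selected subspace and argue from the dual-feasibility inequality that no competing minimizer places mass outside $T$, being careful about the boundary behavior permitted by the non-strict bound $\leq\lambda/(2n)$ as opposed to a strict $<\lambda/(2n)$.
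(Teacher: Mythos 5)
Your proposal follows exactly the paper's route: the paper's entire proof of this lemma is the single remark that the result ``follows immediately from the first order optimality conditions'' (citing Wainwright), i.e., precisely the KKT/primal--dual witness argument you spell out blockwise over $T$ and $T^c$. Your elaboration is correct, and the two subtleties you flag --- that the statement's use of $\sign(\beta_{0}[T])$ forces the second displayed condition to be read as sign consistency with $\sign(\widehat\beta[T])$, and the non-uniqueness/boundary issue created by the non-strict bound $\leqslant \lambda/(2n)$ when $p>n$ --- are genuine points that the paper's one-line proof glosses over.
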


The result follows immediately from the first order optimality conditions, see \cite{Wainright2006}. \cite{ZhaoYu2006} and \cite{CandesPlan2009} provides
further primitive sufficient conditions for perfect model selection for the parametric case in which $u_i = \varepsilon_i$. The conditions above might typically require a slightly larger choice of $\lambda$ than (\ref{Def:Xindependent}) and larger separation from zero of the minimal non-zero coefficient $\min_{j\in T}|\beta_{0j}|$.

\section{Analysis of  Post-LASSO}\label{Sec:PostLASSO}

Next we study the rate of convergence of the Post-LASSO estimator. Recall that for $\widehat T = \text{ support } (\widehat \beta)$, the Post-LASSO estimator solves
$$
\widetilde \beta \in \arg\min_{\beta \in \mathbb{R}^p} \ \widehat Q(\beta) : \beta_j = 0 \text{ for each } j \in \widehat T^c.
$$
It is clear that if the model selection works perfectly (as it will under some rather stringent conditions discussed in Section \ref{Sec:PerfectMS}), that is, $T = \widehat T$,
then this estimator is simply the oracle least squares estimator whose properties are well known. However, if the model selection does not work perfectly, that is, $T \not = \widehat T$, the resulting performance of the estimator faces two different perils: First, in the case where LASSO selects a model $\widehat T$ that does not fully include the true model $T$, we have a specification error in the second step. Second, if LASSO includes additional regressors outside $T$, these regressors were not chosen at random and are likely to be spuriously correlated with the disturbances, so we have a data-snooping bias in the second step.

It turns out that despite of the possible poor selection of the model, and the aforementioned perils this causes, the Post-LASSO estimator still performs well theoretically, as shown in \cite{BC-PostLASSO}. Here we provide a proof similar to \cite{BCCH-LASSOIV} which is easier generalize to non-Gaussian cases.

\begin{theorem}[Non-Asymptotic Bound for Post-LASSO]\label{Cor:2StepNonparametric} Suppose condition ASM holds. If $\lambda \geqslant c n\| S\|_{\infty} $
holds with probability at least $1-\alpha$,
 then for any $\gamma > 0$ there is a constant $K_\gamma$ independent of $n$ such that with probability at least $1- \alpha - \gamma$
$$\begin{array}{rl}
 \displaystyle \|\widetilde \beta - \beta_0\|_{2,n} & \leqslant  \displaystyle
\frac{K_{\gamma}\sigma}{\kappa(\widehat m)} \sqrt{\frac{s + \widehat m \log p}{n}} + 2c_s + 1\{T \not\subseteq \widehat T
\}\sqrt{\frac{\lambda \sqrt{s}}{n \kappa_\cc} \cdot \(
\frac{(1+c)\lambda \sqrt{s}}{cn \kappa_\cc}  + 2 c_s\)}.
\end{array}$$
\end{theorem}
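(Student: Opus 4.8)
The plan is to reduce the whole estimate to a single quadratic inequality in $\|\widetilde\beta-\beta_0\|_{2,n}$ coming from the defining optimality of the Post-LASSO fit. Writing $\delta=\widetilde\beta-\beta_0$ and using $y_i-x_i'\beta=u_i-x_i'(\beta-\beta_0)$, I would first record the exact identity
$$\widehat Q(\beta)-\widehat Q(\beta_0)=\|\beta-\beta_0\|_{2,n}^2-2\En[u_i x_i'(\beta-\beta_0)],$$
valid for every $\beta$. The estimator $\widetilde\beta$ minimizes $\widehat Q$ over all vectors supported on $\widehat T$. Since $\widehat\beta$ is feasible for this restricted problem, $\widehat Q(\widetilde\beta)\leq\widehat Q(\widehat\beta)$ always; and when $T\subseteq\widehat T$ the oracle $\beta_0$ is feasible too, so $\widehat Q(\widetilde\beta)\leq\widehat Q(\beta_0)$. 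Combining the two cases gives $\widehat Q(\widetilde\beta)-\widehat Q(\beta_0)\leq 1\{T\not\subseteq\widehat T\}\,[\widehat Q(\widehat\beta)-\widehat Q(\beta_0)]$, which via the identity becomes
$$\|\delta\|_{2,n}^2\leq 2\En[u_i x_i'\delta]+1\{T\not\subseteq\widehat T\}\,[\widehat Q(\widehat\beta)-\widehat Q(\beta_0)].$$

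Second, I would bound the LASSO term. By optimality of $\widehat\beta$ in \eqref{Def:LASSOmain} together with $\lambda\geq cn\|S\|_\infty$, one gets $\widehat Q(\widehat\beta)-\widehat Q(\beta_0)\leq\frac{\lambda}{n}(\|\beta_0\|_1-\|\widehat\beta\|_1)\leq\frac{\lambda}{n}\|\widehat\beta_T-\beta_{0T}\|_1$, where the second step splits $\|\widehat\beta\|_1$ across $T$ and $T^c$. The cone inequality \eqref{Def:Dominant} then lets me invoke the restricted eigenvalue \text{RE}($c$) to obtain $\|\widehat\beta_T-\beta_{0T}\|_1\leq\sqrt{s}\,\|\widehat\beta-\beta_0\|_{2,n}/\kappa_\cc$. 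Feeding in the LASSO rate from Theorem~\ref{Thm:Nonparametric}, namely $\|\widehat\beta-\beta_0\|_{2,n}\leq(1+1/c)\lambda\sqrt{s}/(n\kappa_\cc)+2c_s$, produces exactly $\frac{\lambda\sqrt{s}}{n\kappa_\cc}\big(\frac{(1+c)\lambda\sqrt{s}}{cn\kappa_\cc}+2c_s\big)$, i.e. the square of the indicator term in the statement.

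Third, I would handle the noise $2\En[u_i x_i'\delta]=2\En[\varepsilon_i x_i'\delta]+2\En[r_i x_i'\delta]$. The approximation part is $\leq 2c_s\|\delta\|_{2,n}$ by Cauchy--Schwarz, yielding the $2c_s$ term. For the stochastic part I note that $\delta$ is supported on $T\cup\widehat T$ with $\|\delta_{T^c}\|_0\leq\widehat m$, so $\|\delta\|\leq\|\delta\|_{2,n}/\kappa(\widehat m)$ by \eqref{Def:RSE1}. I then split $\En[\varepsilon_i x_i'\delta]$ into the contribution of the fixed oracle coordinates $T$ and that of the at most $\widehat m$ extra coordinates in $\widehat T\setminus T$. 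On $T$ the score $\En[\varepsilon_i x_i[T]]$ is a nonrandom $s$-dimensional Gaussian vector, so a $\chi^2_s$ concentration bounds its norm by $\lesssim\sigma\sqrt{s/n}$ with probability $1-\gamma$ (this is the source of $\gamma$ and of the constant $K_\gamma$). On the extra coordinates I bound each entry by $\max_j|\En[\varepsilon_i x_{ij}]|=\|S\|_\infty/2\leq\lambda/(2cn)$, which for a penalty of the order $\sigma\sqrt{n\log p}$ gives a contribution $\lesssim\sigma\sqrt{\widehat m\log p/n}$. Together these give $2\En[\varepsilon_i x_i'\delta]\leq\frac{K_\gamma\sigma}{\kappa(\widehat m)}\sqrt{(s+\widehat m\log p)/n}\,\|\delta\|_{2,n}$.

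Collecting the three pieces yields an inequality $\|\delta\|_{2,n}^2\leq a\|\delta\|_{2,n}+b$ with $a=\frac{K_\gamma\sigma}{\kappa(\widehat m)}\sqrt{(s+\widehat m\log p)/n}+2c_s$ and $b$ equal to the squared indicator term; solving it via $\|\delta\|_{2,n}\leq a+\sqrt{b}$ (using $\sqrt{a^2+4b}\leq a+2\sqrt{b}$) delivers the claimed bound, on an event of probability at least $1-\alpha-\gamma$. The main obstacle is the third step: because $\widehat T$ is data-dependent, one cannot treat $\En[\varepsilon_i x_i[\widehat T]]$ as a fixed Gaussian vector, and a naive union bound over all subsets would cost $(s+\widehat m)\log p$ rather than the sharper $s+\widehat m\log p$. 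The resolution---isolating the nonrandom support $T$ (which incurs no $\log p$ penalty) and controlling the random extra regressors through the single event $\lambda\geq cn\|S\|_\infty$ instead of a maximal chi-squared inequality---is exactly what keeps the argument valid beyond the Gaussian case.
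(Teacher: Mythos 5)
Your overall architecture matches the paper's: the quadratic inequality in $\|\widetilde\beta-\beta_0\|_{2,n}$ derived from Post-LASSO optimality, the indicator term coming from the feasibility of $\beta_0$ when $T\subseteq\widehat T$, and the bound on $B_n:=\widehat Q(\widehat\beta)-\widehat Q(\beta_0)$ via LASSO optimality, RE($c$), and Theorem \ref{Thm:Nonparametric}. However, your third step has a genuine gap. You bound the noise on the selected coordinates outside $T$ by $\|S_{T^c}\|_\infty\leqslant\|S\|_\infty\leqslant\lambda/(cn)$ and then convert this into $\sigma\sqrt{\log p/n}$ by invoking ``a penalty of the order $\sigma\sqrt{n\log p}$.'' The theorem makes no such assumption: it is a non-asymptotic statement valid for \emph{any} $\lambda$ satisfying $\lambda\geqslant cn\|S\|_\infty$ with probability $1-\alpha$, and its first term $\frac{K_\gamma\sigma}{\kappa(\widehat m)}\sqrt{(s+\widehat m\log p)/n}$ is $\lambda$-free; the penalty enters only through the indicator term. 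If $\lambda$ is taken much larger than $\sigma\sqrt{n\log p}$ (which the hypothesis permits), your argument produces a linear coefficient of order $\lambda\sqrt{\widehat m}/(n\kappa(\widehat m))$, which can be arbitrarily worse than the claimed one. The paper instead bounds $\|S_{T^c}\|_\infty$ directly and probabilistically: by Lemma \ref{Lemma:GaussianTail} (a tail bound plus a union bound over the $p$ coordinates), $\|S_{T^c}\|_\infty\lesssim\sigma\sqrt{\log p/n}$ with probability at least $1-\tilde\gamma$, independently of the penalty event; multiplied by $\sqrt{\widehat m}\,\|\widetilde\beta-\beta_0\|$ this yields the $\widehat m\log p$ term. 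Your worry that a union bound must cost $(s+\widehat m)\log p$ is misplaced: the union bound is over coordinates for the $\ell_\infty$-norm of the score (cost $\log p$), not over subsets, and the conversion $\|\widetilde\delta_{T^c}\|_1\leqslant\sqrt{\widehat m}\,\|\widetilde\delta\|$ does the rest --- exactly the structure you wanted, but with no reference to $\lambda$. Relatedly, your closing claim inverts the paper's point about non-Gaussianity: what generalizes is the Chebyshev moment bound used for $\|S_T\|$; the $T^c$ part still requires a maximal inequality, not the penalty event.

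A second, more minor, imprecision: to bound $B_n$ you apply RE($c$) to get $\|\widehat\delta_T\|_1\leqslant\sqrt{s}\,\|\widehat\delta\|_{2,n}/\kappa_\cc$ for the LASSO error $\widehat\delta=\widehat\beta-\beta_0$, but the cone condition (\ref{Def:Dominant}) is not guaranteed for $\widehat\delta$ in the nonparametric case (it can fail when $\|\widehat\delta\|_{2,n}<2c_s$). The paper handles this with an explicit case split: if $\|\widehat\delta_{T^c}\|_1>\cc\|\widehat\delta_T\|_1$, then $B_n\leqslant\frac{\lambda}{n}(\|\widehat\delta_T\|_1-\|\widehat\delta_{T^c}\|_1)\leqslant 0$ since $\cc\geqslant1$, so the bound holds trivially; otherwise RE($c$) applies. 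Your argument goes through once this case split is added, but as written the invocation of RE($c$) is unjustified.
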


This theorem provides a performance bound for Post-LASSO as a function of LASSO's sparsity characterized by $\widehat m$, LASSO's rate of convergence, and LASSO's model selection ability.  For common designs this bound implies that Post-LASSO performs at least as well as LASSO, but it can be strictly better in some cases, and has a smaller shrinkage bias by construction.

\begin{corollary}[Asymptotic Bound for Post-LASSO]\label{corollary3:postrate} Suppose conditions of Corollary \ref{corollary1:rate} hold. Then
\begin{equation}
\|\widetilde \beta - \beta_0\|_{2,n} \lesssim_P \ \  \sigma \sqrt{ \frac{ s \log p}{n} } + c_s.
\end{equation}
If further $\widehat m = o(s)$ and $ T \subseteq \widehat T$ with probability approaching one, then

\begin{equation}\label{superior to LASSO}
\|\widetilde \beta - \beta_0\|_{2,n} \lesssim_P \ \ \sigma \[\sqrt{ \frac{ o(s)  \log p}{n} } +  \sqrt{ \frac{ s }{n} } \] + c_s.
 \end{equation}
If $\widehat T = T $ with probability approaching one, then Post-LASSO achieves the oracle performance
\begin{equation}\label{becomes oracle}
\|\widetilde \beta - \beta_0\|_{2,n} \lesssim_P \  \sigma \sqrt{ s/n } + c_s.
\end{equation}
\end{corollary}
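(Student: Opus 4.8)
The plan is to derive all three displays directly from the non-asymptotic bound of Theorem~\ref{Cor:2StepNonparametric} by substituting the order estimates supplied by the hypotheses. Recall that the theorem bounds $\|\widetilde\beta-\beta_0\|_{2,n}$ by the sum of three terms: a ``variance'' term $\frac{K_\gamma\sigma}{\kappa(\widehat m)}\sqrt{(s+\widehat m\log p)/n}$, the approximation term $2c_s$, and the indicator term $1\{T\not\subseteq\widehat T\}\sqrt{\frac{\lambda\sqrt s}{n\kappa_\cc}(\frac{(1+c)\lambda\sqrt s}{cn\kappa_\cc}+2c_s)}$. Under the conditions of Corollary~\ref{corollary1:rate} I have three facts at my disposal: the sparsity bound $\widehat m\lesssim_P s$ from Corollary~\ref{corollary2:sparsity}; the penalty bound $\lambda\lesssim_P\sigma\sqrt{n\log p}$ together with $1/\kappa_\cc\lesssim 1$ from (\ref{eq:PRIMITIVE2}); and $c_s\lesssim\sigma\sqrt{s/n}$ from Condition ASM. The entire proof is the bookkeeping of pushing these through the three terms; fixing an arbitrarily small $\gamma>0$ in Theorem~\ref{Cor:2StepNonparametric} contributes only a constant $K_\gamma$ and an additional probability $\gamma$, so all conclusions will hold in the $\lesssim_P$ sense.

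For the first (general) display the delicate point is replacing $\kappa(\widehat m)$, which depends on the random quantity $\widehat m$, by a deterministic lower bound. Here I would combine $\widehat m\lesssim_P s$ with the monotonicity of $m\mapsto\kappa(m)$ (non-increasing, being a minimum over a growing constraint set): on the event $\{\widehat m\leq Cs\}$, whose probability tends to one, we have $\widehat m\leq s\log n$ for $n$ large, hence $\kappa(\widehat m)\geq\kappa(s\log n)\gtrsim 1$ by Condition RSE, so that $1/\kappa(\widehat m)\lesssim_P 1$. With this and $\widehat m\lesssim_P s$, the variance term is $\lesssim_P\sigma\sqrt{(s+s\log p)/n}\lesssim_P\sigma\sqrt{s\log p/n}$. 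For the indicator term I simply drop the indicator (bound it by one): each of the factors $\frac{\lambda\sqrt s}{n\kappa_\cc}$, $\frac{(1+c)\lambda\sqrt s}{cn\kappa_\cc}$, and $2c_s$ is $\lesssim_P\sigma\sqrt{s\log p/n}$, so the product under the root is $\lesssim_P\sigma^2 s\log p/n$ and its square root is again $\lesssim_P\sigma\sqrt{s\log p/n}$. Adding the $2c_s$ term then yields the first claim.

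The two sharper displays exploit the extra model-selection hypotheses to eliminate the indicator term. If $T\subseteq\widehat T$ with probability approaching one, then $1\{T\not\subseteq\widehat T\}=0$ on an event of probability tending to one, so the third term drops out. Under the additional assumption $\widehat m=o(s)$, subadditivity of the square root gives $\sqrt{(s+\widehat m\log p)/n}\leq\sqrt{s/n}+\sqrt{o(s)\log p/n}$, which combined with $1/\kappa(\widehat m)\lesssim_P 1$ produces exactly (\ref{superior to LASSO}). Finally, if $\widehat T=T$ with probability approaching one, then $\widehat m=0$ on an event of probability tending to one, so $\kappa(\widehat m)=\kappa(0)\gtrsim 1$ and the variance term collapses to $\lesssim_P\sigma\sqrt{s/n}$, giving the oracle bound (\ref{becomes oracle}). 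The only genuine obstacle in the whole argument is the careful $\lesssim_P$ handling of the random eigenvalue $\kappa(\widehat m)$ in the first step; once that lower bound is secured via monotonicity and Condition RSE, everything else is substitution of rates.
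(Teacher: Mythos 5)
Your proposal is correct and follows essentially the same route as the paper, which likewise derives the corollary by substituting the rates $\widehat m\lesssim_P s$, $\lambda\lesssim_P\sigma\sqrt{n\log p}$, $1/\kappa_\cc\lesssim 1$, $1/\kappa(\widehat m)\lesssim_P 1$, and $c_s\lesssim\sigma\sqrt{s/n}$ into the non-asymptotic bound of Theorem~\ref{Cor:2StepNonparametric}, arriving at the intermediate display (\ref{Def: Other Rates 3}). Your explicit justification that $1/\kappa(\widehat m)\lesssim_P 1$ via monotonicity of $m\mapsto\kappa(m)$ and Condition RSE is a detail the paper leaves implicit, but it is the intended argument, not a different one.
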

It is also worth repeating here that finite-sample and asymptotic bounds in  other norms of interest immediately follow by the triangle inequality and by definition of $\kappa(\widehat m)$:
\begin{equation}\label{Def: Other Rates 3}
\sqrt{\En[ (x_i'\widetilde \beta - f_i)^2]} \leqslant  \|\widetilde \beta - \beta_0\|_{2,n}+c_s \ \ \text{and} \ \|\widetilde \beta - \beta_0\| \leqslant \|\widetilde \beta - \beta_0\|_{2,n}/ \kappa(\widehat m).
\end{equation}

The corollary above shows that Post-LASSO achieves the same near-oracle rate as LASSO.
Notably, this occurs despite the fact that LASSO may in general fail to correctly select the oracle model $T$ as a subset, that is $T \not \subseteq \widehat T$.  The intuition for this result is that any components of $T$ that LASSO misses cannot be very important. This corollary also shows that in some special cases Post-LASSO strictly improves upon LASSO's rate. Finally, note that Corollary \ref{corollary3:postrate} follows by observing that under the stated conditions,
\begin{equation}\label{Def: Other Rates 3}
\|\widetilde \beta - \beta_0\|_{2,n} \lesssim_P \ \  \sigma \[\sqrt{ \frac{ \widehat m \log p}{n} } +  \sqrt{ \frac{ s }{n} } + 1\{T \not\subseteq \widehat T  \} \sqrt{ \frac{s \log p}{ n}} \] + c_s.
\end{equation}

\section{Estimation of Noise Level}\label{Sec:DATAlambda}

Our specification of penalty levels (\ref{Def:Xdependent}) and (\ref{Def:Xindependent}) require the practitioner to know  the noise level $\sigma$ of the disturbances or at least estimate it. The purpose of this section is to propose the following method for estimating $\sigma$.  First, we
use a conservative estimate $\hat \sigma^{0} = \sqrt{\text{Var}_n[y_i]}:= \sqrt{\En\[(y_i - \bar y)^2\]}$, where  $\bar y = \En[y_{i}]$,
in place of $\sigma^2$ to obtain the initial LASSO and Post-LASSO estimates, $\widehat \beta$ and $\widetilde \beta$. The estimate
$\hat \sigma^{0}$ is conservative since $\hat \sigma^{0} = \sigma^{0} + o_P(1)$ where $\sigma^0 = \sqrt{\text{Var}[y_i]} \geqslant \sigma$, since $x_i$ contains
a constant by assumption.  Second, we define the refined estimate $\hat \sigma$ as
$$\hat \sigma= \sqrt{\widehat Q(\widehat \beta)}$$ in the case of LASSO and
$$\hat \sigma =  \sqrt{\frac{n}{n-\widehat s} \cdot \widehat Q(\widetilde \beta)} $$
in the case of Post-LASSO.
In the latter case we employ the standard degree-of-freedom correction
with $\widehat s = \|\widetilde \beta\|_0 = |\widehat T|$, and in the former case we need no additional
corrections, since the LASSO estimate is already sufficiently regularized.  Third, we use
the refined estimate $\hat \sigma^2$ to obtain the refined LASSO and Post-LASSO estimates $\widehat \beta$ and $\widetilde \beta$.   We can stop
here or further iterate on the last two steps.

Thus, the algorithm for estimating $\sigma$ using LASSO is as follows:
\begin{algorithm}[Estimation of $\sigma$ using LASSO iterations] Set $\hat \sigma^0 = \sqrt{\text{Var}_n[y_i]}$ and $k = 0$,  and specify a small constant $\nu >0$, the tolerance level, and
a constant $I>1$, the upper bound on the number of iterations.
(1)  Compute the LASSO estimator $\widehat \beta$ based on $\lambda = 2c \hat \sigma^k \Lambda(1-\alpha|X)$.
(2) Set $$\hat \sigma^{k+1} = \sqrt{\widehat Q(\widehat \beta)}.$$
(3) If $| \hat \sigma^{k+1} - \hat \sigma^{k}| \leqslant \nu$ or $k + 1 \geqslant I$, then stop and report $\hat\sigma = \hat \sigma^{k+1}$;
otherwise set $k \leftarrow k+1 $ and go to (1).
\end{algorithm}
And the algorithm for estimating $\sigma$ using Post-LASSO is as follows:
\begin{algorithm}[Estimation of $\sigma$ using Post-LASSO iterations] Set $\hat \sigma^0 = \sqrt{\text{Var}_n[y_i]}$ and $k = 0$,  and specify a small constant $\nu \geqslant 0$, the tolerance level, and
a constant $I>1$, the upper bound on the number of iterations.   (1)  Compute the Post-LASSO estimator $\widetilde \beta$ based on $\lambda = 2c \hat \sigma^k \Lambda(1-\alpha|X)$.
(2) Set $$\hat \sigma^{k+1} = \sqrt{\frac{n}{n-\widehat s} \cdot \widehat Q(\widetilde \beta)},$$ where $\widehat s = \|\widetilde \beta\|_0 = |\widehat T|$.
(3) If $| \hat \sigma^{k+1} - \hat \sigma^{k}| \leqslant \nu$ or $k + 1 \geqslant I$, then stop and report $\hat\sigma = \hat \sigma^{k+1}$; otherwise, set $k \leftarrow k+1 $ and go to (1).
\end{algorithm}
We can also use  $\lambda= 2c \hat \sigma^k \sqrt{n} \Phi^{-1}( 1- \alpha/2p) $ in place of $X$-dependent penalty. We note that using LASSO to estimate $\sigma$ it follows that the sequence $\widehat \sigma^k$, $k\geqslant 2$, is monotone, while using Post-LASSO the estimates $\widehat \sigma^k$, $k\geqslant 1$, can only assume a finite number of different values.

The following theorem shows that these algorithms produce consistent estimates of the noise level, and that
the  LASSO and Post-LASSO estimators based on the resulting data-driven penalty continue to obey the asymptotic bounds
we have derived previously.

\begin{theorem}[Validity of Results with Estimated $\sigma$] \label{theorem: sigma} Suppose conditions ASM and RES hold. Suppose that $\sigma \leqslant \hat \sigma^{0} \lesssim \sigma$ with probability approaching 1 and $s \log p/n \to 0$.  Then $\hat \sigma$ produced by either Algorithm 1 or 2 is consistent
 $$\hat \sigma/\sigma \to_P 1$$
 so that the penalty levels $\lambda= 2c \hat \sigma^k \Lambda(1-\alpha|X)$
and $\lambda= 2c \hat \sigma^k \sqrt{n} \Phi^{-1}( 1- \alpha/2p) $ with $\alpha = o(1)$, and $\log(1/\alpha) \lesssim \log p$, satisfy the condition (\ref{valid lambda}) of Corollary 1, namely
\begin{equation}\label{valid lambda2}
\lambda \lesssim_P \sigma\sqrt{n\log p} \ \text{ and } \  \lambda \geqslant c'n\|S\|_\infty  \text { wp  } \to 1,
\end{equation}
for some $1< c' < c$.  Consequently, the LASSO and Post-LASSO estimators based on this penalty level obey
the conclusions of Corollaries 1, 2, and 3.
\end{theorem}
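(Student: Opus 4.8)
The plan is to prove $\hat\sigma/\sigma\to_P 1$ by induction over the at most $I$ iterations, verifying at each step that the penalty built from the current noise estimate is valid in the sense of (\ref{valid lambda}), applying Corollaries \ref{corollary1:rate}--\ref{corollary3:postrate} to obtain the near-oracle rate, and then feeding that rate back into the refined estimate. Once $\hat\sigma/\sigma\to_P 1$ is established, the stated conclusion is immediate: with $\lambda = 2c\hat\sigma\Lambda(1-\alpha|X)$ (or its $X$-independent analogue $2c\hat\sigma\sqrt{n}\,\Phi^{-1}(1-\alpha/2p)$), the bound (\ref{ranking}) and $\log(1/\alpha)\lesssim\log p$ give the upper estimate $\lambda\lesssim_P\sigma\sqrt{n\log p}$, while the quantile construction gives the lower estimate $\lambda\geq c'n\|S\|_\infty$; thus (\ref{valid lambda2}) holds and Corollaries \ref{corollary1:rate}, \ref{corollary2:sparsity}, \ref{corollary3:postrate} apply verbatim to the data-driven penalty.

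First I would isolate the \emph{valid-penalty implication}. Suppose at a given iteration the current estimate satisfies $\sigma\lesssim\hat\sigma^k\lesssim\sigma$ with probability approaching one. Because $\Lambda(1-\alpha|X)$ is by definition the $(1-\alpha)$-quantile of $n\|S/(2\sigma)\|_\infty$ conditional on $X$, we have $2\sigma\Lambda(1-\alpha|X)\geq n\|S\|_\infty$ with probability at least $1-\alpha$, so $\lambda = 2c\hat\sigma^k\Lambda(1-\alpha|X)\geq c'n\|S\|_\infty$ with probability $\to 1$ for some $c'\in(1,c)$. Together with $\hat\sigma^k\lesssim\sigma$ and (\ref{ranking}) this gives $\lambda\lesssim_P\sigma\sqrt{n\log p}$, so (\ref{valid lambda}) holds. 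Corollary \ref{corollary1:rate} then yields $\|\hat\beta-\beta_0\|_{2,n}\lesssim_P\sigma\sqrt{s\log p/n}+c_s$, which is $o_P(\sigma)$ under $s\log p/n\to 0$, and Corollary \ref{corollary2:sparsity} yields $\hat s\lesssim_P s$.

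Next I would compute the refined estimate. With $\delta=\hat\beta-\beta_0$ and $y_i-x_i'\hat\beta=\varepsilon_i+r_i-x_i'\delta$, expand
$$\hat Q(\hat\beta)=\En[\varepsilon_i^2]+2\En[\varepsilon_i(r_i-x_i'\delta)]+\En[(r_i-x_i'\delta)^2].$$
By the law of large numbers $\En[\varepsilon_i^2]=\sigma^2(1+o_P(1))$, the finite Gaussian fourth moment making $\mathrm{Var}(\En[\varepsilon_i^2])$ of order $\sigma^4/n$. The triangle inequality gives $\sqrt{\En[(r_i-x_i'\delta)^2]}\leq c_s+\|\delta\|_{2,n}=o_P(\sigma)$, and Cauchy--Schwarz bounds the cross term by $\sqrt{\En[\varepsilon_i^2]}\cdot o_P(\sigma)=o_P(\sigma^2)$. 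Hence $\hat Q(\hat\beta)=\sigma^2(1+o_P(1))$ and $\hat\sigma^{k+1}=\sqrt{\hat Q(\hat\beta)}=\sigma(1+o_P(1))$. For Post-LASSO the same expansion with $\tilde\beta$ and the Corollary \ref{corollary3:postrate} rate gives $\hat Q(\tilde\beta)=\sigma^2(1+o_P(1))$, while the degree-of-freedom factor obeys $n/(n-\hat s)\to_P 1$ since $\hat s\lesssim_P s=o(n)$; thus again $\hat\sigma^{k+1}/\sigma\to_P 1$.

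Finally I would close the induction. The base case $k=0$ is the hypothesis $\sigma\leq\hat\sigma^0\lesssim\sigma$, which in particular gives the one-sided bound $\hat\sigma^0\geq\sigma$ needed for a valid penalty at the first iteration; the two preceding paragraphs then give $\hat\sigma^1/\sigma\to_P 1$, hence $\sigma\lesssim\hat\sigma^1\lesssim\sigma$, which drives the next iteration, and so on. Since $I$ is a fixed finite bound, intersecting the $O(I)$ events (each of probability $\to 1$) gives $\max_{1\leq k\leq I}|\hat\sigma^k/\sigma-1|\to_P 0$; as the reported $\hat\sigma$ equals $\hat\sigma^K$ for some $1\leq K\leq I$, we conclude $\hat\sigma/\sigma\to_P 1$. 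The main obstacle is the first iteration, whose validity hinges entirely on the conservativeness $\hat\sigma^0\geq\sigma$: this prevents an anticonservative penalty (and thus an unreliable initial fit), after which the two-sided bound $\hat\sigma^k\asymp\sigma$ propagates itself; the only remaining care is that the $o_P$ controls above hold simultaneously across the finitely many iterations, which is routine given the fixed $I$.
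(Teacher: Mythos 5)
Your proof is correct, and its skeleton coincides with the paper's: induct over the finitely many iterations, use the conservativeness $\hat \sigma^{0} \geqslant \sigma$ to validate the first penalty, invoke Corollaries \ref{corollary1:rate}--\ref{corollary3:postrate} to get the prediction-norm rate and the sparsity bound, and then show the refined estimate satisfies $\hat\sigma^{k+1}/\sigma \to_P 1$, with the degrees-of-freedom factor handled by $\widehat s \lesssim_P s = o(n)$. The one genuinely different step is how you control $\widehat Q - \En[\varepsilon_i^2]$. You expand $\widehat Q(\widehat\beta) = \En[\varepsilon_i^2] + 2\En[\varepsilon_i(r_i - x_i'\delta)] + \En[(r_i - x_i'\delta)^2]$ and kill the cross term by Cauchy--Schwarz, $2\sqrt{\En[\varepsilon_i^2]}\,(c_s + \|\delta\|_{2,n}) = o_P(\sigma^2)$, so that only the prediction-norm rate enters. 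The paper instead bounds the noise cross term by H\"older, $|S'\delta| \leqslant \|S\|_{\infty}\|\delta\|_1$, and must then convert $\|\delta\|_1$ to the prediction norm via $\|\delta\|_1 \lesssim_P \sqrt{\widehat s}\,\|\delta\|_{2,n}$, which requires both Corollary \ref{corollary2:sparsity} and condition RSE at this step. Your route is more elementary and self-contained (sparsity and RSE are needed only for the $n/(n-\widehat s)$ correction in the Post-LASSO variant, exactly as you note), while the paper's route keeps everything expressed through $\|S\|_{\infty}$, the same quantity the penalty is calibrated to. One wording caution: in your ``valid-penalty implication'' the hypothesis $\sigma \lesssim \hat\sigma^k$ is too weak for the lower bound on $\lambda$ --- if, say, $\hat\sigma^k \geqslant \sigma/2$ only, you would get $\lambda \geqslant (c/2)n\|S\|_{\infty}$, which need not exceed $c' n\|S\|_{\infty}$ with $c'>1$. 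What the argument needs, and what your induction actually delivers, is $\hat\sigma^k \geqslant \sigma(1-o_P(1))$ (with the exact inequality $\hat\sigma^0 \geqslant \sigma$ at the base case); you should carry that stronger one-sided statement through the induction rather than the two-sided $\asymp$ bound, as you implicitly do when you conclude $\hat\sigma^k/\sigma \to_P 1$.
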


\section{Monte Carlo Experiments}\label{Sec:MC}

In this section we compare the performance of LASSO,
Post-LASSO, and the ideal oracle linear regression estimators.
The oracle estimator applies ordinary least square to the true model. (Such an
estimator is not available outside Monte Carlo experiments.)

We begin by considering the following regression model:
$$ y = x'\beta_0 + \varepsilon, \  \ \beta_0 =(1,1,1/2,1/3,1/4,1/5,0,\ldots,0)', $$
where $x = (1,z')'$ consists of an intercept and covariates $z \sim N(0,\Sigma)$,
and the errors $\varepsilon$ are independently and identically
distributed $\varepsilon \sim N(0,\sigma^2)$. The dimension $p$ of the covariates $x$ is $500$, the dimension $s$ of the true model is $6$, and the sample size $n$ is $100$.  We set  $\lambda$ according to the $X$-dependent rule with $1-\alpha = 90\%$. The regressors are correlated with $\Sigma_{ij} = \rho^{|i-j|}$ and $\rho = 0.5$. We consider two levels of noise: Design 1 with $\sigma^2 = 1$ (higher level) and Design 2 with $\sigma^2=0.1$ (lower level). For each repetition we draw new vectors $x_i$'s and errors $\varepsilon_i$'s.

We summarize the model selection performance of LASSO in Figures
\ref{Fig:MCfirst01} and \ref{Fig:MCsecond01}.  In the left panels of the figures, we plot the frequencies of the dimensions of the selected model; in the right panels we plot the frequencies of selecting the correct regressors. From the left panels we see that the frequency of selecting a much larger model than the true model is very small in both designs. In the design with a larger noise, as the right panel of Figure \ref{Fig:MCfirst01} shows, LASSO frequently fails to select the entire true model, missing the regressors with small coefficients. However, it almost always includes the most important three regressors with the largest coefficients.  Notably, despite this partial failure of the model selection Post-LASSO still performs well, as we report below.  On the other hand, we see from the right panel of Figure \ref{Fig:MCsecond01} that in the design with a lower noise level LASSO rarely misses any component of the true support. These results confirm the theoretical results that when the non-zero coefficients are well-separated from zero, the penalized estimator should select a model that includes the true model as a subset.  Moreover, these results also confirm the theoretical result of Theorem \ref{Thm:Sparsity}, namely, that the dimension of the selected model should be of the same stochastic order as the dimension of the true model. In summary, the model selection performance of the penalized estimator agrees very well with the theoretical results.

We summarize the results on the performance of estimators in Table \ref{Table:MC}, which records for each estimator $\check \beta$  the mean $\ell_0$-norm $\Ep[ \|\check \beta\|_0 ]$, the norm of the bias $\|\Ep \check \beta - \beta_0\|$ and also the prediction error $\Ep[\En[ |x_i'( \check \beta - \beta_0 )|^2]^{1/2}]$ for recovering the regression function.  As expected, LASSO has a substantial bias. We see that Post-LASSO drastically improves upon the LASSO, particularly in terms of reducing the bias, which also results in a much lower overall prediction error.  Notably, despite that under the higher noise level LASSO frequently fails to recover the true model, the Post-LASSO estimator still performs well.  This is because the penalized estimator always manages to select the most important regressors.   We also see that the prediction error  of the Post-LASSO is within a factor $\sqrt{\log p}$ of the prediction error of the oracle estimator, as we would expect from our theoretical results.   Under the lower noise level, Post-LASSO performs almost identically to the ideal oracle estimator.  We would expect this since in this case LASSO selects the model especially well making Post-LASSO nearly the oracle.

\begin{figure}
\centering
\includegraphics[width=0.49\textwidth]{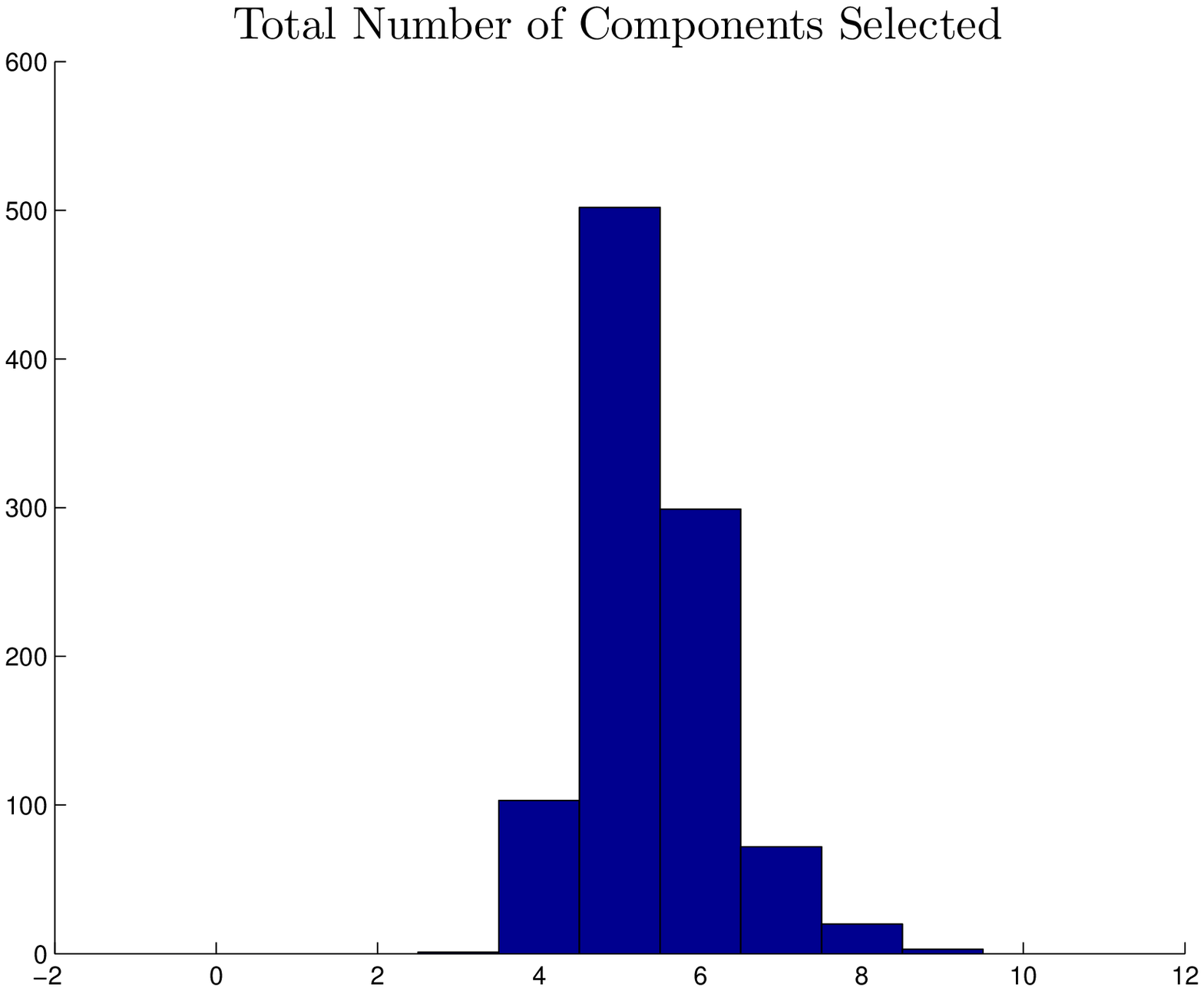}
\includegraphics[width=0.49\textwidth]{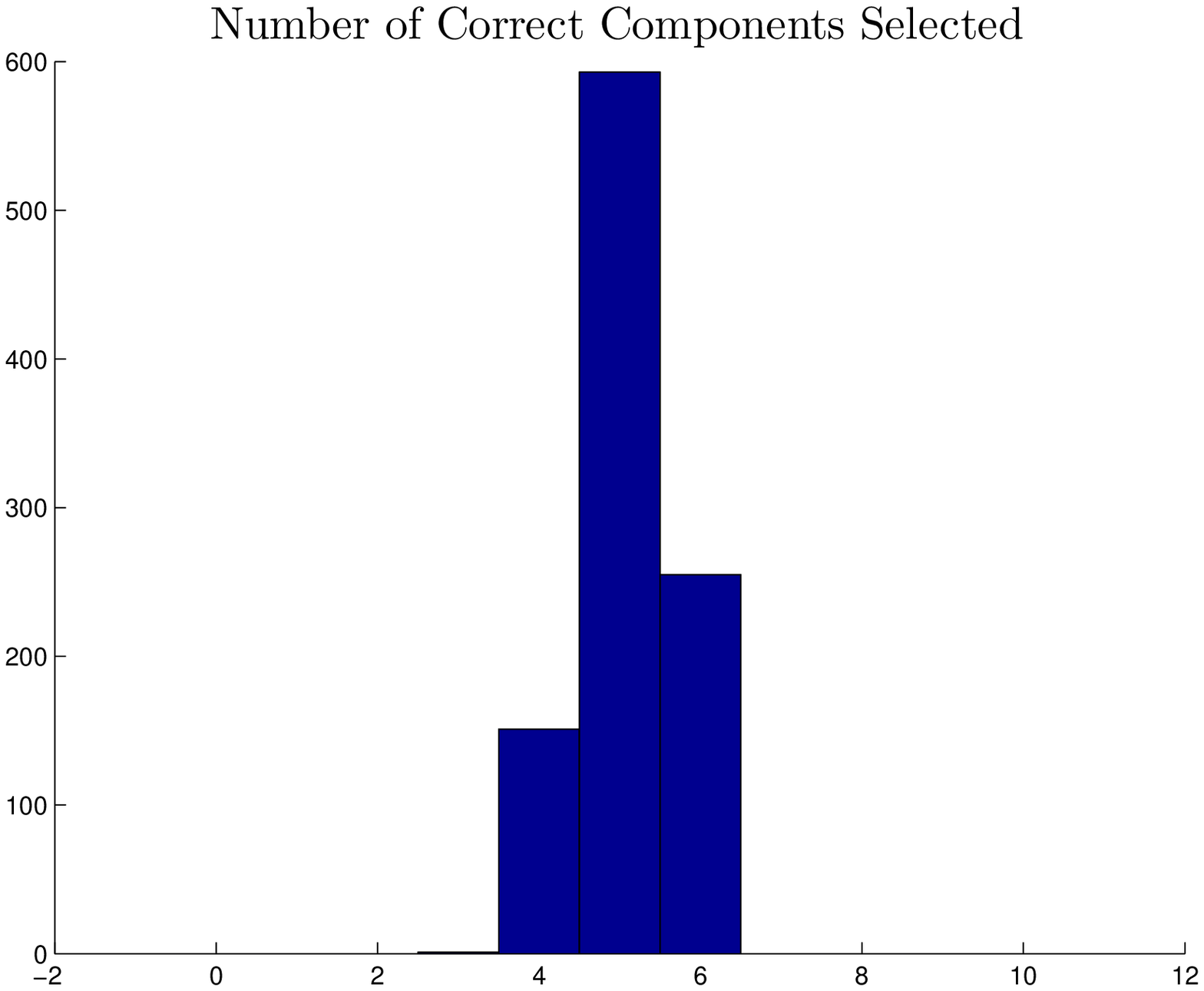}

\caption{ The figure summarizes the covariate selection results
for the design with $\sigma = 1$,  based on  $1000$ Monte
Carlo repetitions. The left panel plots the histogram for the
number of covariates selected by LASSO out of the possible 500 covariates, $|\widehat T|$.
The right panel plots the histogram for the number of significant
covariates selected by LASSO, $|\widehat T \cap T|$; there are in  total $6$ significant
covariates amongst $500$ covariates. The sample size for each repetition was $n=100$.}\label{Fig:MCfirst01}
\end{figure}

\begin{figure}
\centering
\includegraphics[width=0.49\textwidth]{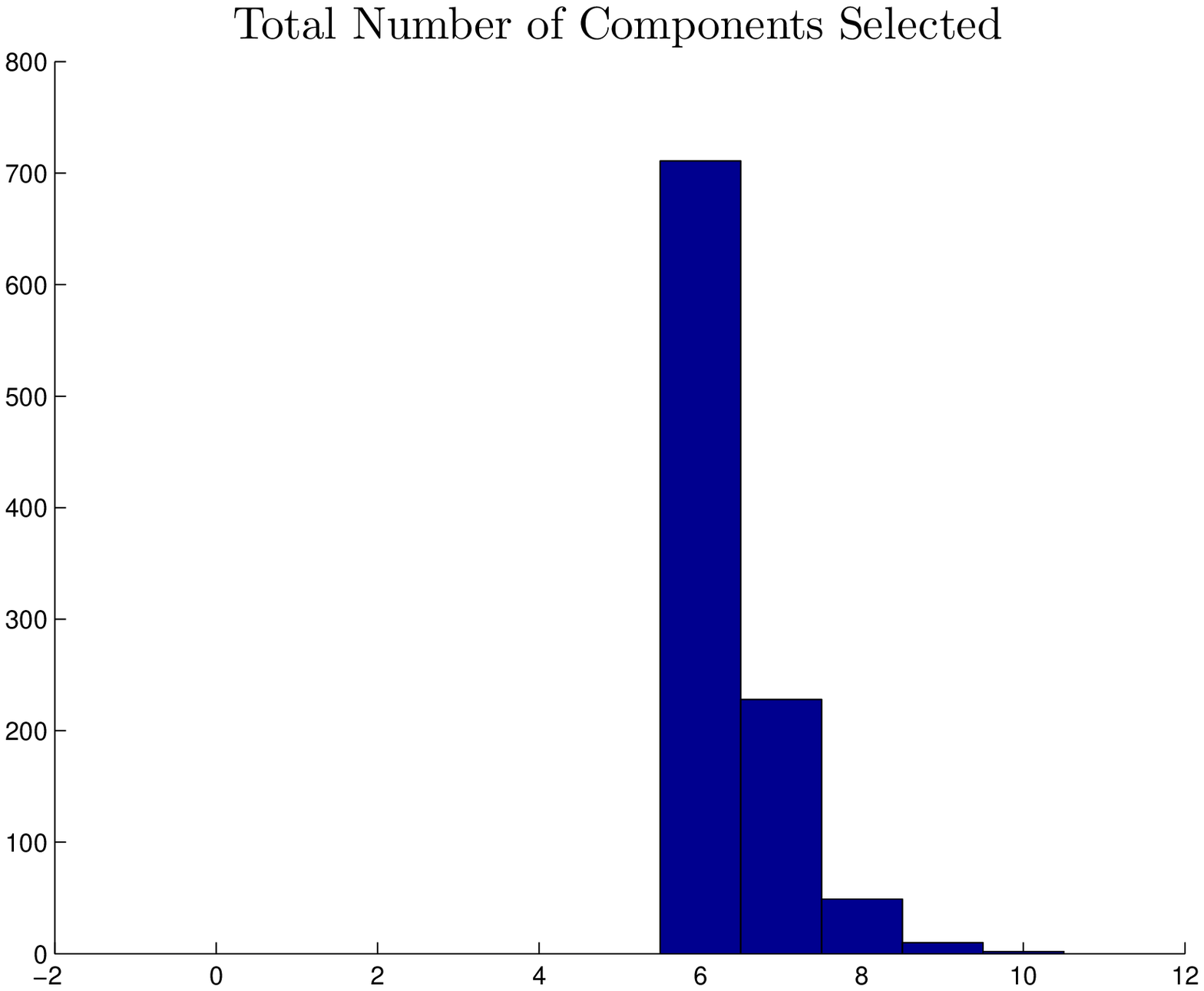}%
\includegraphics[width=0.49\textwidth]{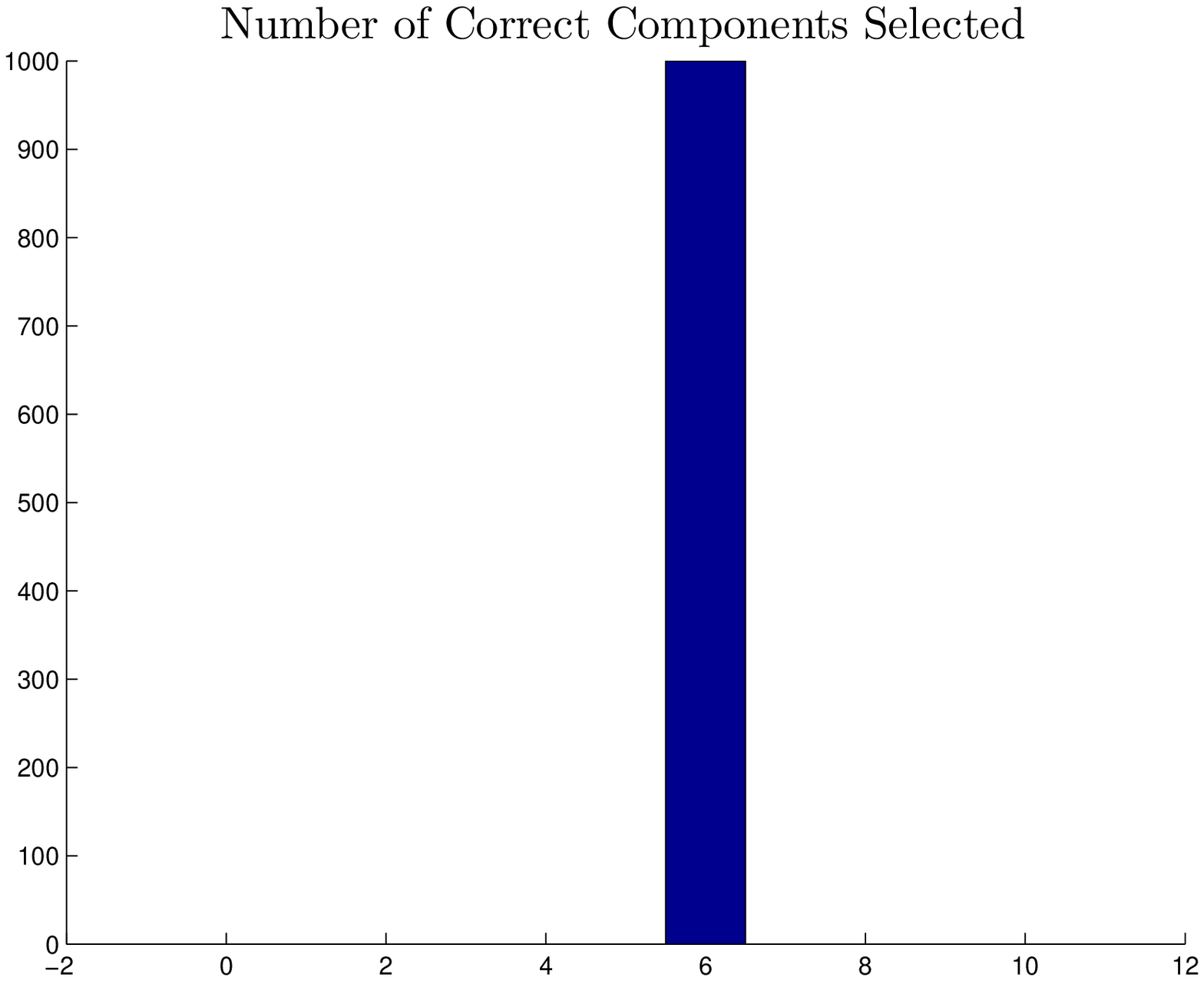}
\caption{The figure summarizes the covariate selection results for
the design with $\sigma^2 = 0.1$,  based on  $1000$ Monte Carlo repetitions.  The left
panel plots the histogram for the number of covariates selected
out of the possible 500 covariates, $|\widehat T|$. The right panel plots the
histogram for the number of significant covariates selected, $|\widehat T \cap T|$; there
are in total $6$ significant covariates amongst $500$ covariates. The sample size for each repetition was $n=100$.}\label{Fig:MCsecond01}
\end{figure}

\begin{table}
\begin{center}
{\bf Monte Carlo Results }

~\\

{\bf Design 1 ($\sigma^2 = 1$)}

\begin{tabular}{lccccc}
\hline
 & Mean $\ell_0$-norm &  Bias  & Prediction Error \\
  \hline
LASSO & 5.41 &  0.4136 & 0.6572 \\
  Post-LASSO & 5.41 &  0.0998 & 0.3298 \\
  Oracle  & 6.00 &  0.0122  & 0.2326 \\
\hline
\\
\end{tabular}

{\bf Design 2 ($\sigma^2 = 0.1$)}

\begin{tabular}{lccccc}
\hline
& Mean $\ell_0$-norm &  Bias  & Prediction Error \\
  \hline
LASSO & 6.3640 &  0.1395 & 0.2183 \\
  Post-LASSO & 6.3640   & 0.0068 & 0.0893 \\
  Oracle  & 6.00 &  0.0039  & 0.0736 \\
\hline
\\
\end{tabular}
\end{center}
\caption{The table displays the average $\ell_0$-norm
of the estimators as well as mean bias and prediction error. We
obtained the results using $ 1000 $ Monte Carlo repetitions for
each design.} \label{Table:MC}
\end{table}

The results above used the true value of $\sigma$ in the choice of $\lambda$. Next we illustrate how $\sigma$ can be estimated in practice. We follow the iterative procedure described in the previous section. In our experiments the tolerance was $10^{-8}$ times the current estimate for $\sigma$, which is typically achieved in less than 15 iterations.

We assess the performance of the iterative procedure under the design with the larger noise, $\sigma^2 = 1$ (similar results hold for $\sigma^2 = 0.1$).
The histograms in Figure \ref{Fig:MCestimated01} show that the model selection properties are very similar to the model selection when $\sigma$ is known.
Figure \ref{Fig:MCestimateed02} displays the distribution of the estimator $\hat \sigma$ of $\sigma$ based on (iterative) Post-LASSO, (iterative) LASSO, and the initial estimator $\hat\sigma^0=\sqrt{{\rm Var}_n[y_i]}$. As we expected, estimator $\hat \sigma$ based on LASSO
produces estimates that are somewhat higher than the true value.  In contrast, the estimator $\hat \sigma$ based on Post-LASSO seems to perform very well in our experiments, giving estimates
$\hat \sigma$ that bunch closely near the true value $\sigma$.

\begin{figure}[!p]
\centering
\includegraphics[width=0.49\textwidth]{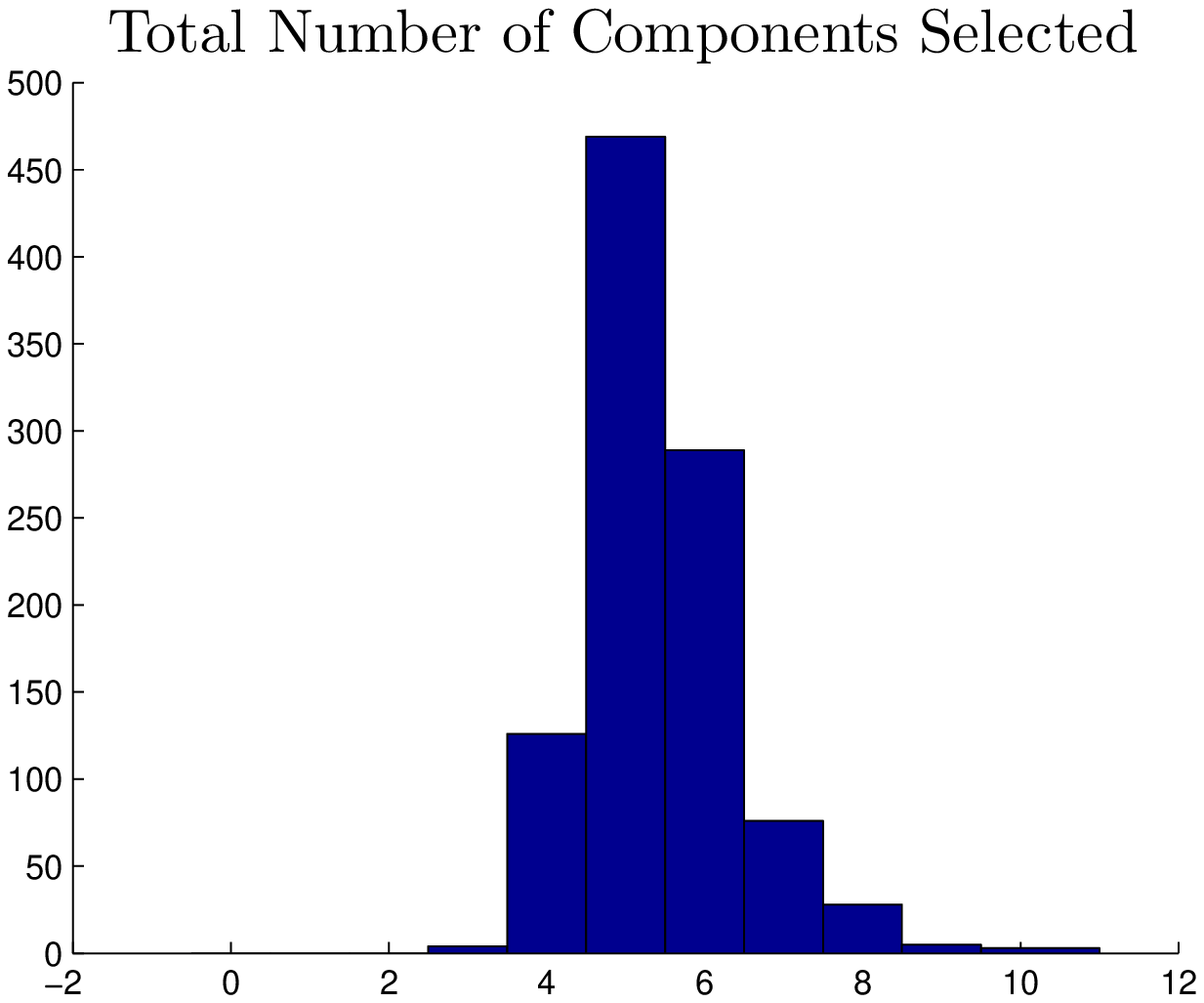}
\includegraphics[width=0.49\textwidth]{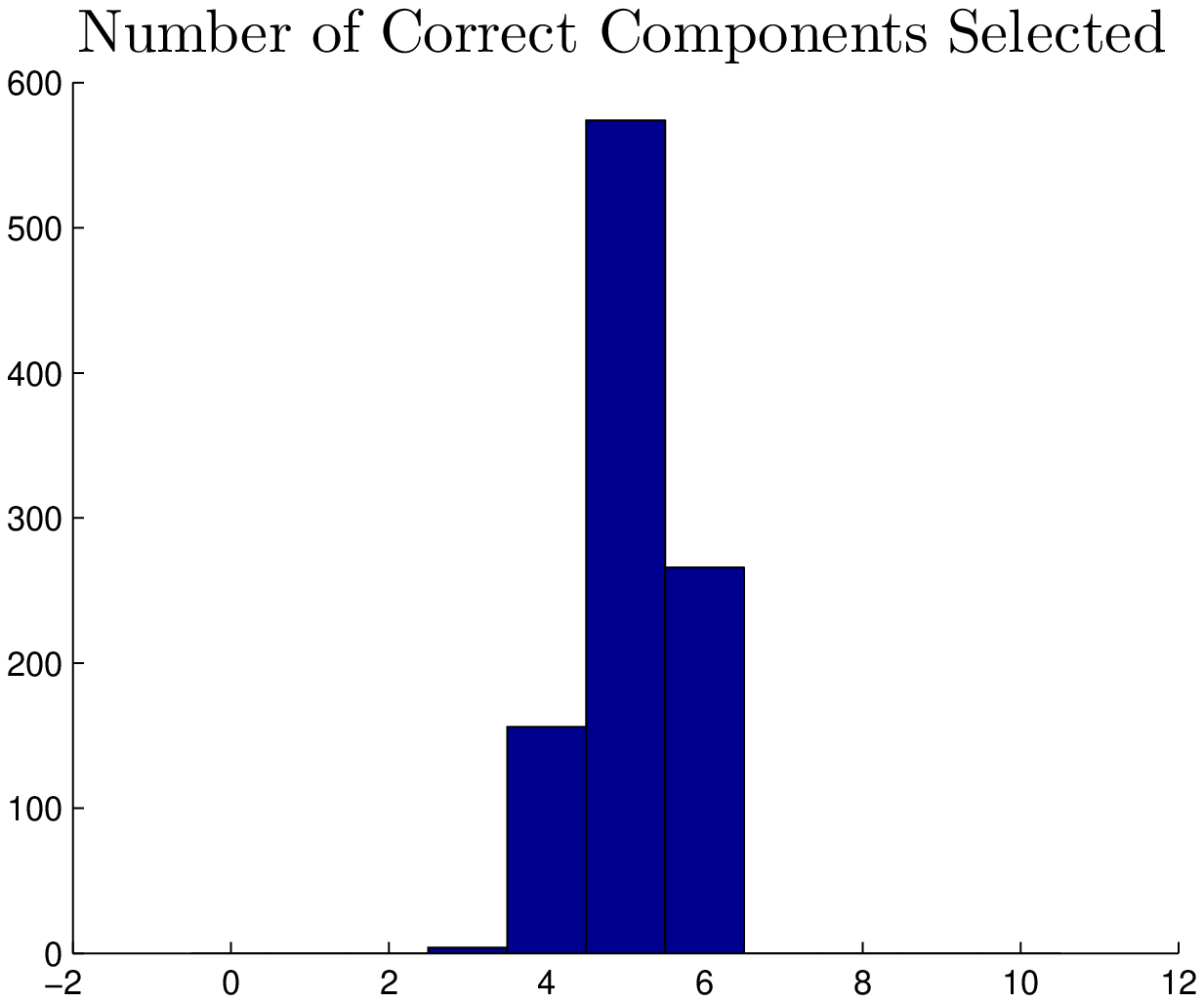}
\caption{The figure summarizes the covariate selection results
for the design with $\sigma = 1$, when $\sigma$ is estimated, based on  $1000$ Monte
Carlo repetitions. The left panel plots the histogram for the
number of covariates selected out of the possible 500 covariates.
The right panel plots the histogram for the number of significant
covariates selected; there are in  total $6$ significant
covariates amongst $500$ covariates. The sample size for each repetition was $n=100$.}\label{Fig:MCestimated01}
\end{figure}

\begin{figure}[!p]
\centering
\includegraphics[width=0.49\textwidth]{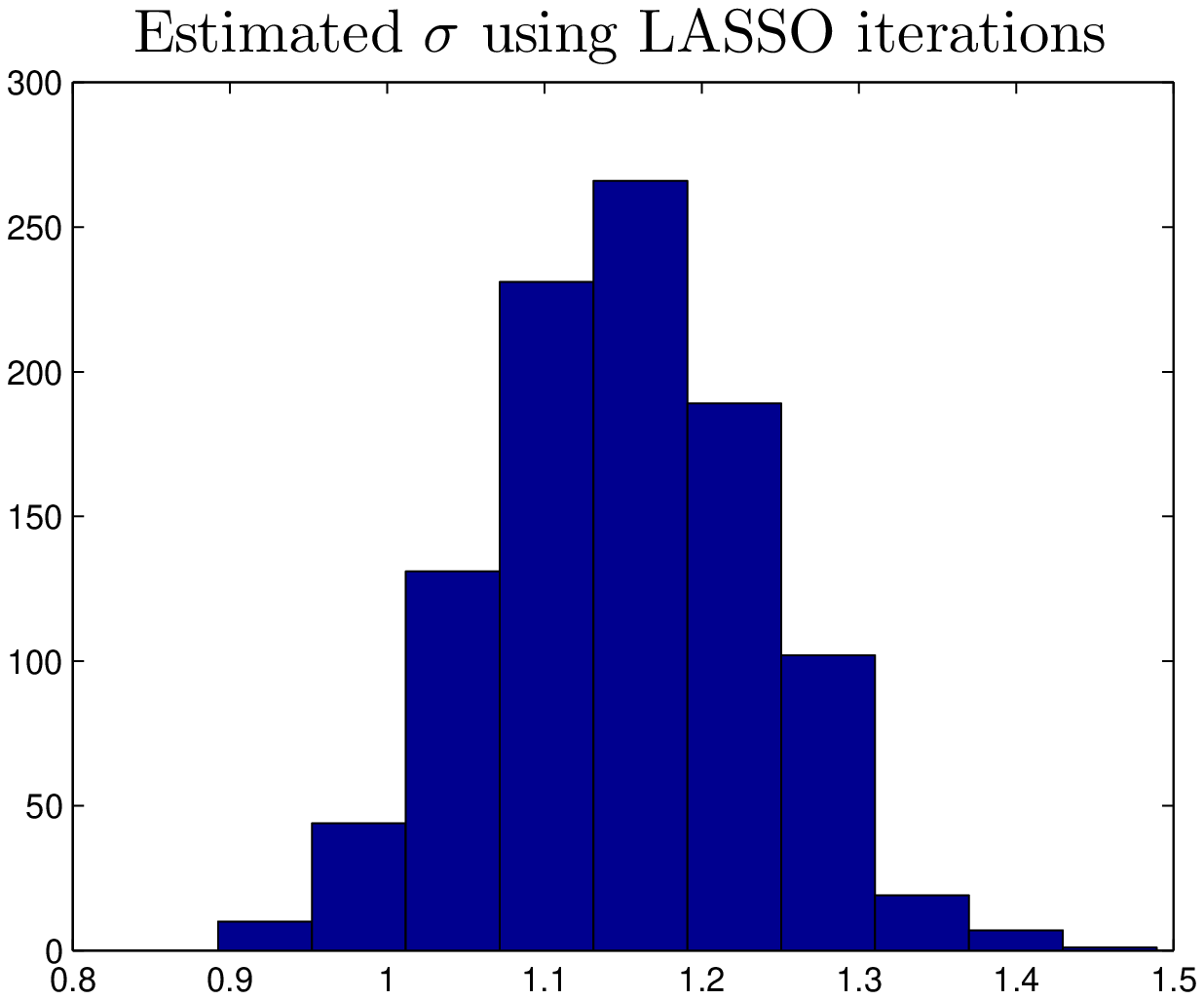}
\includegraphics[width=0.49\textwidth]{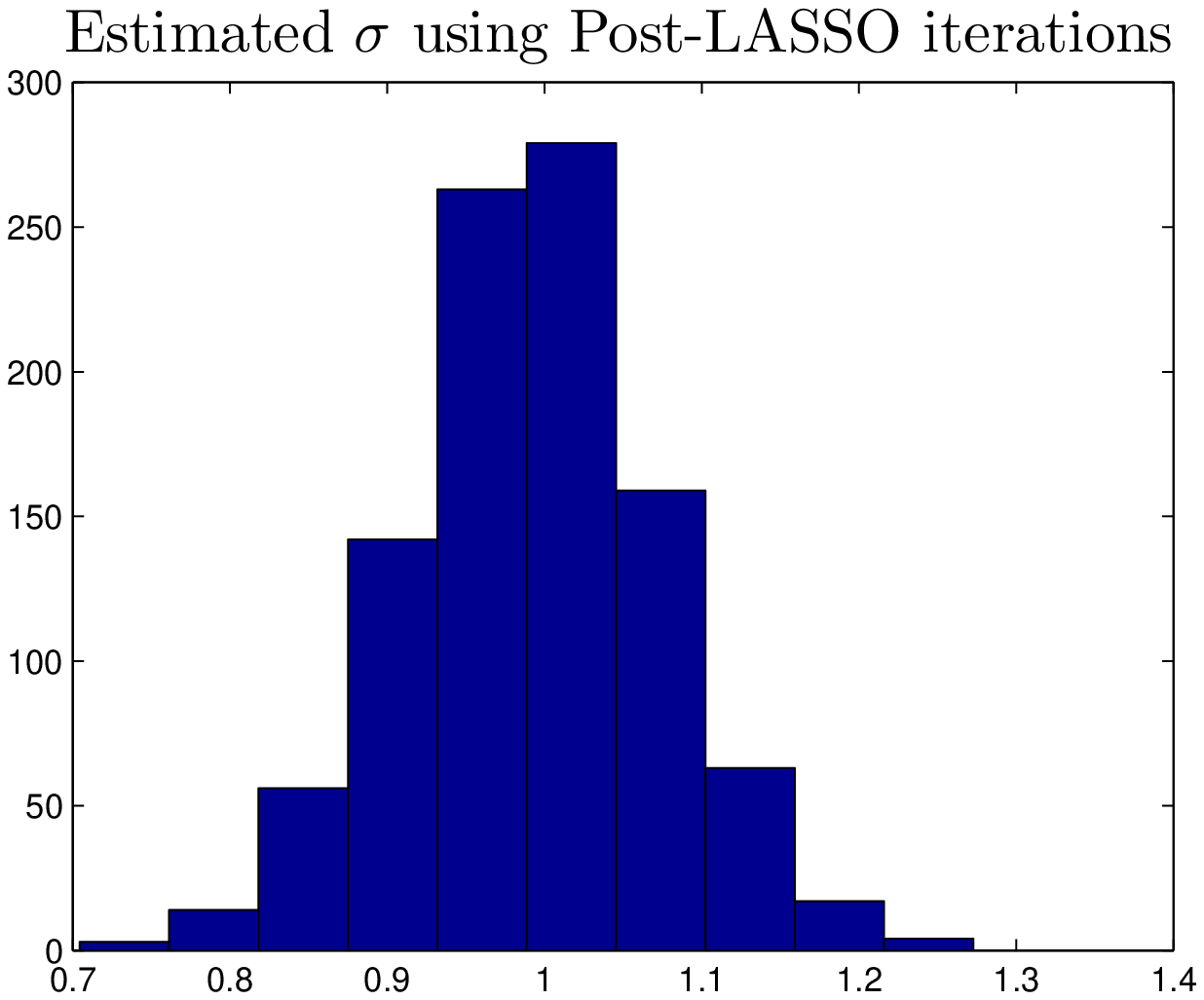}
\includegraphics[width=0.49\textwidth]{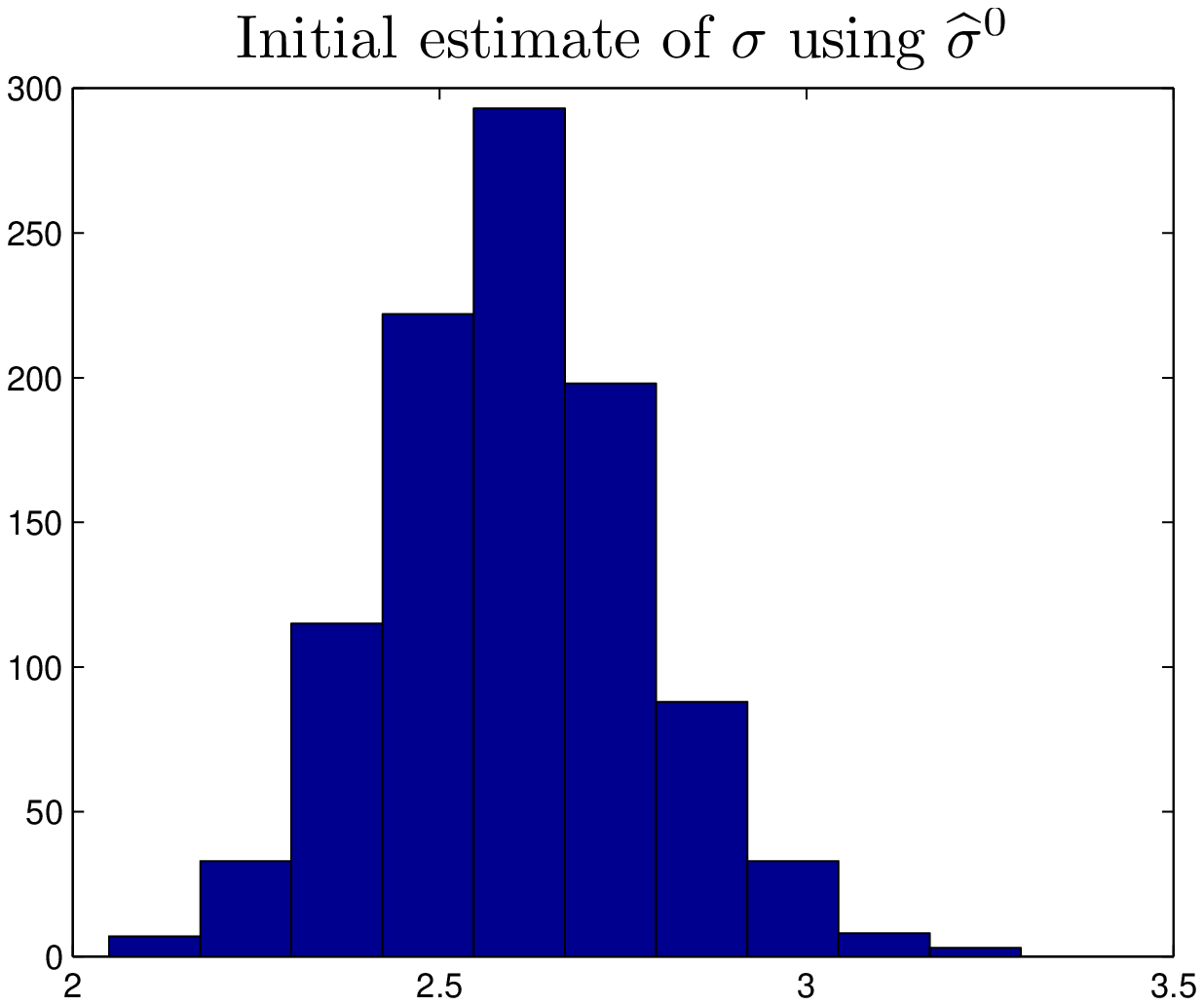}
\caption{The figure displays the distribution of the estimator $\widehat \sigma$ of $\sigma$ based on (iterative) LASSO, (iterative) Post-LASSO, and the conservative initial estimator $\widehat \sigma^0 = \sqrt{{\rm Var}_n[y_i]}$. The plots summarize the estimation performance for the design with $\sigma = 1$, based on  $1000$ Monte Carlo repetitions.}\label{Fig:MCestimateed02}
\end{figure}

\section{Application to Cross-Country Growth Regression}\label{Sec:Growth}

In this section we apply LASSO and Post-LASSO to an international economic growth example. We use the Barro and Lee \cite{BarroLee1994} data consisting of a panel of 138 countries for the period of 1960 to 1985. We consider the national growth rates in GDP per capita as a dependent variable $y$ for the periods 1965-75 and 1975-85.\footnote{The growth rate in GDP over a period from $t_1$ to $t_2$ is commonly defined as $\log (GDP_{t_2}/GDP_{t_1})$.}  In our analysis, we will consider a model with $p=62$ covariates, which allows for a total of $n=90$ complete observations.  Our goal here is to select a subset of these covariates and briefly compare the resulting models to the standard models used in the empirical growth literature (Barro and Sala-i-Martin \cite{BarroSala1995}).

Let us now turn to our empirical results. We performed covariate selection using LASSO, where we used our data-driven choice of penalty level
$\lambda$ in two ways. First we used an upper bound on $\sigma$ being $\widehat \sigma^0$ and decreased the penalty to estimate different models with
$\lambda$, $\lambda/2$, $\lambda/3$, $\lambda/4$, and $\lambda/5$. Second, we applied the iterative procedure described in the previous section to
define $\lambda^{it}$ (which is computed based on $\hat\sigma^{it}$ obtained using the iterative Post-LASSO procedure).

The initial choice of the first approach led us to select no covariates, which is consistent with over-regularization since an upper bound for $\sigma$ was used. We then proceeded to slowly decrease the penalty level in order to allow for some covariates to be selected.   We present the model selection results in Table \ref{Table:Growth}. With the first
relaxation of the choice of $\lambda$, we select the black market exchange rate premium (characterizing trade openness) and a measure of political
instability. With a second relaxation of the choice of $\lambda$ we select an additional set of variables reported in the table. The iterative approach led to a model with only the black market exchange  premium. We refer the reader to \cite{BarroLee1994} and \cite{BarroSala1995} for a complete definition and discussion of each of these variables.

We then proceeded to apply ordinary linear regression to the selected models and we also report the standard confidence intervals for these estimates. Table \ref{Table:CondidenceInterval} shows these results.
We find that in all models with additional selected covariates, the linear regression coefficients on the initial level of GDP is always negative and
the standard confidence intervals do not include zero. We believe that these empirical findings firmly support the hypothesis of (conditional) convergence derived from
the classical Solow-Swan-Ramsey growth model.\footnote{The inferential method used here is actually valid under certain conditions, despite the fact that the model has been selected; this is demonstrated in a work in progress.}
Finally, our findings also agree with and thus support the previous findings reported in  Barro and Sala-i-Martin \cite{BarroSala1995}, which relied on ad-hoc reasoning for covariate selection.

\begin{table}
\begin{center}
{\bf Confidence Intervals after Model Selection \\ for the International Growth Regressions}

\begin{tabular}{ccccc}
\\
\hline Penalization & & \multicolumn{2}{c}{Real GDP per capita (log)} \\ Parameter & & \\ $\lambda=2.7870$ & & Coefficient & $90\%$ Confidence Interval \\
\hline
$\lambda^{it}=2.3662$ & & $-0.0112$      & $[-0.0219,  -0.0007]$\\
$\lambda/2$ & & $-0.0120$      & $[ -0.0225,   -0.0015]$\\
$\lambda/3$ & & $-0.0153$      & $[-0.0261,   -0.0045]$\\
$\lambda/4$ & & $-0.0221$      & $[-0.0346,   -0.0097]$\\
$\lambda/5$ & & $-0.0370$      & $[-0.0556,   -0.0184]$\\
\hline

\end{tabular}\caption{The table above displays the coefficient and a $90\%$ confidence interval associated with each model selected by the
corresponding penalty level. The selected models are displayed in Table \ref{Table:Growth}.}\end{center}\label{Table:CondidenceInterval}
\end{table}


{\small
\begin{table}\begin{center}{ \bf Model Selection Results for the International Growth Regressions}
\renewcommand{\arraystretch}{1}

 \begin{tabular}{ccc}
\hline Penalization & & \\ Parameter & & Real GDP per capita (log) is included in all models \\ 
$\lambda = 2.7870$ &  & Additional Selected Variables \\ \hline \rowcolor[gray]{0.9} $\lambda$ &  & -  \\ $\lambda^{it}$ &  & Black Market Premium (log)  \\ \rowcolor[gray]{0.9} $\lambda/2$  &  &  Black Market Premium (log)  \\ \rowcolor[gray]{0.9}                 &  &  Political Instability \\ $\lambda/3$  &  &  Black Market Premium (log)  \\
                  &  &  Political Instability \\
                  &  & Ratio of nominal government expenditure on defense to nominal GDP\\ 
                  &  & Ratio of import to GDP\\ 
\rowcolor[gray]{0.9}  $\lambda/4$ &  & Black Market Premium (log)  \\ \rowcolor[gray]{0.9}                  &  & Political Instability \\ \rowcolor[gray]{0.9}                  &  & Ratio of nominal government expenditure on defense to nominal GDP\\ 

$\lambda/5$                      &  & Black Market Premium (log)  \\ 
                  &  & Political Instability \\ 
                  &  & Ratio of nominal government expenditure on defense to nominal GDP\\ 
                  &  & Ratio of import to GDP\\ 
                  &  & Exchange rate \\ 
                  &  &  \% of ``secondary school complete" in male population \\ 
                  &  & Terms of trade shock \\ 
                  &  & Measure of tariff restriction  \\ 
                  &  & Infant mortality rate \\ 
                  &  & Ratio of real government ``consumption" net of defense and education\\ 
                  &  & Female gross enrollment ratio for higher education  \\ 
\hline
\end{tabular}\caption{The models selected at various levels of penalty.}\label{Table:Growth}
\end{center}
\end{table}}

\begin{acknowledgement}
We would like to thank Denis Chetverikov and Brigham Fradsen
for thorough proof-reading of several versions of this paper and their
detailed comments that helped us considerably improve the paper.
We also would like to thank  Eric Gautier, Alexandre Tsybakov,  and two anonymous referees for their comments that also helped us considerably improve the chapter.
We would also like to thank the participants of seminars in Cowles Foundation Lecture at the Econometric Society Summer
Meeting, Duke University, Harvard-MIT, and the Stats in the Chateau.
\end{acknowledgement}

\section*{Appendix}

\addcontentsline{toc}{section}{Appendix}

\section{Proofs}

\begin{proof}[Theorem \ref{Thm:Nonparametric}] Proceeding similarly to \cite{BickelRitovTsybakov2009}, by optimality of $\widehat \beta$ we have that
\begin{equation}\label{Def:Opt} \widehat Q(\widehat\beta) - \widehat Q(\beta_0) \leqslant \frac{\lambda}{n}\|\beta_0\|_1 -
\frac{\lambda}{n}\|\widehat\beta\|_1.\end{equation}
To prove the result we make the use of the following relations: for $\delta = \hat \beta - \beta_0$, if $\lambda \geqslant cn\|S\|_{\infty}$
 \begin{eqnarray}\label{Rel:(a)2}
 \hat Q (\hat \beta) - \hat Q(\beta_0) -  \| \delta\|^2_{2,n} & = & - 2 \En [\varepsilon_i x_i'\delta]  - 2 \En [r_i x_i' \delta]\\
& \geqslant &  - \|S\|_{\infty} \| \delta\|_{1}  -  2 c_s \| \delta\|_{2,n} \nonumber \\
 & \geqslant &  - \frac{\lambda}{cn} (\| \delta_T\|_{1} + \| \delta_{T^c}\|_{1})  -  2c_s \| \delta\|_{2,n},
 \end{eqnarray}
 \begin{eqnarray}\label{Rel:(b)2}
 \|\beta_0\|_{1} - \| \hat \beta\|_{1} &= &  \| \beta_{0T}\|_{1} - \|\hat \beta_T\|_{1} - \| \hat \beta_{T^c} \|_{1}
  \leqslant   \| \delta_T\|_{1} - \| \delta_{T^c}\|_{1}.
 \end{eqnarray}
Thus, combining (\ref{Def:Opt}) with (\ref{Rel:(a)2})--(\ref{Rel:(b)2}) implies that
\begin{equation}\label{keykey}
-\frac{\lambda}{cn} (\| \delta_T\|_{1} + \| \delta_{T^c}\|_{1}  ) + \|\delta\|_{2,n}^2 - 2 c_s \| \delta\|_{2,n} \leqslant \frac{\lambda}{n} (
\|\delta_T\|_{1} - \|\delta_{T^c}\|_{1}).
\end{equation}
If $\|\delta\|_{2,n}^2 - 2 c_s \| \delta\|_{2,n}<0$, then we have established the bound in the statement of the theorem. On the other hand, if
$\|\delta\|_{2,n}^2 - 2 c_s \| \delta\|_{2,n} \geqslant 0$ we get
 \begin{equation}\label{domination3}
\|\delta_{T^c}\|_{1} \leqslant \frac{c+1}{c-1}  \cdot \| \delta_{T}\|_{1} = \cc \| \delta_{T}\|_{1},
 \end{equation}
and therefore $\delta$ satisfies the condition to invoke RE($c$). From (\ref{keykey}) and using RE($c$), $\|\delta_T\|_1 \leqslant \sqrt{s}\|\delta\|_{2,n}/\kappa_\cc$, we get
$$
\|\delta\|_{2,n}^2 - 2 c_s \| \delta\|_{2,n} \leqslant  \(1 + \frac{1}{c}\) \frac{\lambda}{n}\|\delta_T\|_{1} \leqslant  \(1 + \frac{1}{c}\)\frac{\lambda\sqrt{s}}{n}\frac{\|\delta\|_{2,n}}{\kappa_\cc} $$ which gives the result on the prediction norm. 
\end{proof}


\begin{lemma}[Empirical pre-sparsity for LASSO]\label{Lemma:SparsityLASSO}
In either the parametric model or the nonparametric model, let $\hat m = |\widehat T \setminus T|$ and $\lambda \geqslant c \cdot n\|S\|_\infty$. We have
$$ \sqrt{\hat m} \leqslant \sqrt{s}\sqrt{\phi(\hat m)} \ 2\cc/\kappa_\cc + 3(\cc+1) \sqrt{\phi(\hat m)} \ nc_s/\lambda,$$ where $c_s = 0$ in the parametric model.
\end{lemma}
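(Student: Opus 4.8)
The plan is to read off the first-order (KKT) optimality conditions of the LASSO program and exploit the fact that every coordinate that LASSO selects outside the true support must saturate its subgradient bound. Concretely, differentiating the criterion $\widehat Q(\beta) + (\lambda/n)\|\beta\|_1$ shows that at the optimum $\widehat\beta$ one has $2\En[x_{ij}(y_i - x_i'\widehat\beta)] = (\lambda/n)\sign(\widehat\beta_j)$ whenever $\widehat\beta_j \neq 0$. Hence for each $j \in \widehat T \setminus T$ the quantity $2\En[x_{ij}(y_i - x_i'\widehat\beta)]$ has absolute value exactly $\lambda/n$, so collecting these $\hat m$ coordinates into a subvector and taking its Euclidean norm yields the exact identity $(\lambda/n)\sqrt{\hat m} = \big(\sum_{j \in \widehat T \setminus T}(2\En[x_{ij}(y_i-x_i'\widehat\beta)])^2\big)^{1/2}$.

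Next I would substitute $y_i = x_i'\beta_0 + r_i + \varepsilon_i$ together with $\delta = \widehat\beta - \beta_0$ to split each coordinate as $2\En[x_{ij}(y_i - x_i'\widehat\beta)] = S_j + 2\En[x_{ij}r_i] - 2\En[x_{ij}x_i'\delta]$, and then bound the Euclidean norm of each of the three resulting $\hat m$-subvectors by the triangle inequality. The score term contributes at most $\sqrt{\hat m}\,\|S\|_\infty \leq \sqrt{\hat m}\,\lambda/(cn)$, using the hypothesis $\lambda \geq cn\|S\|_\infty$. For the remaining two terms I would pass to the dual representation $\big(\sum_{j\in A}a_j^2\big)^{1/2} = \sup_{\|v\|=1,\ \supp(v) \subseteq A}\sum_{j}v_j a_j$ with $A = \widehat T \setminus T$; since any such $v$ is supported on at most $\hat m$ coordinates, all lying in $T^c$, the definition of the maximal restricted sparse eigenvalue gives $\|v\|_{2,n}\leq \sqrt{\phi(\hat m)}$. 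Cauchy--Schwarz in the empirical inner product then bounds the approximation-error term by $2\sqrt{\phi(\hat m)}\,c_s$ and the design term by $2\sqrt{\phi(\hat m)}\,\|\delta\|_{2,n}$.

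Combining these three bounds gives $(\lambda/n)\sqrt{\hat m}(1 - 1/c) \leq 2\sqrt{\phi(\hat m)}(c_s + \|\delta\|_{2,n})$, and rearranging isolates $\sqrt{\hat m}$. The final step is to insert the LASSO prediction-norm bound of Theorem \ref{Thm:Nonparametric}, namely $\|\delta\|_{2,n} \leq (1+1/c)\lambda\sqrt{s}/(n\kappa_\cc) + 2c_s$, which is available precisely because the event $\lambda \geq cn\|S\|_\infty$ is assumed. A short algebraic simplification, using $\cc = (c+1)/(c-1)$ and the identity $\cc + 1 = 2c/(c-1)$, collapses the $\sqrt{s}$ contribution into $2\cc\sqrt{s}\sqrt{\phi(\hat m)}/\kappa_\cc$ and the $c_s$ contribution into $3(\cc+1)\sqrt{\phi(\hat m)}nc_s/\lambda$, which is exactly the claimed inequality (with $c_s=0$ in the parametric case, where $r_i\equiv 0$).

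I would note that $\hat m$ appears on both sides through $\phi(\hat m)$, so this is only a self-referential (pre-sparsity) estimate; turning it into an explicit sparsity bound is deferred to Theorem \ref{Thm:Sparsity}. The step I expect to be the most delicate is the sparse-eigenvalue control of the design and approximation terms, and in particular checking that the competing vectors $v$ genuinely satisfy the support restriction $\|v_{T^c}\|_0 \leq \hat m$ entering the definition of $\phi(\hat m)$, so that the correct eigenvalue index is $\hat m$ rather than $\hat m + s$ or $n$. Everything else is routine bookkeeping with the triangle inequality, Cauchy--Schwarz, and the already-established Theorem \ref{Thm:Nonparametric}.
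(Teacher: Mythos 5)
Your proposal is correct and follows essentially the same route as the paper's own proof: the KKT conditions give $\sqrt{\hat m}\,\lambda/n$ exactly as the Euclidean norm of the selected subvector of $2\En[x_i(y_i-x_i'\widehat\beta)]$, which is then split into score, approximation-error, and design terms, with the latter two controlled by $\sqrt{\phi(\hat m)}$ via the dual/Cauchy--Schwarz argument and the prediction-norm bound of Theorem \ref{Thm:Nonparametric} inserted at the end. Your algebraic bookkeeping, including the identities $(1-1/c)=2/(\cc+1)$ and $(\cc+1)(1+1/c)=2\cc$, and your check that the competing vectors satisfy $\|v_{T^c}\|_0\leqslant \hat m$, reproduce the paper's argument faithfully.
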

\begin{proof}  We have from the optimality conditions that  $$2\En[ x_{ij}(y_i-x_i'\hat\beta)] = \sign(\hat\beta_j)\lambda/n \ \text{ for each } \ j \in \widehat T\setminus T.  $$

Therefore we have for $R=(r_1,\ldots,r_n)'$, $X = [x_1,...,x_n]'$, and $Y = (y_1,...,y_n)'$
$$\begin{array}{rcl}
  \sqrt{\hat m}\lambda   & = &  2\| (X'(Y - X\hat \beta))_{\widehat T\setminus T} \| \\
            & \leqslant &  2\| (X'(Y - R - X \beta_0))_{\widehat T\setminus T} \| + 2\| (X'R)_{\widehat T\setminus T} \| + 2\| (X'X(\beta_0-\hat \beta))_{\widehat T\setminus T} \| \\
            & \leqslant & \sqrt{\hat m}\cdot n\|S\|_{\infty} + 2 n\sqrt{\phi(\hat m)} c_s+  2n \sqrt{ \phi(\hat m )} \|\hat \beta - \beta_0 \|_{2,n},\\
\end{array}
$$
where we used that
$$\begin{array}{rcl}
\| (X'X(\beta_0-\hat \beta))_{\widehat T\setminus T} \|&
\leqslant &\sup_{\|v_{T^c}\|_0\leqslant \hat m, \|v\|\leqslant 1}|
v' X'X(\beta_0-\hat \beta)| \\
&\leqslant &
\sup_{\|v_{T^c}\|_0\leqslant \hat m, \|v\|\leqslant 1}\| v'X'\|\|X(\beta_0-\hat \beta)\| \\
&= & \sup_{\|v_{T^c}\|_0\leqslant \hat
m, \|v\|\leqslant 1}\sqrt{| v'X'Xv|}\|X(\beta_0-\hat \beta)\|\\
& = &
n\sqrt{\phi(\hat m)}\|\beta_0-\hat \beta\|_{2,n},\\
\end{array}$$ and
similarly $\| (X'R)_{\widehat T\setminus T} \| \leqslant
n\sqrt{\phi(\hat m)} c_s$.

Since $\lambda/c \geqslant n\|S\|_\infty$, and by Theorem \ref{Thm:Nonparametric}, $\|\beta_0-\hat\beta\|_{2,n} \leqslant \(1 + \frac{1}{c}\) \frac{\lambda \sqrt{s}}{n \kappa_\cc} + 2c_s$, we have
$$ (1-1/c)\sqrt{\hat m} \leqslant 2\sqrt{\phi(\hat m)}(1+1/c)\sqrt{s}/\kappa_\cc + 6 \sqrt{\phi(\hat m)}\ nc_s/\lambda.$$

The result follows by noting that $(1-1/c) = 2/(\cc+1)$ by definition of $\cc$.
\end{proof}

\begin{proof}[Proof of Theorem \ref{Thm:Sparsity}]
Since  $\lambda \geqslant c\cdot n\|S\|_\infty$ by Lemma \ref{Lemma:SparsityLASSO} we have
$$ \sqrt{\hat m} \leqslant \sqrt{ \phi(\hat m)} \cdot 2\cc\sqrt{s}/\kappa_\cc + 3(\cc+1) \sqrt{\phi(\hat m)} \cdot nc_s/\lambda,$$
which, by letting $L = \left( \frac{2\cc}{\kappa_\cc} + 3(\cc+1)\frac{nc_s}{\lambda\sqrt{s}}\right)^2$, can be rewritten as
 \begin{equation}\label{Eq:Sparsity}\hat m \leqslant s \cdot \phi(\hat m) L.\end{equation}
Note that $\widehat m \leqslant n$ by optimality conditions.  Consider any $M \in \mathcal{M}$, and suppose $\widehat m > M$. Therefore by Lemma \ref{Lemma:SparseEigenvalueIMP} on sublinearity of sparse eigenvalues
$$ \hat m \leqslant s \cdot \ceil{\frac{\hat m}{M}}\phi(M) L.$$ Thus, since $\ceil{k}< 2k$ for any $k\geqslant 1$  we have
$$ M <  s \cdot 2\phi(M) L$$
which violates the condition of $M \in \mathcal{M}$ and $s$. Therefore, we must have $\widehat m \leqslant M$.

In turn, applying (\ref{Eq:Sparsity}) once more with $\widehat m \leqslant (M\wedge n)$ we obtain
 $$ \hat m \leqslant s \cdot \phi(M\wedge n) L.$$
The result follows by minimizing the bound over $M \in \mathcal{M}$.
\end{proof}


\begin{proof}[Lemma \ref{Lemma:Crack}, part (1)]
The result follows immediately from the assumptions.
\end{proof}

\begin{proof}[Lemma \ref{Lemma:Crack}, part (2)]
Let $\widehat m = |\widehat T\setminus T|=\|\widehat \beta_{T^c}\|_0$. Then, note that $\|\delta\|_\infty \leqslant \|\delta\| \leqslant \|\delta\|_{2,n}/\kappa(\widehat m)$. The result
follows from Theorem \ref{Thm:Nonparametric}.
\end{proof}

\begin{proof}[Lemma \ref{Lemma:Crack}, part (3)]
Let $\delta:=\widehat\beta-\beta_0$. Note that by the first order optimality conditions of $\widehat \beta$ and the assumption on $\lambda$
$$\begin{array}{rl}\displaystyle \|\En[x_ix_i'\delta]\|_{\infty} &\displaystyle \leqslant \|\En[x_i(y_i-x_i'\widehat\beta)]\|_{\infty} + \|S/2\|_{\infty} + \|\En[x_ir_i]\|_{\infty} \\
&\displaystyle \leqslant \frac{\lambda}{2n} +
\frac{\lambda}{2cn} + \min\left\{\frac{\sigma}{\sqrt{n}}, c_s\right\}\end{array}$$ since $\|\En[x_ir_i]\|_{\infty} \leqslant \min\left\{\frac{\sigma}{\sqrt{n}}, c_s\right\}$ by Lemma \ref{lemma:xiri} below.

Next let $e_j$ denote the $j$th-canonical direction. Thus, for every $j=1,\ldots,p$ we have
$$\begin{array}{rl}
\displaystyle  | \En[e_j'x_ix_i'\delta] - \delta_j | =
 |\En[e_j'(x_ix_i'-I)\delta]| &\displaystyle  \leqslant  \max_{1\leqslant j,k \leqslant p}|(\En[x_ix_i'-I])_{jk}| \ \|\delta\|_{1}\\
&\displaystyle   \leqslant  \|\delta\|_1/[Us].\end{array} $$
Then, combining the two bounds above and using the triangle inequality we have
$$ \|\delta\|_\infty \leqslant  \|\En[x_ix_i'\delta]\|_{\infty} +  \|\En[x_ix_i'\delta]-\delta\|_{\infty} \leqslant \(1+\frac{1}{c}\)\frac{\lambda}{2n} + \min\left\{\frac{\sigma}{\sqrt{n}}, c_s\right\} +
\frac{\|\delta\|_1}{Us}.$$
The result follows by Lemma \ref{Lemma:L1} to bound $\|\delta\|_1$ and the
arguments in \cite{BickelRitovTsybakov2009} and \cite{Lounici2008} to show that the bound on the correlations imply that for any $C>0$
$$ \kappa_C \geqslant \sqrt{ 1 - s(1+2C)\|\En[x_ix_i'-I]\|_\infty}$$ so that
$\kappa_\cc \geqslant \sqrt{1-[(1+2\cc)/U]}$ and $\kappa_{2\cc} \geqslant \sqrt{1-[(1+4\cc)/U]}$ under this particular design. \end{proof}

\begin{lemma}\label{lemma:xiri}
Under condition ASM, we have that
$$ \| \En[ x_ir_i ]\|_\infty \leqslant \min\left\{\frac{\sigma}{\sqrt{n}}, c_s\right\}.$$
\end{lemma}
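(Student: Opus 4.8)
The plan is to prove the two bounds comprising the minimum separately. The bound $\|\En[x_i r_i]\|_\infty \leqslant c_s$ is immediate: for each coordinate $j$, the Cauchy--Schwarz inequality together with the normalization $\En[x_{ij}^2]=1$ from (\ref{Def:Normalization}) gives
$$|\En[x_{ij} r_i]| \leqslant \sqrt{\En[x_{ij}^2]}\,\sqrt{\En[r_i^2]} = c_s,$$
and maximizing over $j$ yields the claim. This step is routine and uses only the definition $c_s^2 = \En[r_i^2]$.

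The substantive bound is $\|\En[x_i r_i]\|_\infty \leqslant \sigma/\sqrt{n}$, which I would extract from the optimality of $\beta_0$ in the oracle problem (\ref{oracle}). Fix $j$ and perturb $\beta_0$ only in the direction $e_j$. If $j \in T = \supp(\beta_0)$, then $\|\beta_0 + t e_j\|_0 = s$ for all small $t$, so the $\ell_0$ term is locally constant and the first-order condition on the smooth part forces $\En[x_{ij} r_i]=0$. If $j \notin T$, then any $t \neq 0$ raises $\|\beta_0 + t e_j\|_0$ to $s+1$, adding exactly $\sigma^2/n$ to the objective, so optimality of $\beta_0$ gives
$$c_s^2 + \sigma^2 \frac{s}{n} \;\leqslant\; \En[(r_i - t x_{ij})^2] + \sigma^2\frac{s+1}{n} \qquad \text{for all } t.$$
Using $\En[x_{ij}^2]=1$ to expand $\En[(r_i - t x_{ij})^2] = c_s^2 - 2t\,\En[x_{ij}r_i] + t^2$ and cancelling the common terms, this reduces to the statement that the quadratic $t^2 - 2t\,\En[x_{ij}r_i] + \sigma^2/n$ is nonnegative for every $t \in \RR$.

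A quadratic that is nonnegative everywhere has nonpositive discriminant, so $4(\En[x_{ij}r_i])^2 - 4\sigma^2/n \leqslant 0$, i.e. $|\En[x_{ij}r_i]| \leqslant \sigma/\sqrt{n}$. Since this holds for every coordinate (trivially so when $j \in T$, where the correlation vanishes), we obtain $\|\En[x_i r_i]\|_\infty \leqslant \sigma/\sqrt{n}$, and combining with the first bound yields the stated minimum. The only real obstacle is recognizing that the $\ell_0$ penalty of the oracle problem is precisely what disciplines the out-of-support correlations $\En[x_{ij}r_i]$: once the single-coordinate perturbation and its associated penalty increment $\sigma^2/n$ are written down, the bound falls out of a one-dimensional quadratic argument.
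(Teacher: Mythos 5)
Your proof is correct and takes essentially the same route as the paper's: Cauchy--Schwarz with the normalization $\En[x_{ij}^2]=1$ for the $c_s$ bound, the in-support first-order condition giving $\En[x_{ij}r_i]=0$ for $j\in T$, and a single-coordinate perturbation of the oracle problem (\ref{oracle}) with penalty increment $\sigma^2/n$ for $j\in T^c$. Your nonnegative-discriminant step is the same computation as the paper's, which instead plugs the minimizer $t^*=\En[x_{ij}r_i]$ into the quadratic.
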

\begin{proof}
First note that for every $j=1,\ldots, p$, we have $|\En[x_{ij}r_i]|\leqslant \sqrt{\En[x_{ij}^2]\En[r_i^2]}=c_s$.

Next, by definition of $\beta_0$ in (\ref{oracle}), for $j \in T$ we have
$$\En[x_{ij}(f_i-x_i'\beta_0)] = \En[x_{ij}r_i] = 0 $$ since $\beta_0$ is a minimizer over the support of $\beta_0$. For $j \in T^c$ we have that for any $t \in \RR$
 $$ \En [(f_i - x_i'\beta_0)^2] + \sigma^2 \frac{s}{n} \leqslant \En [(f_i - x_i'\beta_0-tx_{ij})^2] + \sigma^2 \frac{s+1}{n}.
$$ Therefore, for any $t\in \RR$ we have
$$ -\sigma^2/n \leqslant  \En [(f_i - x_i'\beta_0-tx_{ij})^2]  - \En [(f_i - x_i'\beta_0)^2] = -2t\En [x_{ij}(f_i - x_i'\beta_0)]+t^2\En[x_{ij}^2].$$
Taking the minimum over $t$ in the right hand side at $t^* = \En [x_{ij}(f_i - x_i'\beta_0)]$ we obtain
$$ -\sigma^2/n \leqslant - (\En [x_{ij}(f_i - x_i'\beta_0)])^2$$ or equivalently, $|\En [x_{ij}(f_i - x_i'\beta_0)]|\leqslant \sigma/\sqrt{n}$.

\end{proof}

\begin{lemma}\label{Lemma:L1}
If $\lambda \geqslant c n\| S\|_{\infty}  $, then for $\cc = (c+1)/(c-1)$ we have
\begin{eqnarray*}
&& \|\widehat \beta - \beta_0\|_{1}  \leqslant  \frac{(1+2\cc)\sqrt{s}}{\kappa_{2\cc}} \left[  \(1 + \frac{1}{c}\) \frac{\lambda \sqrt{s}}{n \kappa_\cc} + 2 c_s\right] + \(1 + \frac{1}{2\cc}\)\frac{2c}{c-1}\frac{n}{\lambda} c_s^2,
 \end{eqnarray*}
 where $c_s = 0$ in the parametric case.
\end{lemma}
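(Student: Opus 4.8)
The plan is to recycle the master inequality already derived in the proof of Theorem~\ref{Thm:Nonparametric} and then split into two regimes according to the relative size of $\|\delta_T\|_1$ and the approximation error. Write $\delta = \widehat\beta - \beta_0$. First I would take inequality (\ref{keykey}) and rearrange it to isolate the out-of-support mass, which yields
$$\frac{\lambda}{n}\left(1-\frac1c\right)\|\delta_{T^c}\|_1 \leqslant \frac{\lambda}{n}\left(1+\frac1c\right)\|\delta_T\|_1 - \|\delta\|_{2,n}^2 + 2c_s\|\delta\|_{2,n}.$$
The key move is to complete the square on the right: since $-\|\delta\|_{2,n}^2 + 2c_s\|\delta\|_{2,n} = c_s^2 - (\|\delta\|_{2,n}-c_s)^2 \leqslant c_s^2$, dividing by $\frac{\lambda}{n}(1-1/c)$ and using $\cc=(c+1)/(c-1)$ together with $1/(1-1/c)=c/(c-1)$ produces the relaxed cone inequality
$$\|\delta_{T^c}\|_1 \leqslant \cc\,\|\delta_T\|_1 + \frac{c}{c-1}\,\frac{n c_s^2}{\lambda}.$$

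Next I would run a dichotomy on whether this extra error term dominates $\cc\|\delta_T\|_1$. In the first case, $\frac{c}{c-1}\frac{nc_s^2}{\lambda}\leqslant \cc\|\delta_T\|_1$, the relaxed cone inequality upgrades to $\|\delta_{T^c}\|_1\leqslant 2\cc\|\delta_T\|_1$, so $\delta$ lies in the cone defining RE($2\cc$) and I may invoke $\|\delta_T\|_1\leqslant \sqrt{s}\,\|\delta\|_{2,n}/\kappa_{2\cc}$. Then $\|\delta\|_1 = \|\delta_T\|_1+\|\delta_{T^c}\|_1 \leqslant (1+2\cc)\|\delta_T\|_1 \leqslant (1+2\cc)\sqrt{s}\,\|\delta\|_{2,n}/\kappa_{2\cc}$, and substituting the prediction-norm bound of Theorem~\ref{Thm:Nonparametric} reproduces exactly the first term of the claim. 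In the complementary case, $\cc\|\delta_T\|_1 < \frac{c}{c-1}\frac{nc_s^2}{\lambda}$, the quantity $\|\delta_T\|_1$ is itself controlled by the error term; plugging $\|\delta_T\|_1 < \frac1\cc\frac{c}{c-1}\frac{nc_s^2}{\lambda}$ into $\|\delta\|_1\leqslant (1+\cc)\|\delta_T\|_1 + \frac{c}{c-1}\frac{nc_s^2}{\lambda}$ gives $\|\delta\|_1\leqslant (2+\frac1\cc)\frac{c}{c-1}\frac{nc_s^2}{\lambda}$, which is the second term since $(1+\frac1{2\cc})\frac{2c}{c-1}=(2+\frac1\cc)\frac{c}{c-1}$. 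As both terms are nonnegative, the stated bound (their sum) holds in either case.

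The hard part is not any single estimate but making the constants line up, and choosing the threshold correctly. The completing-the-square bound $-\|\delta\|_{2,n}^2+2c_s\|\delta\|_{2,n}\leqslant c_s^2$ is what lets the analysis proceed even when $\delta$ fails the strict cone condition $\|\delta_{T^c}\|_1\leqslant \cc\|\delta_T\|_1$ underlying Theorem~\ref{Thm:Nonparametric}; the price is a relaxation to the doubled constant $2\cc$, which is precisely why the lemma is stated with $\kappa_{2\cc}$ rather than $\kappa_\cc$. I would choose the dichotomy threshold to be exactly $\cc\|\delta_T\|_1$ so that the first branch lands in the RE($2\cc$) cone while the second collapses to a pure $c_s^2/\lambda$ bound, and I would track $1/(1-1/c)=c/(c-1)$ and verify the identity $(1+\frac1{2\cc})\frac{2c}{c-1}=(2+\frac1\cc)\frac{c}{c-1}$ carefully, since that is where factor and sign slips would occur. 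In the parametric case $c_s=0$ only the first branch is active and the second term vanishes, matching the statement.
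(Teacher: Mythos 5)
Your proof is correct and follows essentially the same route as the paper's: both extract the relaxed cone inequality $\|\delta_{T^c}\|_1\leqslant \cc\|\delta_T\|_1+\frac{c}{c-1}\frac{n}{\lambda}c_s^2$ from (\ref{keykey}) via the quadratic bound $\|\delta\|_{2,n}(2c_s-\|\delta\|_{2,n})\leqslant c_s^2$, then split into a branch handled by RE($2\cc$) together with Theorem \ref{Thm:Nonparametric} and a complementary branch collapsing to the pure $c_s^2$ term, finally summing the two bounds. The only difference is cosmetic: you dichotomize on whether $\frac{c}{c-1}\frac{nc_s^2}{\lambda}\leqslant\cc\|\delta_T\|_1$, whereas the paper dichotomizes directly on $\|\delta_{T^c}\|_1\leqslant 2\cc\|\delta_T\|_1$; both splits are exhaustive and yield identical branch bounds.
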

\begin{proof}
First, assume $\|
\delta_{T^c}\|_1 \leqslant 2\cc \| \delta_T\|_1.$ In this
case, by definition of the restricted eigenvalue, we have
$$ \|\delta\|_1 \leqslant (1+2\cc) \|\delta_T\|_1 \leqslant (1+2\cc)\sqrt{s}\|\delta\|_{2,n}/\kappa_{2\cc} $$
and the result follows by applying the first bound to
$\|\delta\|_{2,n}$  since $\cc > 1$.

On the other hand, consider the case that $\|\delta_{T^c}\|_1 > 2\cc \|\delta_T\|_1$ which would already imply $\|\delta\|_{2,n} \leqslant 2c_s$. Moreover, the relation (\ref{keykey}) implies that
$$\begin{array}{rcl}
 \|\delta_{T^c}\|_1 & \leqslant & \cc \|\delta_T\|_1 + \frac{c}{c-1} \frac{n}{\lambda}\|\delta\|_{2,n}(2c_s - \|\delta\|_{2,n})\\
 & \leqslant & \cc \|\delta_T\|_1 + \frac{c}{c-1}\frac{n}{\lambda} c_s^2\\
& \leqslant & \frac{1}{2} \| \delta_{T^c}\|_1 + \frac{c}{c-1}\frac{n}{\lambda} c_s^2.\\
\end{array}
$$

 Thus,
$$\| \delta\|_1 \leqslant \(1 + \frac{1}{2\cc}\) \| \delta_{T^c}\|_1 \leqslant \(1 + \frac{1}{2\cc}\)\frac{2c}{c-1}\frac{n}{\lambda} c_s^2. $$

The result follows by adding the bounds on each case and invoking Theorem \ref{Thm:Nonparametric} to bound $\|\delta\|_{2,n}$.
\end{proof}


\begin{proof}[Theorem \ref{Cor:2StepNonparametric}]
Let $\widetilde \delta := \widetilde \beta - \beta_0$. By definition of the Post-LASSO estimator, it follows that $\widehat Q(\widetilde \beta) \leqslant \widehat Q(\widehat \beta)$ and $\widehat Q(\widetilde \beta) \leqslant \widehat Q(\beta_{0\widehat T})$. Thus,
$$
\widehat Q (\widetilde \beta) - \widehat Q(\beta_0) \leqslant  \(  \widehat Q (\widehat \beta) - \widehat Q(\beta_0)  \)  \wedge  \( \widehat
Q(\beta_{0\widehat T}) - \widehat Q(\beta_0) \)  =: B_n \wedge C_n.$$


The least squares criterion function satisfies
$$\begin{array}{rcl} | \widehat Q(\widetilde \beta) - \widehat Q(\beta_0) - \|\widetilde \delta\|^2_{2,n} | & \leqslant & |S'\widetilde \delta| +2c_s\|\widetilde \delta\|_{2,n} \\
& \leqslant & |S_T'\widetilde\delta| + |S_{T^c}'\widetilde\delta| + 2c_s\|\widetilde \delta\|_{2,n} \\
& \leqslant & \|S_T\|\|\widetilde \delta\| + \|S_{T^c}\|_{\infty} \|\widetilde \delta_{T^c} \|_1 + 2c_s\|\widetilde \delta\|_{2,n} \\
 &\leqslant & \|S_T\|\|\widetilde \delta\| + \|S_{T^c}\|_{\infty} \sqrt{\widehat m}\|\widetilde \delta \| + 2c_s\|\widetilde \delta\|_{2,n} \\
& \leqslant & \displaystyle  \|S_T\|\frac{\|\widetilde \delta\|_{2,n}}{\kappa(\widehat m)} + \|S_{T^c}\|_{\infty} \sqrt{\widehat m}\frac{\|\widetilde \delta\|_{2,n}}{\kappa(\widehat m)} + 2c_s\|\widetilde \delta\|_{2,n}. \\
\end{array}$$

Next, note that for any $j \in \{1,\ldots,p\}$ we have $\Ep[S_j^2] = 4\sigma^2/n$, so that $\Ep[\|S_T\|^2]\leqslant 4\sigma^2s/n$. Thus, by Chebyshev inequality, for any $\tilde \gamma>0$, there is a constant $A_{\tilde \gamma}$ such that $\|S_T\| \leqslant A_{\tilde \gamma} \sigma\sqrt{s/n}$ with probability at least  $1-\tilde \gamma$. Moreover, using Lemma \ref{Lemma:GaussianTail}, $\|S_{T^c}\|_\infty \leqslant A'_{\tilde \gamma} 2\sigma\sqrt{2 \log p \ /n}$ with probability at least $1-\tilde \gamma$ for some constant $A'_{\tilde \gamma}$. Define $A_{\gamma,n} := K_\gamma \sigma\sqrt{(s+\widehat m \log p )/n}$ so that $A_{\gamma,n} \geqslant \|S_T\| + \sqrt{\widehat m} \|S_{T^c}\|_\infty$ with probability at least $1-\gamma$ for some constant $K_\gamma<\infty$ independent of $n$ and $p$.

Combining these relations, with probability at least $1-\gamma$ we have
$$
\|\widetilde \delta\|_{2,n}^2 - A_{\gamma,n} \|\widetilde \delta\|_{2,n}/\kappa(\widehat m) - 2c_s\|\widetilde \delta\|_{2,n}\leqslant  B_n \wedge C_n,
$$
solving which we obtain:
\begin{equation}\label{postLASSOmain}\|\widetilde \delta\|_{2,n} \leqslant A_{\gamma,n}/\kappa(\widehat m)+ 2c_s+ \sqrt{(B_n)_+ \wedge (C_n)_+}.\end{equation}

Note that by the optimality of $\hat \beta$ in the LASSO problem, and letting $\widehat \delta = \widehat \beta - \beta_0$,
\begin{equation}\label{endarray} \begin{array}{rcl} &  \displaystyle B_n = \widehat Q(\hat \beta ) - \widehat Q(\beta_0 ) & \leqslant  \frac{\lambda}{n}(
\| \beta_0\|_{1} - \|\hat\beta\|_{1}) \leqslant \frac{\lambda}{n}( \| \widehat \delta_T \|_{1} - \|\widehat \delta_{T^c}\|_{1}).
\end{array}
\end{equation}
If $\|\widehat\delta_{T^c}\|_{1} > \cc \|\widehat \delta_{T}\|_{1}$, we have $ \displaystyle\hat Q(\hat \beta ) - \hat Q(\beta_0 ) \leqslant 0$ since
$\cc \geqslant 1$. Otherwise, if $\|\widehat\delta_{T^c}\|_{1} \leqslant \cc \|\widehat \delta_{T}\|_{1}$, by RE($c$) we have
\begin{equation}\label{endarray2} \begin{array}{rcl}
& &\displaystyle\displaystyle B_n := \widehat Q(\hat \beta ) - \widehat Q(\beta_0 )  \leqslant \frac{\lambda}{n} \|\widehat \delta_{T}\|_{1} \leqslant \displaystyle \frac{\lambda}{n} \frac{\sqrt{s}\|\widehat \delta \|_{2,n}}{\kappa_\cc}. \\
\end{array}\end{equation}

The choice of $\lambda$ yields $\lambda \geqslant cn\|S\|_\infty$ with probability $1-\alpha$. Thus, by applying Theorem \ref{Thm:Nonparametric}, which requires $\lambda \geqslant cn\|S\|_\infty$, we can bound $\|\widehat\delta\|_{2,n}$.

Finally, with probability $1-\alpha-\gamma$ we have that (\ref{postLASSOmain}) and (\ref{endarray2}) with $\|\widehat\delta\|_{2,n}\leqslant (1+1/c)\lambda\sqrt{s}/n\kappa_\cc + 2c_s$ hold, and the result follows since if $T \subseteq \widehat T$ we have $C_n = 0$ so that $B_n\wedge C_n \leqslant 1\{T\not\subseteq \widehat T\} B_n$.
\end{proof}

\begin{proof}[Theorem \ref{theorem: sigma}]
Consider the case of Post-LASSO; the proof for LASSO is similar.   Consider the case
with $k=1$, i.e. when $\hat \sigma = \hat \sigma^k$ for $k=1$. Then we have
$$
\begin{array}{rl}
\displaystyle \left| \frac{\widehat Q(\widetilde \beta)}{\sigma^2} - \frac{\En[\epsilon_i^2]}{\sigma^2} \right| &  \displaystyle\leqslant \frac{\|\widetilde \beta - \beta_0\|^2_{2,n}}{\sigma^2}
+ \frac{\|S\|_{\infty} \| \widetilde \beta - \beta_0\|_1}{\sigma^2} + \\
& \displaystyle + \frac{2c_s\|\widetilde \beta - \beta_0\|_{2,n}}{\sigma^2} + \frac{2c_s\sqrt{\En[\epsilon_i^2]}}{\sigma^2} +\frac{c_s^2}{\sigma^2} = o_P(1).
\end{array}$$
since  $\|\widetilde \beta - \beta_0\|_{2,n} \lesssim_P \sigma \sqrt{(s/n) \log p}$ by Corollary \ref{corollary3:postrate} and by assumption on $\hat\sigma^0$,
$\|S\|_{\infty} \lesssim_P \sigma \sqrt{(1/n) \log p}$ by Lemma \ref{Lemma:GaussianTail},
 $\| \widetilde \beta - \beta_0\|_1 \leqslant \sqrt{\hat s} \ \|\widetilde \beta - \beta\|_2 \lesssim_P \sqrt{\hat s} \ \|\widetilde \beta - \beta\|_{2, n}$
by condition RSE, $\hat s\lesssim_P s$ by Corollary \ref{corollary2:sparsity} and $c_s \lesssim \sigma \sqrt{s/n}$  by condition ASM,
and $s \log p/n \to 0$ by assumption, and $\frac{\En[\epsilon_i^2]}{\sigma^2} - 1 \to_P 0$ by the Chebyshev inequality.  
Finally,
$ n/(n-\hat s) = 1 + o_P(1)$ since $\hat s\lesssim_P s$ by Corollary \ref{corollary2:sparsity} and $s \log p/n \to 0$.  The result
for $2 \leqslant k \leqslant I-1$ follows by induction.
\end{proof}

\section{Auxiliary Lemmas}
Recall that  $\|S/(2\sigma)\|_\infty = \max_{1 \leqslant j \leqslant p} |\En[x_{ij} g_i]|$, where $g_i$ are i.i.d. $N(0,1)$, for $i =1,...,n,$
conditional on $X =[x_1',...,x_n']'$, and $\En[x_{ij}^2] =1$ for each $j=1,...,p$, and
note that $ P( n \|S/(2\sigma) \|_{\infty} \geqslant \Lambda(1-\alpha |X)|X) = \alpha$ by definition.

\begin{lemma}\label{Lemma:GaussianTail} We have that for $t \geqslant 0$:
\begin{eqnarray*}
&& P( n \|S/(2\sigma) \|_{\infty} \geqslant t \sqrt{n} |X) \leqslant 2p (1- \Phi(t)) \leqslant  2p \frac{1}{t} \phi(t),\\
&& \Lambda(1-\alpha|X) \leqslant \sqrt{n} \Phi^{-1}(1-\alpha/2p) \leqslant \sqrt{2 n \log(2p/\alpha)}, \\
&&  P( n \|S/(2\sigma) \|_{\infty} \geqslant  \sqrt{2 n \log(2p/\alpha)}  |X) \leq \alpha.
\end{eqnarray*}
\end{lemma}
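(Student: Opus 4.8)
The plan is to exploit the fact that, conditional on $X$, the quantity $n\En[x_{ij}g_i] = \sum_{i=1}^n x_{ij}g_i$ is a linear combination of independent standard normals with fixed weights, hence exactly Gaussian. First I would note that, conditional on $X$, $\sum_{i=1}^n x_{ij}g_i \sim N(0, \sum_{i=1}^n x_{ij}^2)$, and since the normalization (\ref{Def:Normalization}) gives $\sum_{i=1}^n x_{ij}^2 = n\En[x_{ij}^2] = n$, the standardized variable $Z_j := \sqrt{n}\,\En[x_{ij}g_i]$ is exactly $N(0,1)$ for each $j = 1,\ldots,p$. This reduces everything to controlling the maximum of $p$ (correlated, but that is irrelevant for a union bound) standard normals.

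For the first display I would apply the union bound followed by the one-dimensional Gaussian tail: $P(n\|S/(2\sigma)\|_\infty \geqslant t\sqrt n \mid X) = P(\max_j |Z_j| \geqslant t \mid X) \leqslant \sum_{j=1}^p P(|Z_j|\geqslant t) = 2p(1-\Phi(t))$, and then the Mills-ratio bound $1-\Phi(t) \leqslant \phi(t)/t$, which follows from $\int_t^\infty \phi(u)\,du \leqslant \int_t^\infty (u/t)\phi(u)\,du = \phi(t)/t$ using $\phi'(u) = -u\phi(u)$.

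For the quantile bound I would invert the first inequality. Since $\Lambda(1-\alpha\mid X)$ is defined by $P(n\|S/(2\sigma)\|_\infty \geqslant \Lambda(1-\alpha\mid X)\mid X) = \alpha$, it suffices to exhibit a threshold $c$ whose conditional tail probability is at most $\alpha$; then $\Lambda(1-\alpha\mid X) \leqslant c$ by monotonicity of the tail. Taking $t = \Phi^{-1}(1-\alpha/2p)$ makes $2p(1-\Phi(t)) = \alpha$ exactly, so $c = \sqrt n\,\Phi^{-1}(1-\alpha/2p)$ works, giving the first inequality of the second display. For the second inequality I would use the sharper tail bound $1-\Phi(t) \leqslant \tfrac12 e^{-t^2/2}$ (valid for $t\geqslant 0$): setting $u = \Phi^{-1}(1-\alpha/2p)$ yields $\alpha/2p = 1-\Phi(u)\leqslant \tfrac12 e^{-u^2/2}$, whence $u \leqslant \sqrt{2\log(p/\alpha)} \leqslant \sqrt{2\log(2p/\alpha)}$. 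The third display then follows immediately by plugging $t = \sqrt{2\log(2p/\alpha)}$ into the first display: $2p(1-\Phi(t)) \leqslant 2p\cdot \tfrac12 e^{-t^2/2} = p\cdot(\alpha/2p) = \alpha/2 \leqslant \alpha$.

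The calculations are all elementary; the only points requiring genuine care are, first, verifying the two Gaussian tail inequalities — the Mills-ratio bound is immediate, while $1-\Phi(t)\leqslant \tfrac12 e^{-t^2/2}$ I would confirm by checking that $h(t) := \tfrac12 e^{-t^2/2} - (1-\Phi(t))$ satisfies $h(0)=0$, $h(t)\to 0$ as $t\to\infty$, and that $h'(t) = e^{-t^2/2}\bigl(\tfrac{1}{\sqrt{2\pi}} - \tfrac{t}{2}\bigr)$ changes sign exactly once (from $+$ to $-$), so that $h\geqslant 0$ throughout — and, second, the monotonicity/quantile argument, where one must use the continuity of the conditional law of $n\|S/(2\sigma)\|_\infty$ to guarantee that the quantile is no larger than any threshold at which the tail has already dropped to $\alpha$.
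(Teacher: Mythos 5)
Your proof is correct and follows essentially the same route as the paper: reduce $n\|S/(2\sigma)\|_\infty$ to the maximum of $p$ exact standard normals via the normalization $\En[x_{ij}^2]=1$, apply the union bound and a Gaussian tail inequality for the first display, and invert the tail bound to control the quantile $\Lambda(1-\alpha|X)$. The only (minor) divergence is that for the bound $\Phi^{-1}(1-\alpha/2p)\leqslant\sqrt{2\log(2p/\alpha)}$ you use the inequality $1-\Phi(t)\leqslant\tfrac12 e^{-t^2/2}$, which you verify carefully, whereas the paper compares against the solution of $2p\,\phi(t'')/t''=\alpha$ via the Mills ratio; your variant is, if anything, a bit cleaner since it avoids the implicit side condition needed there.
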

\begin{proof}
To establish the first claim, note that
 $\sqrt{n} \|S/2\sigma \|_{\infty} = \max_{1\leqslant j \leqslant p} |Z_j|$, where $Z_j= \sqrt{n} \En[x_{ij} g_i]$ are $N(0,1)$
 by $g_i$ i.i.d. $N(0,1)$  conditional on $X$ and by $\En[x_{ij}^2] =1$ for each $j=1,...,p$.
 Then the first claim follows by observing that for $z \geqslant 0$ by the union bound
 $ P ( \max_{1\leqslant j \leqslant p} |Z_j| > z) \leqslant p   P (|Z_j| > z) = 2p (1- \Phi(z))$
 and by
$
  (1- \Phi(z)) = \int_{z}^{\infty} \phi(u) du \leqslant  \int_{z}^{\infty} (u/z) \phi(u) dz \leqslant (1/z) \phi(z).
$ The second and third claim follow by noting that  $2p (1- \Phi(t')) = \alpha$ at $t' = \Phi^{-1}(1-\alpha/2p)$, and $2p \frac{1}{t''} \phi(t'') = \alpha$ at $t'' \leqslant \sqrt{2\log (2p/\alpha)}$, so that, in view of the first claim, $\Lambda(1-\alpha |X) \leqslant \sqrt{n} t' \leqslant \sqrt{n} t''$.
\end{proof}

~\\

\begin{lemma}[Sub-linearity of restricted sparse eigenvalues]
\label{Lemma:SparseEigenvalueIMP}For any integer $k \geqslant 0$ and constant $\ell \geqslant 1$ we have
$ \phi(\ceil{\ell k}) \leqslant  \lceil \ell \rceil \phi(k).$
\end{lemma}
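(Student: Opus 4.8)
The plan is to exploit the variational definition (\ref{Def:RSE2}) of $\phi(\cdot)$ as a maximal Rayleigh quotient and to split an extremal vector into $\ceil{\ell}$ pieces, each admissible for $\phi(k)$. Recall that $\|\delta\|_{2,n}^2 = \delta'\En[x_ix_i']\delta$, so $\phi(m)$ is the largest value of $\|\delta\|_{2,n}^2/\|\delta\|^2$ over $\delta \neq 0$ having at most $m$ nonzero entries outside $T$. Let $\delta$ attain the maximum in the problem defining $\phi(\ceil{\ell k})$.

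First I would record the elementary inequality $\ceil{\ell k} \leqslant \ceil{\ell}\, k$, which holds because $\ell k \leqslant \ceil{\ell}\, k$ and the right-hand side is an integer. Writing $P := \ceil{\ell}$, this lets me partition the (at most $\ceil{\ell k}$) indices in $T^c \cap \supp(\delta)$ into $P$ disjoint groups $G_1,\ldots,G_P$, each of cardinality at most $k$. Let $\delta^{[i]}$ denote the restriction of $\delta_{T^c}$ to $G_i$, so that $\delta_{T^c} = \sum_{i=1}^P \delta^{[i]}$ with disjoint supports.

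The key construction is to distribute the common block $\delta_T$ evenly across the pieces: set $v_i := \tfrac{1}{P}\delta_T + \delta^{[i]}$. Then $\sum_{i=1}^P v_i = \delta_T + \delta_{T^c} = \delta$, and each $v_i$ has at most $k$ nonzero entries outside $T$, hence is admissible for $\phi(k)$ and satisfies $\|v_i\|_{2,n} \leqslant \sqrt{\phi(k)}\,\|v_i\|$. By the triangle inequality for $\|\cdot\|_{2,n}$ followed by Cauchy--Schwarz over the $P$ terms,
$$\|\delta\|_{2,n} \leqslant \sum_{i=1}^P \|v_i\|_{2,n} \leqslant \sqrt{\phi(k)}\sum_{i=1}^P \|v_i\| \leqslant \sqrt{\phi(k)}\,\sqrt{P\sum_{i=1}^P\|v_i\|^2}.$$
Because the supports $G_i$ are disjoint, $\sum_{i=1}^P \|v_i\|^2 = \tfrac{1}{P}\|\delta_T\|^2 + \|\delta_{T^c}\|^2$, so squaring gives $\|\delta\|_{2,n}^2 \leqslant \phi(k)\big(\|\delta_T\|^2 + P\|\delta_{T^c}\|^2\big) \leqslant P\,\phi(k)\,\|\delta\|^2$, the last step using $P \geqslant 1$. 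Dividing by $\|\delta\|^2$ yields $\phi(\ceil{\ell k}) \leqslant \ceil{\ell}\,\phi(k)$.

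I expect the only delicate point to be the bookkeeping around the shared block $\delta_T$: naively reusing the full $\delta_T$ in every piece would overcount its norm by a factor $P$, and it is precisely the $\tfrac1P$ weighting combined with the Cauchy--Schwarz step that turns $\|\delta_T\|^2 + P\|\delta_{T^c}\|^2$ into the desired $P\|\delta\|^2$ rather than a larger constant. The boundary cases $k=0$ (where $\supp(\delta)\subseteq T$ and the claim is immediate since $\ceil{\ell}\geqslant 1$) and integer $\ell$ require no special treatment.
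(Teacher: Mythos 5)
Your proof is correct and takes essentially the same route as the paper's: both pick an extremal vector for $\phi(\lceil \ell k\rceil)$, split its $T^c$-part into $\lceil \ell \rceil$ disjoint blocks of size at most $k$, attach $\delta_T/\lceil \ell \rceil$ to each block, and conclude via the normalization identity $\sum_i \|v_i\|^2 = \tfrac{1}{\lceil \ell \rceil}\|\delta_T\|^2 + \|\delta_{T^c}\|^2 \leqslant \|\delta\|^2$. The only cosmetic difference is how the factor $\lceil \ell \rceil$ is extracted: you use the triangle inequality for the seminorm $\|\cdot\|_{2,n}$ followed by Cauchy--Schwarz, while the paper expands the quadratic form and bounds cross terms by positive semi-definiteness ($2|\alpha_i'W\alpha_j| \leqslant \alpha_i'W\alpha_i + \alpha_j'W\alpha_j$), which are equivalent manipulations.
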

\begin{proof}
Let $W := \En[x_ix_i']$ and $\bar \alpha$ be such that $\phi(\ceil{\ell k})= \bar \alpha' W \bar\alpha $, $\|\bar\alpha\|=1$. We can decompose the vector $\bar \alpha$ so that
$$\bar \alpha = \sum_{i=1}^{\ceil{\ell}} \alpha_i, \text{ with }
\sum_{i=1}^{\ceil{\ell}} \|\alpha_{iT^c}\|_0  = \|\bar
\alpha_{T^c}\|_0 \text{ and } \alpha_{iT} = \bar \alpha_T / \ceil{\ell},$$ where we can choose $\alpha_i$'s such that $\|\alpha_{iT^c}\|_0
\leqslant k$ for each $i=1,...,\ceil{\ell}$, since $\lceil \ell \rceil k \geqslant \ceil{\ell k}$. Note that the vectors $\alpha_i$'s have no overlapping support outside $T$. Since $W$ is
positive semi-definite, $\alpha_i'W\alpha_i +
\alpha_j'W\alpha_j \geqslant 2\left|\alpha_i'W\alpha_j\right|$  for any pair $(i, j)$.
Therefore
$$
\begin{array}{rcl}
& &  \phi(\ceil{\ell k}) =   \bar \alpha' W \bar\alpha \ = \  \displaystyle \sum_{i=1}^{\lceil \ell \rceil} \sum_{j= 1}^{\lceil \ell \rceil} \alpha_i'W\alpha_j  \\ &  & \leqslant  \displaystyle \ \sum_{i=1}^{\lceil \ell \rceil} \sum_{j= 1}^{\ceil{\ell}} \frac{\alpha_i'W\alpha_i +  \alpha_j'W\alpha_j}{2} =   \displaystyle\lceil \ell \rceil \sum_{i=1}^{\lceil \ell \rceil} \alpha_i'W\alpha_i \\
   &  & \leqslant   \displaystyle {\lceil \ell \rceil} \sum_{i=1}^{\lceil
\ell \rceil} \|\alpha_i\|^2 \phi(\|\alpha_{iT^c}\|_0)
 \leqslant  \displaystyle {\lceil \ell \rceil} \max_{i=1,\ldots,{\lceil
\ell \rceil}} \phi(\|\alpha_{iT^c}\|_0) \leqslant {\lceil \ell \rceil} \phi(k),
\end{array}
$$ where we used that $$\sum_{i=1}^{\lceil \ell \rceil} \|\alpha_i\|^2=\sum_{i=1}^{\lceil \ell \rceil} (\|\alpha_{iT}\|^2+\|\alpha_{iT^c}\|^2) = \frac{\|\bar\alpha_{T}\|^2}{\ceil{\ell}} + \sum_{i=1}^{\lceil \ell \rceil} \|\alpha_{iT^c}\|^2  \leqslant \|\bar\alpha\|^2 = 1.$$ \end{proof}

\begin{lemma}\label{Lemma:BoundKAPPA}
Let $\cc = (c+1)/(c-1)$ we have for any integer $m > 0$
$$ \kappa_\cc \geqslant \kappa(m) \left( 1 - \mmu{m}\cc \sqrt{\frac{s}{m}}\right).$$
\end{lemma}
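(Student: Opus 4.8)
The plan is to bound the restricted eigenvalue $\kappa_\cc$ from below by working with an arbitrary feasible vector for the program defining it. Fix $\delta$ with $\delta_T\neq 0$ and $\|\delta_{T^c}\|_1\leqslant \cc\|\delta_T\|_1$; the goal is to show $\sqrt{s}\|\delta\|_{2,n}/\|\delta_T\|_1\geqslant \kappa(m)(1-\mmu{m}\cc\sqrt{s/m})$, after which taking the infimum over such $\delta$ gives the lemma. Following \cite{BickelRitovTsybakov2009}, the device is to sort the coordinates of $\delta$ on $T^c$ in decreasing order of magnitude and partition them into consecutive blocks $T_1,T_2,\ldots$ of size $m$ (the last possibly smaller), with $T_1$ collecting the $m$ largest entries. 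Setting $T_{01}:=T\cup T_1$, I decompose $\delta=\delta_{T_{01}}+\sum_{k\geqslant 2}\delta_{T_k}$ and control the tail blocks against the head.

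First I would establish the block estimate. Because the coordinates are sorted, every entry of $T_k$ is at most the average entry of $T_{k-1}$, so $\|\delta_{T_k}\|\leqslant \|\delta_{T_{k-1}}\|_1/\sqrt{m}$ for $k\geqslant 2$. Summing telescopes the $\ell_1$ masses: $\sum_{k\geqslant 2}\|\delta_{T_k}\|\leqslant \|\delta_{T^c}\|_1/\sqrt{m}$. Applying the cone condition and then Cauchy--Schwarz ($\|\delta_T\|_1\leqslant \sqrt{s}\|\delta_T\|$), together with $\|\delta_T\|\leqslant \|\delta_{T_{01}}\|$, converts this into $\sum_{k\geqslant 2}\|\delta_{T_k}\|\leqslant \cc\sqrt{s/m}\,\|\delta_{T_{01}}\|$.

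Next I would invoke the triangle inequality for the prediction seminorm, $\|\delta\|_{2,n}\geqslant \|\delta_{T_{01}}\|_{2,n}-\sum_{k\geqslant 2}\|\delta_{T_k}\|_{2,n}$, and bound the pieces via the sparse eigenvalues. The head $\delta_{T_{01}}$ has at most $m$ nonzero coordinates outside $T$, so $\|\delta_{T_{01}}\|_{2,n}\geqslant \kappa(m)\|\delta_{T_{01}}\|$; each tail block is supported on at most $m$ coordinates of $T^c$, so $\|\delta_{T_k}\|_{2,n}\leqslant \sqrt{\phi(m)}\|\delta_{T_k}\|$. Combining these with the block estimate and recognizing $\sqrt{\phi(m)}/\kappa(m)=\mmu{m}$ yields $\|\delta\|_{2,n}\geqslant \kappa(m)\|\delta_{T_{01}}\|(1-\mmu{m}\cc\sqrt{s/m})$. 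Finally, $\|\delta_{T_{01}}\|\geqslant \|\delta_T\|\geqslant \|\delta_T\|_1/\sqrt{s}$ gives the desired lower bound on $\sqrt{s}\|\delta\|_{2,n}/\|\delta_T\|_1$, and the infimum completes the argument. When $1-\mmu{m}\cc\sqrt{s/m}$ is negative the inequality is vacuous since $\kappa_\cc\geqslant 0$.

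The only delicate point is the block/telescoping bookkeeping: one must sort on $T^c$ alone, fold $T_1$ into the head so that exactly one extra block of $m$ coordinates is handled by $\kappa(m)$ rather than $\phi(m)$, and apply the averaging inequality with blocks shifted by one index so the $\ell_1$ masses telescope to the full $\|\delta_{T^c}\|_1$. The remaining steps are routine applications of Cauchy--Schwarz and the definitions of $\kappa(m)$, $\phi(m)$, and $\mmu{m}$.
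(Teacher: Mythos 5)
Your proposal is correct and follows essentially the same argument as the paper's proof: both use the Bickel--Ritov--Tsybakov block decomposition of $T^c$ into sorted blocks of size $m$, the telescoping estimate $\|\delta_{T_{k+1}}\|\leqslant\|\delta_{T_k}\|_1/\sqrt{m}$, the triangle inequality in the prediction norm with $\kappa(m)$ applied to the head $T\cup T_1$ and $\sqrt{\phi(m)}$ to the tail blocks, and the cone condition plus Cauchy--Schwarz to close the bound. Your explicit remark that the inequality is vacuous when $1-\mmu{m}\cc\sqrt{s/m}<0$ is a small point of care that the paper leaves implicit, but it does not change the route.
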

\begin{proof}
We follow the proof in \cite{BickelRitovTsybakov2009}. Pick an arbitrary vector $\delta$ such that $\|\delta_{T^c}\|_1\leqslant \cc \|\delta_T\|_1$. Let $T^1$ denote the $m$ largest components of $\delta_{T^c}$. Moreover, let $T^c = \cup_{k=1}^{K} T^k$ where $ K = \ceil{(p-s)/m}$, $|T^k|\leqslant m$ and $T^k$ corresponds to the $m$ largest components of $\delta$ outside $T\cup (\cup_{d=1}^{k-1} T^d)$.

We have
$$ \begin{array}{rl}
 \|\delta\|_{2,n} \geqslant \|\delta_{T\cup T^1}\|_{2,n} -  \|\delta_{(T\cup T^1)^c}\|_{2,n} & \displaystyle  \geqslant \kappa(m)\|\delta_{T\cup T^1}\| -  \sum_{k=2}^{K} \|\delta_{T^k}\|_{2,n} \\
 & \displaystyle  \geqslant \kappa(m)\|\delta_{T\cup T^1}\| - \sqrt{\phi(m)} \sum_{k=2}^{K} \|\delta_{T^k}\|.\end{array}$$

Next note that
$$ \|\delta_{T^{k+1}}\| \leqslant \|\delta_{T^k}\|_1/\sqrt{m}.$$
Indeed, consider the problem $\max \{\|v\|/\|u\|_1 : v, u \in \RR^m,  \max_i |v_i| \leqslant \min_i |u_i|\}$. Given a $v$ and $u$ we can always increase the objective function by using $\tilde v = \max_i|v_i| (1,\ldots,1)'$ and $\tilde u'=\min_i|u_i|(1,\ldots,1)'$ instead. Thus, the maximum is achieved at $v^*=u^*=(1,\ldots,1)'$, yielding $1/\sqrt{m}$.

Thus, by $\|\delta_{T^c}\|_1\leqslant \cc \|\delta_T\|_1$ and $|T|= s$
$$ \sum_{k=2}^K\| \delta_{T^k} \| \leqslant  \sum_{k=1}^{K-1}  \frac{\|\delta_{T^k}\|_1}{\sqrt{m}} \leqslant \frac{\|\delta_{T^c}\|_1}{\sqrt{m}} \leqslant \cc \|\delta_{T}\|\sqrt{\frac{s}{m}} \leqslant \cc \|\delta_{T\cup T^1 }\|\sqrt{\frac{s}{m}}. $$
Therefore, combining these relations with $\|\delta_{T\cup T^1}\|\geqslant \|\delta_T\| \geqslant \|\delta_T\|_1/\sqrt{s}$ we have
$$  \|\delta\|_{2,n}  \geqslant \frac{\|\delta_T\|_1}{\sqrt{s}} \kappa(m)\left( 1 -
\mmu{m} \cc \sqrt{s/m} \right) $$
which leads to
$$ \frac{\sqrt{s}\|\delta\|_{2,n}}{\|\delta_T\|_1} \geqslant \kappa(m)\left( 1 - \mmu{m}\cc \sqrt{s/m} \right).$$

\end{proof}

\bibliographystyle{spmpsci}

\bibliography{biblioLectureNotes}

\end{document}